\makeatletter
\@ifundefined{iflatexml}{\newif\iflatexml
	\latexmlfalse
}{}\makeatother
\iflatexml
	\documentclass[11pt]{article}
\else
	\documentclass[11pt]{scrartcl}
\fi
\input{dominik-arxiv-html.sty}

\usepackage[a4paper, total={16cm, 24cm}]{geometry}
\usepackage{amsmath,amssymb,amsthm}
\usepackage{microtype}
\usepackage{mathtools}
\usepackage{thmtools}
\usepackage{booktabs}
\usepackage{tabularx}
\newcolumntype{L}[1]{>{\raggedright\arraybackslash}p{#1}}
\newcolumntype{C}[1]{>{\centering\arraybackslash}p{#1}}
\usepackage{caption}
\iflatexml
\else
	\usepackage{wrapstuff}
\fi
\usepackage{tikz}
\usepackage{pgfplots}
\usepackage{todonotes}
\usepackage[ruled,vlined, noend]{algorithm2e}
\usepackage{algpseudocode}
\DontPrintSemicolon
\SetCommentSty{textnormal}

\usepackage{fancyhdr}
\setcapindent{0pt}
\setkomafont{captionlabel}{\sffamily\bfseries}

\KOMAoptions{bibliography=totoc}

\iflatexml
	\usepackage[round]{natbib}
\else
	\usepackage[backend=biber,style=authoryear,natbib=true,maxbibnames=99,maxcitenames=2,backref=true]{biblatex}
\fi

\usepackage[hypertexnames=false]{hyperref}
\iflatexml
	\hypersetup{
		colorlinks=true,
		citecolor=green!50!black,
		linkcolor=red!60!black,
		urlcolor=blue!90!black
	}
\else
	\hypersetup{
		pdfencoding=auto,
		psdextra,
		colorlinks=true,
		citecolor=green!50!black,
		linkcolor=red!60!black,
		urlcolor=blue!90!black
	}
\fi
\usepackage[nameinlink]{cleveref}
\Crefname{appsec}{Appendix}{Appendices}
\usepackage{subcaption}
\newtheorem{theorem}{Theorem}[section]
\newtheorem{lemma}[theorem]{Lemma}

\newtheorem{proposition}[theorem]{Proposition}
\newtheorem{fact}[theorem]{Fact}
\theoremstyle{definition}
\newtheorem{definition}[theorem]{Definition}
\newtheorem{example}[theorem]{Example}
\theoremstyle{remark}
\newtheorem{remark}[theorem]{Remark}
\DeclareMathOperator{\cone}{cone}
\newcommand{\R}{\mathbb{R}}
\newcommand{\Z}{\mathbb{Z}}

\definecolor{normalizDirectiveColor}{HTML}{4A5568}
\definecolor{normalizSupplyColor}{HTML}{2B6CB0}
\definecolor{normalizSizeColor}{HTML}{6AB832}
\definecolor{normalizUtilityColor}{HTML}{B7791F}

\newcommand{\normalizEntry}[2]{\ifnum#2=0 \textcolor{#1!28}{#2}\else\textcolor{#1}{\textbf{#2}}\fi}

\title{Core Existence in Approval-Based Committee Elections with up to Seven Voter Types}

\usepackage{authblk}

\author[1]{Patrick Becker}
\author[2]{Matthias Greger}
\author[2]{Dominik Peters}
\affil[1]{Technical University of Munich} 
\affil[2]{CNRS, LAMSADE, Universit\'e Paris Dauphine - PSL}
\date{\vspace{-1cm}}

\begin{document}
\maketitle

\begin{abstract}
	\noindent	
	In an approval-based committee election, the task is to select a committee of up to $k$ candidates from a set of $m$ candidates based on the preferences of $n$ voters, each of whom approves a subset of the candidates. 
    A central open question is whether there always exists a committee in the \emph{core}, a stability notion capturing proportional representation. 
    We prove core non-emptiness for all approval-based committee elections with at most seven voters. 
    The proof is based on affine monoid methods and shows that, for $n\le5$, every fractional committee admits a deterministic rounding to an integral committee that preserves each voter's utility up to floors. This no longer applies for larger $n$. However, for $n \in \{6,7\}$, we show that a Lindahl equilibrium can be adapted and rounded to obtain a core committee.
    For $n \le 5$, we further provide a polynomial-time algorithm for computing a committee in the core. 
    Our arguments work for the weighted voter setting, which implies core existence for instances with up to seven distinct approval sets. 
    We conclude by providing examples where our methods fail for more general models with additive valuations, non-unit candidate costs, or the related Droop core.
    \vspace{-5pt}
\end{abstract}

\setcounter{tocdepth}{1}
\iflatexml
	\tableofcontents
\else
	\makeatletter
	\begingroup
		\renewcommand*{\scr@tso@section@beforeskip}{.9em}
		\tableofcontents
	\endgroup
	\makeatother
\fi

\section{Introduction}

The core \citep{shapley1955markets} originates in cooperative game theory and has since become a central solution concept, as it captures a strong and natural notion of stability against coordinated deviations by groups of agents \citep[see, e.g.,][]{OsRu94a}. 
In the context of committee elections, \citet{ABC+16a} introduced the core as a \emph{proportional representation} criterion, guaranteeing that groups of voters are represented to an extent corresponding to the group size. 
Since then, a central open problem has been whether every approval-based committee election admits a committee in the core \citep{LaSk23a}.

In this setting, the input consists of a voter set $N$, a candidate set $C$, a target committee size $k \ge 0$, and an approval set $A_i \subseteq C$ for each voter $i \in N$.
A committee $W \subseteq C$ with $\lvert W\rvert \le k$ is in \emph{the core} if no non-empty group of voters $S \subseteq N$ can object.
A group $S$ has a \emph{valid objection} if it can form its own subcommittee $T \subseteq C$ of size
\[
\lvert T \rvert  \;\leq\; \frac{\lvert S\rvert}{n} \cdot k,
\] 
which every member of $S$ strictly prefers, i.e., $\lvert A_i \cap T \rvert > \lvert A_i \cap W \rvert$ for all $i \in S$. 
The core is commonly viewed as a strong proportionality requirement. 
It excludes committees that underrepresent large and sufficiently cohesive groups of voters, in the sense that such a group could use its \emph{proportional share} of the budget to deviate to a committee that every member of the group strictly prefers. 
For example, if 40\% of the voters exclusively approve the same sufficiently large set of candidates, then a core committee must allocate about 40\% of the seats to candidates from that set; otherwise, this group could block the committee by choosing its own representative subcommittee. 

While core existence remains open, several partial results have been obtained. 
Exact core committees are known to exist when the committee size is bounded, specifically for $k \le 8$, and also when the candidate set is small, namely for $m\le15$ \citep{peters2025fewseats}. 
Furthermore, every instance is known to admit committees satisfying constant-factor approximations of the core \citep{munagala2022approximate,jiang2020approximately,PeSk20a}. For specific subdomains, the core has been shown to be non-empty \citep{pierczynski2022core,BGP+24a}.
At the same time, an extensive literature has identified voting rules that guarantee relaxations of the core, most notably \emph{extended justified representation} (EJR) \citep{ABC+16a}, which restricts attention to objections $T$ that are unanimously approved by $S$.

In this paper, we prove that the core is non-empty whenever $n \le 7$. Our approach supports weighted voters, thus our results can equivalently be stated in terms of the number of \emph{voter types}, i.e., the number of different approval sets in the profile; hence we prove non-emptiness when there are at most seven voter types.
We also show that, when $n \le 5$, a core committee can be found in polynomial time, and we can moreover ensure that this committee is also Pareto-optimal. 
We believe that a small-$n$ existence result is particularly attractive because instances with only a few voters occur in practice: many collective decisions are made by small groups, such as juries, hiring committees, or boards of directors.
Moreover, our model also admits an alternative interpretation in which voters represent weighted decision criteria, where a decision maker needs to select a set of $k$ objects that should if possible possess certain binary features, with the decision maker assigning importance weights to the features.
Our results show that with at most seven criteria or features, there always exists a stable selection that balances the criteria proportionally to their weights.

\begin{wrapstuff}[r,width=0.33\textwidth,type=figure,lines=15,top=3]
	\centering
	\begin{tikzpicture}[yscale=0.45,xscale=1,voter/.style={anchor=south}]
		\foreach \i in {1,...,3}
			\node[voter] at (\i-0.5, -1.3) {$\i$};
		
		\draw[fill=orange!40] (0, 0) rectangle (2, 1);
		\draw[fill=orange!40] (0, 1) rectangle (2, 2);
		\draw[fill=orange!40] (0, 2) rectangle (2, 3);
		\draw[fill=orange!40] (0, 3) rectangle (2, 4);
		\draw[fill=orange!40] (0, 4) rectangle (2, 5);
		\draw[fill=orange!40] (0, 5) rectangle (2, 6);
		\draw[fill=orange!40] (0, 6) rectangle (2, 7);
		\draw[fill=orange!40] (0, 7) rectangle (2, 8);
		\draw[fill=orange!40] (0, 8) rectangle (2, 9);
		\draw[fill=orange!40] (0, 9) rectangle (2, 10);

		\draw[fill=violet!40] (0,10) rectangle (1, 11);  
		\draw[fill=violet!40] (1,10) rectangle (2, 11); 

		\draw[fill=teal!40] (2, 0) rectangle (3, 1);
		\draw[fill=teal!40] (2, 1) rectangle (3, 2);
		\draw[fill=teal!40] (2, 2) rectangle (3, 3);
		\draw[fill=teal!40] (2, 3) rectangle (3, 4);
		\draw[fill=teal!40] (2, 4) rectangle (3, 5);
		\draw[fill=teal!40] (2, 5) rectangle (3, 6);
		\draw[fill=teal!40] (2, 6) rectangle (3, 7);
		\draw[fill=teal!40] (2, 7) rectangle (3, 8);

		\node at ( 1, 0.5) {$c_{1}$};
		\node at ( 1, 1.5) {$c_{2}$};
		\node at ( 1, 2.5) {$c_{3}$};
		\node at ( 1, 3.5) {$c_{4}$};
		\node at ( 1, 4.5) {$c_{5}$};
		\node at ( 1, 5.5) {$c_{6}$};
		\node at ( 1, 6.5) {$c_{7}$};
		\node at ( 1, 7.5) {$c_{8}$};
		\node at ( 1, 8.5) {$c_{9}$};
		\node at ( 1, 9.5) {$c_{10}$};
		
		\node at ( 0.5, 10.5) {$c_{11}$};
		\node at ( 1.5, 10.5) {$c_{12}$};
								
		\node at ( 2.5, 0.5) {$c_{13}$};
		\node at ( 2.5, 1.5) {$c_{14}$};
		\node at ( 2.5, 2.5) {$c_{15}$};
		\node at ( 2.5, 3.5) {$c_{16}$};
		\node at ( 2.5, 4.5) {$c_{17}$};
		\node at ( 2.5, 5.5) {$c_{18}$};
		\node at ( 2.5, 6.5) {$c_{19}$};
		\node at ( 2.5, 7.5) {$c_{20}$};
	\end{tikzpicture}
	\caption{Example where PAV violates the core for three voters.}
\end{wrapstuff}
\label{fig:pav-core-counterexample}
Perhaps the most natural strategy for proving existence is to show that some voting rule selects committees in the core. This is what \citet{peters2025fewseats} did to show existence for $k \le 8$: he proved using linear programming that the classic Proportional Approval Voting (PAV) rule \citep{Thie95b} always selects a core committee in this case. However, this method does not work for the few-voter case since PAV fails the core even for $n = 3$ voters. 
An example is shown in Figure~\ref{fig:pav-core-counterexample}, where each voter approves the candidates appearing above the voter label. For example, voter 1 approves $\{c_1, \dots, c_{11}\}$. Let $k=18$. 
The PAV-winning committee is $W = \{c_1, \dots, c_{10}\} \cup \{c_{13}, \dots, c_{20}\}$, yielding a utility vector of $u=(10, 10, 8)$.
This committee is not in the core, since the coalition $S = \{1,2\}$ can choose the subcommittee $T =  \{c_1, \dots, c_{12}\}$ of size $\frac{\lvert S\rvert}{n}\cdot k = \frac{2}{3}\cdot 18 = 12$ that both voters strictly prefer (they approve $11$ candidates instead of $10$).
This shows that PAV violates the core even for only three voters.

Instead, our approach builds on a striking connection between \emph{fractional} and \emph{integral} committees for $n \le 5$ that might be of independent interest. A fractional committee can contain candidates fractionally, with the constraint that these fractions sum to at most $k$.
We show in \Cref{thm:monoidnormal} that, when there are up to five voter types, every \emph{integer} utility vector that can be achieved by a fractional committee can also be achieved by an integral committee. 
Combined with the existence of fractional core committees (\Cref{thm:fraccoreexists}), this allows us to round down the utility vector of a fractional core outcome and implement it by an integral committee, thereby obtaining a committee in the core (see \Cref{thm:roundfracutils}).

These results are obtained via the theory of \emph{affine monoids}.
To explain the approach, it is convenient to group candidates by their respective ``approval patterns'':
Each candidate is identified with the subset $R\subseteq N$ of voters approving it. Thus, there are at most $2^n$ \emph{candidate types}, denoted by $\mathcal R$. 
An instance can be specified by a \emph{supply vector} $c=(c_R)_{R \in \mathcal R}$, where $c_R$ denotes the number of available candidates of type $R$, together with a committee size $k$. 
An integral committee can be written as a vector $x=(x_R)_{R \in \mathcal R} \in \Z_{\geq0}^{\lvert \mathcal R \rvert}$ satisfying $0\le x_R\le c_R$ for all $R \in \mathcal R$ and $\sum_R x_R\le k$. 
Now, a utility vector $u\in \Z^n$ is \emph{integral-committee-feasible} if there exists an integral committee $x$ with $\sum_{R\ni i}x_R\ge u_i$ for all $i\in N$. In other words, each voter $i \in N$ approves at least $u_i$ candidates in $x$.
To obtain our theorem, we consider the set of instance--utility vector pairs:
\[
	M:=\{(c,k,u): u \text{ is integral-committee-feasible in }(c,k)\}.
\]
This is a subset of $\Z^{\lvert \mathcal R \rvert + 1 + n}$ that is closed under addition. Hence, it forms an \emph{affine monoid}. 
If we now consider $\operatorname{cone}(M)$, the real cone generated by $M$, we obtain the set of pairs of a fractional instance and a utility vector that is \emph{fractional-committee-feasible}.
Our main technical theorem (\Cref{thm:monoidnormal}) shows that for $n\le 5$,
\[
	M=\operatorname{cone}(M)\cap \Z^{\lvert \mathcal R \rvert + 1 + n}.
\]
This means that every integer utility vector that is fractional-committee-feasible is also integral-committee-feasible. 
In the language of affine monoids, $M$ is \emph{integrally closed} (and also \emph{normal} in our case) for $n \le 5$. Our normality results for $M$ are proven in \Cref{sec:coreexistence} and are based on the idea of decomposing instance--utility pairs $(c,k,u)$ into a ``base part'' $(c^0,k^0,u^0)$ and a ``residual part'' $(c^r,k^r,u^r)$:
\begin{align*}
	(c,k,u)=(c^0,k^0,u^0)+(c^r,k^r,u^r).
\end{align*}
This decomposition will be chosen such that $u^0$ is integral-committee-feasible in $(c^0,k^0)$, and subject to that condition, $(c^r,k^r,u^r)$ cannot be further reduced with respect to certain criteria. Using non-reducibility, we prove that the set of possible residuals $(c^r,k^r,u^r)$ is finite and well-characterized (see \Cref{thm:decomposition}). This enables us to check integral-committee feasibility for all of them and thus also for $(c,k,u)$.
Instead of using our decomposition argument, normality of $M$ can also be automatically verified using the \texttt{Normaliz} program \citep{Normaliz}.%
\footnote{\texttt{Normaliz} is also used in social choice theory to compute the frequency of voting paradoxes under the impartial anonymous culture distribution via Ehrhart polynomials \citep[see, e.g.,][]{BGS15a}.}
From normality and the existence of fractional committees in the fractional core, core non-emptiness for $n \le 5$ follows.

\begin{wrapstuff}[r,width=0.47\textwidth,type=figure,lines=9,top=2]
	\centering
		\begin{tikzpicture}[
			scale=0.75,
			voter/.style={circle,draw,fill=white}
			]
			\foreach \i/\x/\y in {1/0/1.4,2/-1.2/-0.7,3/1.2/-0.7,4/4.5/1.4,5/3.3/-0.7,6/5.7/-0.7}{
				\node[voter] (v\i) at (\x,\y) {$\i$};
			}
			
			\draw[very thick,black] (v1) -- (v2) -- (v3) -- (v1);
			\draw[very thick,black] (v4) -- (v5) -- (v6) -- (v4);
		\end{tikzpicture}
		\caption{Violation of normality of $M$ for $n = 6$ and $k = 3$, where the utility vector $u = (1,1,1,1,1,1)$ cannot be achieved.}
		\label{fig:two-triangles-intro}
\end{wrapstuff}

Unfortunately, normality of $M$ fails for six voters. This is illustrated by the two-triangle instance in \Cref{fig:two-triangles-intro}, where the voters are vertices approving their incident edges that represent the candidate types. The committee size is $k = 3$.
The integer utility vector $u=(1,1,1,1,1,1)$ is fractionally-committee-feasible as it is realized by the fractional committee $x$ with $x_R=\frac{1}{2}$ for all edges $R$. However, $u$ is not integral-committee-feasible: every committee of size at most $3$ has total utility at most $6$, while realizing $u$ would require total utility at least $6$. Thus, such a committee would have size $3$, and all six voters would need to receive utility exactly $1$. But any committee of three candidates contains two candidates from one of the two triangles, so some voter receives utility $2$, a contradiction.

\begin{wrapstuff}[r,width=0.47\textwidth,type=figure,lines=9,top=6]
	\centering
	\begin{tikzpicture}[
		scale=0.75,
		voter/.style={circle,draw,fill=white}
		]
		\foreach \i/\x/\y in {1/0/1.4,2/-1.2/-0.7,3/1.2/-0.7,4/4.5/1.4,5/3.3/-0.7,6/5.7/-0.7}{
			\node[voter] (v\i) at (\x,\y) {$\i$};
		}
		
		\draw[thick,black] (v1) -- (v2);
		\draw[thick,black] (v3) -- (v1);
		\draw[thick,black] (v4) -- (v6);
		\draw[ultra thick,green!80!black] (v2) -- (v3);
		\draw[ultra thick,green!80!black] (v4) -- (v5) -- (v6);
	\end{tikzpicture}
	\caption{The utility vector $u' = (0,1,1,1,1,1)$ is achieved by the thick green committee.}
	\label{fig:two-triangles-shorted}
\end{wrapstuff}

However, in the two-triangle example, an integral core committee does exist for a specific reason that applies to \emph{all} counterexamples for $n = 6$, which will allow us to deduce core existence for $n = 6$. In particular, the fractional committee $x$ with $x_R=\frac{1}{2}$ in fact forms a \emph{Lindahl equilibrium} \citep{Fole70a,KrPe25b}, which means it is in the fractional core and even in (a version of) the strict core that requires that only one of the deviators improves strictly, with other deviators improving weakly. Now, take an arbitrary voter (let's say the first one) and short its utility by 1, giving us a target utility vector $u' = (0,1,1,1,1,1)$. Note that $u'$ is integral-committee-feasible (\Cref{fig:two-triangles-shorted}) by a committee that we argue must be in the core. First, note that the voter who we shorted cannot deviate on its own, because a single voter cannot propose a candidate because $\frac{k}{n} = \frac{3}{6} < 1$. In addition, the voter also cannot deviate together with other voters; if such a deviation existed, it would induce utilities $(1,2,\dots,2)$ for the deviators, which would witness a strict core violation for the original Lindahl equilibrium, a contradiction.

Using a computer enumeration, we find that for $n = 6$, any counterexample to normality induces only one of 50 different residual parts $(c^r,k^r,u^r)$ (up to relabeling voters), with the two-triangle instance being one of those 50. Each of those 50 ``minimal holes'' has the same structure that allows us to short one voter and thereby obtain an integral core committee (namely: in the fractional Lindahl equilibrium every voter receives a utility that is an integer; a single voter cannot deviate on its own; after shorting the utility vector it becomes integral-committee-feasible). In \Cref{sec:coreexistencen=6}, we then establish that if the residual part satisfies these properties, the original instance has an integral core committee, thereby proving core existence for $n = 6$.

For $n = 7$, our computer enumeration yields 54,985 holes (up to relabeling voters), of which at most 21,818 can actually occur after filtering out holes that could not occur when starting at a Lindahl equilibrium. Of these, 21,520 holes can be dealt with using the same argument we just discussed for $n = 6$. This leaves 298 other holes that require a new argument. The part that breaks is that on those holes, the voters obtain a non-integral utility in the Lindahl equilibrium. However, we are able to design a sufficient condition that still allows us to deduce core existence on those 298 holes. Intuitively, the condition says that while utilities are not integral ($u_i(x) > \lfloor u_i(x) \rfloor$), the cumulative overshoot $\sum_{i \in N} \bigl(u_i(x) - \lfloor u_i(x) \rfloor\bigr)$ is not too big and, more importantly, is not too valuable at the Lindahl equilibrium prices; see \Cref{thm:reduceLindahlutilsfor7}. Using this sufficient condition, we establish that the core exists for $n = 7$ (\Cref{sec:core-existence-7}). The same technique could potentially extend to larger $n$, but there is a counterexample for $n = 13$ where new types of holes appear (\Cref{fig:pg23}). Enumerating all minimal holes quickly becomes intractable for larger $n$, though.

In \Cref{sec:corecomputation}, we derive a polynomial-time algorithm for computing a committee in the core for $n \le 5$. Our algorithm additionally ensures that the output committee is Pareto optimal. In the first step, it computes an approximate fractional committee in the fractional core by approximately optimizing a convex program recently proposed by \citet{KrPe25b}. We show that the obtained utility vector can be rounded down and later greedily increased so that the resulting integer utility vector is fractional-committee-feasible and Pareto optimal among such vectors. In a second step, the algorithm finds an integral committee that achieves these utilities, either via an integer linear program with a bounded number of variables or via a series of linear programs. Both steps heavily rely on normality of the respective monoids for $n \le 5$.

In \Cref{sec:counterex}, we show the limits of our techniques for related models. In detail, the analogous utility-rounding step fails (even for $n = 3$) when defining the core based on the more demanding Droop quota, as well as when moving to more general additive utilities or to a participatory budgeting setting without the unit cost assumption.

\section{Preliminaries}
\label{sec:prelims}

Let $N$ be a finite set of $n$ voters. The model of approval-based committee elections is typically defined using a finite set $C$ of candidates over which the voters express approvals. However, for our purposes it is convenient to group candidates by \emph{type}. 
The type of a candidate is the set of voters approving it. 
Thus, the set of all possible non-empty candidate types is $\mathcal R \coloneqq 2^N \setminus \{\emptyset\}$.
An \emph{instance} of the approval-based committee problem is given by a triple $(c,k,b)$, where 
\begin{itemize}
	\item $c \in \Z_{\ge 0}^{\lvert\mathcal R \rvert}$ is the \emph{supply vector}, specifying how many candidates of each type are available,
	\item $k \in \Z_{\ge 0}$ is the \emph{committee size}, and
	\item $b \in \mathbb{Q}_{> 0}^n$ with $\sum_{i \in N} b_i = k$ is the vector of \emph{voter budgets}.
\end{itemize}
Voter budgets function as weights. The standard committee election model typically assumes equal budgets $b_i = \frac{k}{n}$ for all $i \in N$. We allow arbitrary positive rational budgets, since this will enable us to merge voters with identical approval sets into a single voter by summing their budgets; in the standard unweighted model, this gives a type budget equal to $\frac{k}{n}$ times the number of voters of that type. We write $b(S) = \sum_{i \in S}b_i$ for the overall budget of a coalition $S \subseteq N$. For every non-empty voter set $N$, the assumptions $b_i>0$ for all $i\in N$ and $b(N)=k$ imply $k>0$.
Also, given a supply vector $c$, we write $\mathcal R_c \coloneqq \{R \in \mathcal R : c_R > 0\}$ for the set of candidate types that are present in the instance.
For a candidate type set $\mathcal R_c$ and a voter $i\in N$, we write
\[
d_i \coloneq \bigl|\{R\in\mathcal R_c : i\in R\}\bigr|
\]
for the number of available candidate types containing voter $i$.

We define the set of \emph{integral committees affordable for coalition $S \subseteq N$} as
\begin{equation*}
	\mathcal{W}_S \coloneqq \left\{x \in \Z_{\ge 0}^{\lvert \mathcal R \rvert} \colon 
	\quad x_R \le c_R \text{ for all } R \in \mathcal R ,
	\quad \sum_{R \in \mathcal R} x_R \le  b(S)  \right\}.
\end{equation*}
Here, $x_R$ is the number of selected candidates of type $R$. 
We write $\mathcal W = \mathcal W_N$ for the set of all \emph{integral committees} respecting the committee size constraint $\sum_{R \in \mathcal R} x_R \le b(N) = k$.
The \emph{utility} that voter $i$ obtains from a committee $x$ is given by $u_i(x) \coloneqq \sum_{R \ni i} x_R$.
Observe that this is equivalent to approval utilities.
A vector $u = (u_i)_{i \in N} \in \R^n$ is called \emph{integral-committee-feasible} if there exists an integral committee $x \in \mathcal W$ such that
\begin{equation*}\label{equation:integral_feasibility_condition}
	\sum_{R \ni i} x_R \ge u_i \quad \text{for all } i \in N.
\end{equation*}
In other words, $u$ is integral-committee-feasible if there is an integral committee that gives every voter \emph{at least} the prescribed utility level $u_i$. 
Since these vectors are used only as lower bounds, we allow negative utility levels for notational convenience.

We also need a fractional relaxation of committees. 
We define the set of \emph{fractional committees affordable for coalition $S \subseteq N$} as
\begin{equation*}
	\mathcal{P}_S \coloneqq \left\{x \in \R_{\ge 0}^{\lvert \mathcal R \rvert} \colon 
	\quad x_R \le c_R \text{ for all } R \in \mathcal R ,
	\quad \sum_{R \in \mathcal R} x_R \le b(S) \right\}.
\end{equation*}
Again, $\mathcal{P} \coloneqq \mathcal{P}_N$. Thus, $x_R$ may now be interpreted as selecting a fractional amount of candidates of type $R$. 
Voters' utilities extend linearly to fractional committees, i.e., still $u_i(x) \coloneqq \sum_{R \ni i} x_R$.
Analogous to the integral case, a vector $u = (u_i)_{i \in N} \in \R^n$ is \emph{fractional-committee-feasible} if there exists a fractional committee $x \in \mathcal P$ that guarantees every voter at least utility $u_i$.

\subsection{The core}

We now introduce the central solution concept studied in this paper.

\begin{definition}[Core]
	An integral committee $x \in \mathcal W$ is in the \emph{core} of an instance $(c,k,b)$ if there does not exist a non-empty \emph{blocking coalition} $S \subseteq N$ and a committee $y \in \mathcal W_S$ such that
	\[
		u_i(y) > u_i(x) \qquad \text{for all } i \in S.
	\]
\end{definition}

Thus, a committee is in the core if no coalition can deviate to a committee that it can afford and that makes all of its members strictly better off.%
\footnote{This definition corresponds to what is usually called the ``weak core'' in cooperative game theory, but ``core'' is the standard terminology for the committee election setting. A stronger fairness notion is the \emph{strict core}, where all coalition members need to weakly improve and at least one coalition member needs to strictly improve their utility \citep[Footnote 6]{aziz2018proportionalrepresentationapprovalbasedcommittee}. However, the strict core can be empty even in tiny examples, for example by taking 4 voters, $k = 2$, and $\mathcal R_c = \{\{1\},\{2\},\{3\},\{4\}\}$.}

Alongside the core, we will also consider its fractional analog.

\begin{definition}[Fractional core]
	A fractional committee $x \in \mathcal P$ is in the \emph{fractional core} of an instance $(c,k,b)$ if there does not exist a non-empty \emph{blocking coalition} $S \subseteq N$ and fractional committee $y \in \mathcal P_S$ such that
	\[
		u_i(y) > u_i(x) \qquad \text{for all } i \in S.
	\]
\end{definition}

In contrast to the integral core, the fractional core is known to always exist in approval-based committee elections, thanks to the concept of \emph{Lindahl equilibrium} \citep{Fole70a}, a virtual market equilibrium for public goods. \citet{KrPe25b} considered a model with linear utilities (which includes our 0/1 utility case) and proposed a convex program whose optimum solution is a special Lindahl equilibrium and which therefore lies in the fractional core. Notably, their computed Lindahl equilibrium does not always exhaust the entire budget.
An alternative way to derive the same existence result is from the work of \citet{Fole70a} who proved existence of a Lindahl equilibrium via fixed-point theorems in a very general model. However, his result applies to public-goods economies with continuous, concave, and \emph{strictly} monotone utilities.
Our approval-based setting allows zero valuations and hard supply caps on candidate types, neither of which is directly covered by Foley's theorem.
A way around this is to approximate the original utilities by strictly monotone concave utilities \citep{FGM16a,munagala2022approximate} for the case $\sum_{R\in\mathcal R} c_R > k$; otherwise one can just set $x_R=c_R$ for all $R \in \mathcal{R}$. By passing to a limit, one can deduce that Lindahl equilibria (and thus also the fractional core) exist in our model. We explain the details of this argument in \Cref{appendix:omitted_proofs}.

In particular, that proof technique yields a Lindahl equilibrium that exhausts the entire budget, which is important for our reasoning in \Cref{sec:coreexistencen=6}.

\begin{restatable}{theorem}{fracCoreExists}\label{thm:fraccoreexists}
    For every instance $(c,k,b)$, the fractional core is non-empty. If $\sum_{R\in\mathcal R} c_R > k$, a Lindahl equilibrium $(x,p)$ with $x \in \mathcal{P}$ and $\sum_{R \in \mathcal{R}}x_R=k$ exists.
\end{restatable}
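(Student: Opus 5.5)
We follow the approximation route sketched above. If $\sum_R c_R \le k$, take $x_R = c_R$ for all $R$. This committee gives every voter the maximum possible utility, since any $y \in \mathcal P_S$ satisfies $y \le c$ componentwise, so $u_i(y) \le u_i(x)$ for all $i$. No coalition can block, and the fractional core is non-empty.

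So assume $\sum_R c_R > k$. We first fix the notion of Lindahl equilibrium we aim for. It is a pair $(x,p)$ with $x \in \mathcal P$ and personalized prices $p_{i,R} \ge 0$ such that three conditions hold.
\begin{enumerate}
\item For each $i$, $x$ maximizes $u_i$ over $\{y : 0 \le y \le c,\ \sum_R p_{i,R} y_R \le b_i\}$.
\item Profit maximization holds: $x$ maximizes $\sum_R (\sum_i p_{i,R} - 1) y_R$ over $0 \le y \le c$.
\end{enumerate}
Standard arguments then show that such an $x$ lies in the fractional core. Suppose $S$ blocks with $y \in \mathcal P_S$. By individual optimality, each $i \in S$ must have $\sum_R p_{i,R} y_R > b_i$. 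Summing over $i \in S$ gives $\sum_R q_R y_R > b(S) \ge \sum_R y_R$, where $q_R = \sum_{i \in S} p_{i,R} \le \sum_{i\in N} p_{i,R}$. Combined with profit maximization, this yields a contradiction. The contradiction is cleanest if we use that $\sum_R (\sum_i p_{i,R}-1) x_R = 0$ and that $R$ with $\sum_i p_{i,R} > 1$ have $x_R = c_R$. These facts let us bound the extra value of $y$ over $x$ and reach the contradiction.

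To obtain existence, we perturb the utilities to $u_i^\varepsilon(y) = \sum_{R \ni i} y_R + \varepsilon \sum_R \log(1+y_R)$, which is continuous, concave and strictly monotone. The supply caps are the other obstacle to Foley's theorem. We handle them by modeling the production set as $\{y : 0 \le y \le c,\ \sum_R y_R \le \text{input}\}$, which is convex, compact and contains $0$, and by applying a Foley-type fixed-point existence theorem with a closed convex production set. Strict monotonicity, together with $\sum_R c_R > k$, forces the equilibrium $x^\varepsilon$ to spend the whole budget, so $\sum_R x^\varepsilon_R = k$. Prices can be normalized so that each $p^\varepsilon_{i,R}$ is bounded. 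In particular, $\sum_i p^\varepsilon_{i,R} \le 1$ whenever $x^\varepsilon_R < c_R$, and we need a uniform bound on the others, which we get from the budget identity $\sum_R p^\varepsilon_{i,R} x^\varepsilon_R = b_i$ combined with profit maximization.

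We then let $\varepsilon \to 0$ along a subsequence with $x^\varepsilon \to x$ and $p^\varepsilon \to p$ by compactness. The conditions $\sum_R x_R = k$, feasibility and profit maximization pass to the limit by continuity. Individual optimality for the limit utilities follows from Berge-type upper hemicontinuity, using that the perturbation vanishes uniformly on the compact feasible set. The main obstacle is that the budget sets may fail to be lower hemicontinuous when a voter's prices vanish on all goods she values. We would try first to handle this by a cheaper-point argument: because $b_i > 0$, the zero bundle is strictly cheaper, so the budget correspondence is continuous. Bounding the limit prices is the other delicate step. The limit $(x,p)$ is a Lindahl equilibrium exhausting $k$, and by the argument above it lies in the fractional core.
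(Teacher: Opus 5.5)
There is a genuine gap in how you handle the supply caps. The trivial case and the limiting scheme are fine: compactness, the strictly cheaper zero bundle to get lower hemicontinuity, and passing the conditions to the limit are close to what the paper does.

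\textbf{Where it fails.} Your production set $\{(y,z): 0\le y\le c,\ \sum_R y_R\le z\}$ is not a cone, so capped types can earn producer rents. Accordingly, your profit condition only forces $\sum_i p_{iR}\ge 1$ when $x_R=c_R$. Condition~3 of \Cref{def:Lindahleq} instead requires $\sum_i p_{iR}\le 1$ for every $R$. The statement asserts that condition, and it is what makes the summation in the core argument work and is used again in \Cref{thm:reduceLindahlutils,thm:reduceLindahlutilsfor7}.

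The two notions coincide exactly when profits vanish, and your argument takes that for granted. It is true that $\langle p_i,x^\varepsilon\rangle=b_i$ for all $i$ together with $\sum_R x^\varepsilon_R=k$ would force zero profit. But the standard existence theorems for non-conical production sets preserve Walras' law by paying rents out as profit shares $\theta_i$. They therefore only give budgets $b_i+\theta_i\pi$, with $\pi$ possibly positive.

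\textbf{A concrete instance.} Take $n=2$, types $\{1,2\}$ with supply $1$ and $\{1\}$ with supply $5$, $k=2$, $b=(1,1)$, and shares $(\tfrac12,\tfrac12)$. Set $x_{\{1,2\}}=x_{\{1\}}=1$, $p_{1,\{1,2\}}=p_{1,\{1\}}=\tfrac34$, $p_{2,\{1,2\}}=\tfrac54$, $p_{2,\{1\}}=\tfrac14$, and all other prices $0$. For every $\varepsilon\le\tfrac12$ this is an equilibrium of your perturbed capped economy. It has profit $1$, budgets $\tfrac32$, and $\sum_i p_{i,\{1,2\}}=2$. Its $\varepsilon\to 0$ limit violates both affordability at budgets $b_i$ and condition~3.

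\textbf{Consequence.} Two identities you rely on are unsupported. The first is the budget identity you use to bound prices. The second is the zero-profit identity $\sum_R(\sum_i p_{iR}-1)x_R=0$, which you invoke without proof in the core argument. The latter is precisely the missing property. Under your profit condition it is equivalent to $\sum_i p_{iR}\le 1$ for all $R$ with $c_R>0$. Without it, the step $\sum_R(\sum_{i\in S}p_{iR})y_R\le\sum_R y_R$ fails.

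\textbf{The missing idea.} The paper keeps constant returns and moves the caps into the preferences:
\[
u_i^\varepsilon(x)=\sum_{R\ni i}\min\{x_R,c_R\}+\varepsilon\sum_R x_R
\]
on all of $\R_{\ge0}^{|\mathcal R|}$.
\begin{itemize}
\item These utilities are continuous, concave and strictly increasing, and the technology is linear. So Foley's theorem applies directly.
\item Profit maximization gives $\sum_i p^\varepsilon_{iR}\le 1$, with equality where $x^\varepsilon_R>0$. Hence prices lie in $[0,1]$, and budget exhaustion gives $\sum_R x^\varepsilon_R=k$.
\item The caps are recovered by Pareto efficiency. If $x^\varepsilon_{R'}>c_{R'}$, then $\sum_R c_R>k$ yields some $R''$ with $x^\varepsilon_{R''}<c_{R''}$. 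Shifting a little mass from $R'$ to $R''$ strictly helps the voters in $R''$ and hurts no one.
\end{itemize}
This shifting step needs a perturbation that is invariant under moving mass between types. Your $\varepsilon\sum_R\log(1+x_R)$ decreases under such a shift when $x_{R''}>x_{R'}$, which would hurt voters outside $R''$; use $\varepsilon\sum_R x_R$ instead. With these changes your limit argument goes through and yields all three conditions of \Cref{def:Lindahleq}. Fractional core membership then follows from your one-line summation.
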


\subsection{From the fractional core to the core}\label{sec:fromfraccoretocore}

Suppose we have a fractional committee $x$ that lies in the fractional core. In this section, we prove 
that if an integral committee induces a utility vector of at least $\lfloor u(x)\rfloor$, then it is in the core.

\begin{theorem}\label{thm:roundfracutils}
	Let $(c,k,b)$ be an arbitrary instance and $x$ a committee in its fractional core. If the vector of rounded-down utilities $\lfloor u(x)\rfloor=(\lfloor u_i(x)\rfloor)_{i \in N}$ is integral-committee-feasible, then its (integral) witness $x' \in \mathcal W$ lies in the core.
\end{theorem}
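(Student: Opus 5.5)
We argue by contradiction. Suppose the integral witness $x' \in \mathcal W$ with $u_i(x') \ge \lfloor u_i(x)\rfloor$ for all $i \in N$ is not in the core. Then there is a non-empty coalition $S \subseteq N$ and an integral committee $y \in \mathcal W_S$ with $u_i(y) > u_i(x')$ for all $i \in S$. The goal is to turn $y$ into a fractional blocking of $x$, which contradicts the assumption that $x$ lies in the fractional core.

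\textbf{Step 1 (integrality gap).} Since $y$ is integral, $u_i(y) = \sum_{R \ni i} y_R$ is an integer for every $i$. Likewise $u_i(x')$ is an integer. Hence the strict inequality $u_i(y) > u_i(x')$ upgrades to
\[
u_i(y) \;\ge\; u_i(x') + 1 \;\ge\; \lfloor u_i(x)\rfloor + 1 \;>\; u_i(x) \qquad \text{for all } i \in S,
\]
where the last step uses $\lfloor t\rfloor + 1 > t$ for every real $t$.

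\textbf{Step 2 (feasibility as a fractional deviation).} Every integral committee is a fractional one, so $\mathcal W_S \subseteq \mathcal P_S$. Indeed, the constraints $0 \le y_R \le c_R$ and $\sum_R y_R \le b(S)$ are identical in both definitions; only the integrality requirement is dropped. Thus $y \in \mathcal P_S$, and voters' utilities on $y$ agree under both interpretations, since $u_i$ is the same linear map. Therefore $S$ together with $y$ is a blocking coalition against $x$ in the sense of the fractional core definition. This contradicts the choice of $x$ and proves the theorem.

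There is no real obstacle here. The only point that needs care is the chain of inequalities in Step 1, which relies on integrality of both $u_i(y)$ and $u_i(x')$ to turn a strict improvement over $x'$ into a strict improvement over the possibly fractional value $u_i(x)$. Note that $b(S)$ may be fractional, but this is harmless because the affordability constraint is the same inequality in $\mathcal W_S$ and $\mathcal P_S$.
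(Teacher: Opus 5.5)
Your proof is correct and matches the paper's argument: integrality upgrades the strict improvement over $x'$ to $u_i(y) \ge \lfloor u_i(x)\rfloor + 1 > u_i(x)$ for all $i \in S$, and $\mathcal W_S \subseteq \mathcal P_S$ makes $y$ a fractional blocking of $x$. The only cosmetic difference is that you invoke integrality of $u_i(x')$, while the paper compares $u_i(y)$ directly with the integer $\lfloor u_i(x)\rfloor$; both work.
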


\begin{proof}
	Let $u = (u_i(x))_{i \in N}$ be the utility vector yielded by a fractional core committee $x$, and let $u' = (\lfloor u_i(x) \rfloor)_{i \in N}$ be the integer utility vector obtained by rounding down those utilities. 
	Note that $u$ is fractional-committee-feasible by $x$, and $u' \le u$ is integral-committee-feasible via some integral committee $x' \in \mathcal W$ with $\sum_{R \ni i} x'_R \ge u'_i$ for all $i \in N$.
	
	We claim that $x'$ is in the core. 
	If not, there exists a non-empty $S \subseteq N$ and $y \in \mathcal{W}_S$ such that 
	\begin{equation*}
		u_i(y) = \sum_{R \ni i} y_R > \sum_{R \ni i} x'_R \ge u'_i \quad \text{for all } i\in S.
	\end{equation*}
	Since $\sum_{R \ni i} y_R$ and $u'_i$ are integers, also $\sum_{R \ni i} y_R \ge u'_i + 1$ for all $i \in S$. 
	Moreover, as $u'_i = \lfloor u_i(x) \rfloor$, we have $\sum_{R \ni i} y_R \ge \lfloor u_i(x) \rfloor + 1 > u_i(x)$ for all $i \in S$. 
	Since $y \in \mathcal W_S \subseteq \mathcal P_S$, this means that $y$ blocks $x$ with respect to the fractional core, a contradiction.
\end{proof}

Later, we show that for $n \le 5$, $\lfloor u(x)\rfloor$ is integral-committee-feasible for \emph{any} fractional committee $x$ (\Cref{thm:monoidnormal}) and thus the core is non-empty.

\subsection{From voters to voter types} \label{sec:votertypes}

The way we defined our model involving candidate types and their supplies differs from the setup as it is usually given. To be able to phrase our existence result in the language of approval-based committee elections, we reformulate the model: 
Let $\hat N$ be a finite set of voters (we use a different symbol from $N$ because these voters will be \emph{unweighted}), and let $\hat C$ be a finite set of candidates. We are given a committee size $k$ and an \emph{approval profile} $(A_i)_{i \in \hat N}$ where $A_i \subseteq \hat C$ is the set of approved candidates of voter $i \in \hat N$. A \emph{committee} is a subset $W \subseteq \hat C$ with $|W| \le k$. It is in the \emph{core for the approval profile} $(A_i)_{i \in \hat N}$ if there is no non-empty coalition $S \subseteq \hat N$ and objection $T \subseteq \hat C$ such that $|T| \le \frac{|S|}{|\hat N|}\cdot k$ and $|A_i \cap T| > |A_i \cap W|$ for all $i \in S$. 

Let us say that two voters $i,j \in \hat N$ have the same \emph{voter type} in an approval profile if $A_i = A_j$. 
The following theorem shows that core-existence results for the compressed instances $(c,k,b)$ immediately extend to approval-based committee elections with the same number of voter types.

\begin{restatable}{theorem}{voterTypes}\label{thm:votertypes}
	If every instance $(c,k,b)$ with at most $t$ voters has a non-empty core, then every approval-based committee election with at most $t$ voter types has a non-empty core.
\end{restatable}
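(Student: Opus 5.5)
The plan is to compress the approval profile into a weighted instance, apply the hypothesis to get a core committee there, and then lift that committee back to the original election.

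\textbf{Compression.} Fix an approval-based committee election $(\hat N,\hat C,k,(A_i)_{i\in\hat N})$ with at most $t$ voter types. Let $N$ be the set of voter types, so $|N|\le t$. For a type $j\in N$, let $n_j\ge 1$ be the number of voters $i\in\hat N$ with that approval set. Set the budgets to $b_j = \frac{n_j}{|\hat N|}\cdot k$. These are positive rationals summing to $k$, provided $k>0$. If $k=0$, the empty committee is trivially in the core, because every objection $T$ must then be empty and gives no voter a strict improvement.

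For each candidate $a\in\hat C$, let its type be $R(a)=\{j\in N : a\in A_j\}$. Candidates with $R(a)=\emptyset$ are approved by no one and can be discarded. Put $c_R=|\{a : R(a)=R\}|$ for each $R\in\mathcal R$.

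\textbf{Maps between committees.} Any $W\subseteq\hat C$ with $|W|\le k$ induces the vector $x^W$ with $x^W_R=|\{a\in W: R(a)=R\}|$. This vector lies in $\mathcal W$ and satisfies $u_j(x^W)=|A_j\cap W|$. Conversely, any $x\in\mathcal W_S$ is realized by choosing $x_R$ candidates of each type $R$, which is possible since $x_R\le c_R$. This gives a set $W$ with $|W|=\sum_R x_R$ and the same utilities.

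\textbf{Lifting the core property.} By hypothesis the compressed instance $(c,k,b)$ has a core committee $x$. Let $W$ be a realization of $x$. Suppose some non-empty $\hat S\subseteq\hat N$ and $T\subseteq\hat C$ object to $W$, so $|T|\le\frac{|\hat S|}{|\hat N|}k$ and every $i\in\hat S$ strictly prefers $T$. Let $S\subseteq N$ be the set of types occurring in $\hat S$; it is non-empty. Then
\[
b(S)=\sum_{j\in S}\frac{n_j}{|\hat N|}k\ \ge\ \frac{|\hat S|}{|\hat N|}k\ \ge\ |T|.
\]
We may drop unapproved candidates from $T$ without changing any utility. Then $y=x^T$ lies in $\mathcal W_S$. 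For each $j\in S$, pick a voter $i\in\hat S$ of type $j$; this gives $u_j(y)=|A_i\cap T|>|A_i\cap W|=u_j(x)$. So $S$ blocks $x$, contradicting that $x$ is in the core.

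\textbf{Main obstacle.} The argument is routine bookkeeping. The only real care points are:
\begin{itemize}
\item the budget inequality $b(S)\ge |T|$, which uses that $\hat S$ may contain only some of the voters of each type in $S$, so its budget share is at most $b(S)$;
\item the edge cases: $k=0$, and candidates approved by nobody, which have no type in $\mathcal R$.
\end{itemize}
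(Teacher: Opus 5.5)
Your proposal is correct and follows essentially the same route as the paper: compress voter types into weighted voters with budgets $b_j = n_j k/|\hat N|$, take a core committee of the compressed instance, realize it, and pull any objection back to a blocking coalition of types. The only difference is cosmetic. The paper first enlarges the blocking coalition to a union of complete voter types, whereas you directly use $|\hat S| \le \sum_{j\in S} n_j$ to get $|T| \le b(S)$; both give the same budget inequality.
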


The proof is routine and can be found in \Cref{appendix:omitted_proofs}.

\section{From fractional- to integral-committee feasibility via affine monoids} \label{sec:fraccoretocore}

Our aim is to leverage the existence of fractional core committees to produce integral committees in the core. To achieve this, we need to understand which utility vectors can be (approximately) achieved by integral committees, compared to fractional committees. For this, we use affine monoids.

\subsection{Affine monoids}

Intuitively, an affine monoid is a collection of all integer vectors that can be obtained by repeatedly adding integer vectors taken from a fixed set of \emph{generators}. In other words, it is the set of all non-negative integer combinations of the generators. Equipping this set with the usual (coordinatewise) addition, we obtain a monoid.\footnote{Our exposition in this section broadly follows the terminology of the book by \citet{bruns2009polytopes}.}

\begin{definition}
	An \emph{affine monoid} is a finitely generated submonoid of $\Z^d$: a set
	$M \subseteq \Z^d$ of the form
	\[
	M = \Bigl\{\sum_{j=1}^{s} \alpha_j g_j : \alpha_j \in \Z_{\ge 0}\Bigr\}
	\]
	for some finite set of generators $g_1, \dots, g_s \in \Z^d$.
\end{definition}

\begin{figure}[t]
	\centering
	\begin{subfigure}[t]{0.31\textwidth}
		\centering
		\begin{tikzpicture}[scale=0.75,every path/.style={>=stealth}]
			\draw[->,gray!60] (0,0) -- (4.6,0);
			\draw[->,gray!60] (0,0) -- (0,4.6);
			\foreach \x in {0,...,4}
			\foreach \y in {0,...,4}
			\fill[gray!45] (\x,\y) circle (1.25pt);
			\foreach \x/\y in {0/0,1/0,2/0,3/0,1/2,2/2,3/2,2/4,3/4}
			\draw[->,blue!75!black,line width=0.75pt,shorten >=1.7pt] (\x,\y) -- ++(1,0);
			\foreach \x/\y in {0/0,1/0,2/0,3/0,1/2,2/2,3/2}
			\draw[->,green!55!black,line width=0.75pt,shorten >=1.7pt] (\x,\y) -- ++(1,2);
			\foreach \x/\y in {0/0,1/0,1/2,2/0,2/2,2/4,3/0,3/2,3/4,4/0,4/2,4/4}
			\fill[black] (\x,\y) circle (2.0pt);
			\node[blue!75!black] at (0.6,0.26) {$g_1$};
			\node[green!55!black] at (0.33,1.44) {$g_2$};
		\end{tikzpicture}
		\caption{The monoid $M$ consists of all points that can be formed by non-negative integer combinations of the generators.}
		\label{fig:monoid-example}
	\end{subfigure}
	\hfill
	\begin{subfigure}[t]{0.31\textwidth}
		\centering
		\begin{tikzpicture}[scale=0.75,every path/.style={>=stealth}]
			\fill[blue!8] (0,0) -- (4.35,0) -- (4.35,4.35) -- (2.18,4.35) -- cycle;
			\draw[->,gray!60] (0,0) -- (4.6,0);
			\draw[->,gray!60] (0,0) -- (0,4.6);
			\foreach \x in {0,...,4}
			\foreach \y in {0,...,4}
			\fill[gray!45] (\x,\y) circle (1.25pt);
			\draw[thick,blue!75!black] (0,0) -- (4.35,0);
			\draw[->,blue!75!black,line width=0.75pt,shorten >=1.7pt] (0,0) -- (1,0);
			\draw[thick,green!55!black] (0,0) -- (2.18,4.35);
			\draw[->,green!55!black,line width=0.75pt,shorten >=1.7pt] (0,0) -- (1,2);
			\draw[dashed,blue!35] (4.35,0) -- (4.35,4.35);
			\draw[dashed,blue!35] (2.18,4.35) -- (4.35,4.35);
			\node[blue!70!black] at (2.9,1.45) {$\cone(M)$};
			\foreach \x/\y in {0/0,1/0,1/2,2/0,2/2,2/4,3/0,3/2,3/4,4/0,4/2,4/4}
			\fill[black] (\x,\y) circle (1.8pt);
			\foreach \x/\y in {1/1,2/1,2/3,3/1,3/3,4/1,4/3}
			\draw[red!80!black] (\x,\y) circle (2.3pt);
		\end{tikzpicture}
		\caption{The cone of $M$ is formed by the real combinations of the generators; some integer points are in the cone but not in $M$.}
		\label{fig:cone-example}
	\end{subfigure}
	\hfill
	\begin{subfigure}[t]{0.31\textwidth}
		\centering
		\begin{tikzpicture}[scale=0.75,every path/.style={>=stealth}]
			\fill[blue!8] (0,0) -- (4.35,0) -- (4.35,4.35) -- (2.18,4.35) -- cycle;
			\draw[->,gray!60] (0,0) -- (4.6,0);
			\draw[->,gray!60] (0,0) -- (0,4.6);
			\foreach \x in {0,...,4}
			\foreach \y in {0,...,4}
			\fill[gray!45] (\x,\y) circle (1.25pt);
			\draw[thick,blue!75!black] (0,0) -- (4.35,0);
			\draw[->,blue!75!black,line width=0.75pt,shorten >=1.7pt] (0,0) -- (1,0);
			\draw[thick,green!55!black] (0,0) -- (2.18,4.35);
			\draw[->,green!55!black,line width=0.75pt,shorten >=1.7pt] (0,0) -- (1,2);
			\draw[dashed,blue!35] (4.35,0) -- (4.35,4.35);
			\draw[dashed,blue!35] (2.18,4.35) -- (4.35,4.35);
			\foreach \x/\y in {0/0,1/0,1/2,2/0,2/2,3/0,3/2}
			\draw[->,violet!85!black,line width=0.75pt,shorten >=1.7pt] (\x,\y) -- ++(1,1);
			\node[violet!85!black] at (3.72,2.22) {$g_3$};
			\foreach \x/\y in {0/0,1/0,1/1,1/2,2/0,2/1,2/2,2/3,2/4,3/0,3/1,3/2,3/3,3/4,4/0,4/1,4/2,4/3,4/4}
			\fill[black] (\x,\y) circle (1.8pt);
		\end{tikzpicture}
		\caption{After adding a new generator $g_3=(1,1)$, all integer points in the cone are in the new monoid $M'$; thus $M'$ is integrally closed.}
		\label{fig:normal-monoid-example}
	\end{subfigure}
	\caption{The affine monoid $M$ generated by $g_1=(1,0)$ and $g_2=(1,2)$ is not integrally closed in $\Z^2$, but the monoid $M'$ with additional generator $g_3=(1,1)$ is integrally closed in $\Z^2$.}
	\label{fig:monoid-cone-normality}
\end{figure}

For example, we can consider dimension $d = 2$ and the generators $g_1 = (1,0)$ and $g_2 = (1,2)$. The monoid generated by them consists of all vectors $(a + b, 2b)$ where $a,b \in \Z_{\ge 0}$. This set is shown in \Cref{fig:monoid-example}.

Given an affine monoid $M$ generated by $g_1, \dots, g_s$, its (real) \emph{cone} is 
\[
\cone(M) = \Bigl\{\sum_{j=1}^{s} \alpha_j g_j : \alpha_j \in \R_{\ge 0}\Bigr\},
\]
the set of all non-negative \emph{real} combinations of the generators. The cone of our example monoid is shown in \Cref{fig:cone-example}. Obviously $M \subseteq \cone(M)$, and $\cone(M)$ contains vectors that are not in $M$. We are interested in whether $\cone(M)$ happens to contain \emph{integer} vectors that are not in $M$. For our example monoid, this is the case: the vector $(1,1)$  is not in the monoid (its second coordinate being odd), but it is in the cone because $(1, 1) = \frac12 (1,0) + \frac12 (1,2)$. Thus, $(1,1)$ is a ``hole'' in the monoid. A monoid is called integrally closed if no such holes exist.

\begin{definition}
	An affine monoid $M$ is \emph{integrally closed} in $\Z^d$ if $M = \cone(M) \cap \Z^d$, i.e., every integer point in the real cone of $M$ is already in $M$ itself.
\end{definition}

If we change our example monoid by adding a third generator $g_3 = (1,1)$, then the resulting monoid is integrally closed, see \Cref{fig:normal-monoid-example}.
Finally, an affine monoid generated by $g_1, \dots, g_s \in \Z^d$ has an associated ``group of differences'' $\operatorname{gp}(M) = \bigl\{\sum_{j=1}^{s} \alpha_j g_j : \alpha_j \in \Z \bigr\}$, which is the set of (not necessarily non-negative) integer combinations of the generators.
\begin{definition}
	An affine monoid $M$ is \emph{normal} if $M = \cone(M) \cap \operatorname{gp}(M)$.
\end{definition}
For the monoids we consider in the following sections, we always have $\operatorname{gp}(M) = \Z^d$ (as the set of generators will contain all unit vectors or their negations), and so we will use ``normal'' as simply a synonym for ``integrally closed''.

\subsection{The committee election monoid}\label{subsec:committeeelectionmonoid}

We now define the affine monoid that we use for our core existence result. This monoid consists of all combinations of instances and their utility vectors. Specifically, it will consist of triples $(c,k,u)$ where $(c,k)$ is an instance of the approval-based committee problem (with voter budgets ignored\footnote{Note that the structure of the monoid depends only on the supply vector $c$ and the total budget $k$, not on the distribution of the budget among the voters.}) and $u$ is integral-committee-feasible for the instance $(c, k)$, and we will show that this set of ``instance-utility pairs'' forms an affine monoid within $\Z^{|\mathcal{R}|+1+n}$. We next give the definition of the monoid in terms of its generators, and explain what these generators mean afterwards.
\begin{definition}
	\label{def:committee-monoid}
	For a given $n$ and the associated collection of candidate types $\mathcal R = 2^N \setminus \{\emptyset\}$, the \emph{committee election monoid} $M_{\mathcal{R}}$ is the affine monoid in $\Z^{|\mathcal{R}|+1+n}$ generated by
	\begin{equation}\label{eq:generators}
		\begin{aligned}
			X_R &= (e_R,\; 1,\; a_R)  && \text{for each $R \in \mathcal{R}$} \: &&\text{(selecting one candidate of type $R$)}, \\
			Z_R &= (e_R,\; 0,\; 0)   && \text{for each $R \in \mathcal{R}$}  &&\text{(unused supply of type $R$)}, \\
			T   &= (0,\; 1,\; 0)      && &&\text{(unused seats)}, \\
			S_i &= (0,\; 0,\; {-e_i}) && \text{for each $i \in N$} &&\text{(slack in voter $i$'s utility coverage)}.
		\end{aligned}
	\end{equation}
	Here, $e_R$ is the $R$-th unit vector $(0,\dots,1,\dots,0)$ in $\Z^{|\mathcal{R}|}$, $a_R$ is the
	indicator of $R$ in~$\Z^{n}$, and $e_i$ is the $i$-th unit vector in $\Z^{n}$.
\end{definition}

Intuitively, the generators can be used to ``build'' an instance $(c,k)$ and the integer utility vectors that can be achieved by integral committees of this instance. The generator $X_R$ corresponds to adding $1$ unit of supply of candidate type $R$ and adding it to the committee; hence this candidate takes up $1$ seat in the committee and increases the utility of each voter in $R$ by $1$. The generator $Z_R$ corresponds to adding to the supply of $R$ without putting the respective candidate type into the committee. The generator $T$ allows for committees that do not use all $k$ seats. And finally the generator $S_i$ encodes that a feasible utility vector is just a lower bound on a voter's actual utility, so if we have a feasible utility vector and decrease some of its coordinates, it remains feasible. Based on these interpretations, we can check that the committee election monoid indeed consists of instance-utility pairs, as explained at the beginning of the section.

\begin{proposition}
	\label{prop:interpret-generators}
	We have $(c,k,u) \in M_{\mathcal{R}}$ if and only if $(c,k)$ is an instance and $u$ is an integral-committee-feasible utility vector in $(c,k)$. Moreover, $(c,k,u) \in \cone(M_{\mathcal{R}}) \cap \Z^{|\mathcal{R}|+1+n}$ if and only if $(c,k)$ is an instance and $u$ is a fractional-committee-feasible integer utility vector in $(c,k)$. 
\end{proposition}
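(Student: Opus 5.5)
The plan is to prove both equivalences by writing an arbitrary non-negative combination of generators in a normal form and reading off the three blocks of coordinates. Any element of $\cone(M_{\mathcal R})$ (respectively of $M_{\mathcal R}$) has the form
\[
\sum_{R\in\mathcal R} x_R X_R + \sum_{R\in\mathcal R} z_R Z_R + t\,T + \sum_{i\in N} s_i S_i
\]
with real (respectively integer) coefficients $x_R,z_R,t,s_i\ge 0$. Its coordinates are then
\[
c_R = x_R+z_R,\qquad k=\sum_R x_R + t,\qquad u_i=\sum_{R\ni i}x_R - s_i .
\]
The idea is that the multipliers $x_R$ form the committee, while $z_R$, $t$, $s_i$ are exactly the slacks in the constraints $x_R\le c_R$, $\sum_R x_R\le k$, and $u_i\le u_i(x)$.

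\emph{Monoid part, forward direction.} Suppose $(c,k,u)\in M_{\mathcal R}$ with integer coefficients. Then $c\in\Z_{\ge0}^{|\mathcal R|}$ and $k\in\Z_{\ge 0}$, so $(c,k)$ is an instance. Since budgets are ignored here, an instance only means non-negative integer supply and size. The vector $x=(x_R)$ is integral and satisfies $0\le x_R\le c_R$ and $\sum_R x_R\le k$, so $x\in\mathcal W$. Moreover $u_i(x)=u_i+s_i\ge u_i$, so $u$ is integral-committee-feasible.

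\emph{Monoid part, converse.} Given an instance $(c,k)$ and a witness $x\in\mathcal W$ for an integer vector $u$, set
\[
z_R=c_R-x_R,\qquad t=k-\sum_R x_R,\qquad s_i=u_i(x)-u_i .
\]
These are non-negative integers precisely because $x$ is a witness and $u$ is integral. By construction the combination equals $(c,k,u)$.

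\emph{Cone part.} The identical computation with real coefficients applies. If $(c,k,u)$ is an integer point of the cone, then $(c,k)$ is an integer instance and the real vector $x$ lies in $\mathcal P$ with $u_i(x)\ge u_i$, so $u$ is a fractional-committee-feasible integer vector. Conversely, a fractional witness $x\in\mathcal P$ yields non-negative real slacks $z_R,t,s_i$ defined as above, which exhibits $(c,k,u)$ as a point of $\cone(M_{\mathcal R})$; it is integral by assumption.

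The only step that needs care is a convention in the cone part. For the forward direction to produce an instance, one uses that the point is integral, so that $c$ and $k$ are integers; this is exactly why the statement intersects the cone with $\Z^{|\mathcal R|+1+n}$. Also, ``utility vector'' must be read as a vector in $\Z^n$, possibly with negative entries, which the preliminaries explicitly allow; this makes the slack generators $S_i$ harmless. Everything else is bookkeeping, so I expect no genuine obstacle.
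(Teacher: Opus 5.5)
Your proposal is correct and follows essentially the same argument as the paper. Both read off $c_R = x_R + z_R$, $k = \sum_R x_R + t$, $u_i = \sum_{R\ni i} x_R - s_i$ from a generator decomposition for the forward direction, and define the slacks $z_R, t, s_i$ from a witness for the converse. The only difference is that you spell out the cone case, which the paper dismisses as analogous, and you correctly note that integrality of the point is what makes $(c,k)$ an instance there.
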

\begin{proof}
	Suppose that $(c,k,u) \in M_{\mathcal{R}}$. By definition of $M_{\mathcal{R}}$, there exist non-negative integers $x_R, z_R, t, s_i$ such that
	\begin{equation*}
		(c, k, u) = \sum_{R \in \mathcal{R}} x_R X_R + \sum_{R \in \mathcal{R}} z_R Z_R + t\, T + \sum_{i \in N} s_i S_i.
	\end{equation*}
	Then $(c,k)$ is a non-negative integer vector and hence obviously an instance. Now consider the vector $x = (x_R)_{R \in \mathcal{R}}$; in fact, this is an integral committee in the instance $(c,k)$ because $x_R \le x_R + z_R = c_R$ (it satisfies the supply constraints) and $\sum_{R \in \mathcal{R}} x_R \le \sum_{R \in \mathcal{R}} x_R + t = k$ (it satisfies the budget constraint). Further, $x$ witnesses that $u$ is integral-committee-feasible since for each $i \in N$, we have $u_i = \sum_{R \ni i} x_R - s_i$ and hence $u_i \le \sum_{R \ni i} x_R$, as required.
	
	Conversely, suppose $(c,k)$ is an instance and $u$ is an integral-committee-feasible utility vector, witnessed by some integral committee $x$. For each $R \in \mathcal{R}$, define $z_R = c_R - x_R$; define $t = k - \sum_{R \in \mathcal{R}} x_R$; and for each $i \in N$, define $s_i = \sum_{R \ni i} x_R - u_i$. Then, because $x$ is an integral committee, all $z_R$ as well as $t$ are non-negative integers, and because $x$ witnesses feasibility of $u$, the $s_i$ are also all non-negative integers. In addition, we have
	\begin{equation*}
		(c, k, u) = \sum_{R \in \mathcal{R}} x_R X_R + \sum_{R \in \mathcal{R}} z_R Z_R + t\, T + \sum_{i \in N} s_i S_i,
	\end{equation*}
	and thus $(c,k,u) \in M_{\mathcal{R}}$, as required.
	
	The second part of the statement can be shown analogously. 
\end{proof}

Hence, normality of $M_{\mathcal{R}}$ for a fixed $n$ would imply that for a given instance $(c,k)$, every fractional-committee-feasible integer utility vector $u$ is also integral-committee-feasible. \Cref{thm:roundfracutils} would then prove core existence for all instances with $n$ voters.

\subsection{Decomposing instances}\label{sec:decomposition}

Our main goal will be to show that if an integral utility vector $u$ is fractional-committee-feasible in an instance $(c,k)$, then \emph{(i)} for $n \le 5$, $u$ is also integral-committee-feasible and \emph{(ii)} for $n \in \{6,7\}$, a Lindahl equilibrium allocation $x$ either has a floored utility vector $u=\lfloor u(x)\rfloor$ which is integral-committee-feasible or satisfies certain other conditions that allow us to construct a core committee from it. In both cases, if $(c,k,u) \in M_{\mathcal{R}}$ then $u$ is already integral-committee-feasible by \Cref{prop:interpret-generators}. Thus, we will analyze instance-utility pairs $(c,k,u)$ that are not in $M_{\mathcal{R}}$.

We call an instance-utility pair $(c,k,u)$ a \emph{hole} if $u$ is fractional-committee-feasible but not integral-committee-feasible in $(c,k)$.
To analyze holes, we will find it useful to decompose $(c,k,u)$ into a \emph{base part} $(c^0,k^0,u^0)$ consisting of generators (see \ref{eq:generators}) and a \emph{residual part} $(c^r,k^r,u^r)$, i.e.,
\[
	(c,k,u)=(c^0,k^0,u^0)+(c^r,k^r,u^r).
\]
Further, for fixed $n$, we will show that the set of possible residual parts $(c^r,k^r,u^r)$ is \emph{finite}. Later on, we can then enumerate all possible residual parts and check the conditions we need for core existence on each relevant one.

\begin{wrapstuff}[r,width=0.39\textwidth,type=figure,lines=10]
	\centering
	\begin{tikzpicture}[scale=0.8,every path/.style={>=stealth}]
		\fill[blue!8] (0,0) -- (4.35,0) -- (4.35,4.35) -- (2.18,4.35) -- cycle;
		\draw[->,gray!60] (0,0) -- (4.6,0);
		\draw[->,gray!60] (0,0) -- (0,4.6);
		\foreach \x in {0,...,4}
		\foreach \y in {0,...,4}
		\fill[gray!45] (\x,\y) circle (1.25pt);
		\draw[thick,blue!75!black] (0,0) -- (4.35,0);
		\draw[->,blue!75!black,line width=0.75pt,shorten >=1.7pt] (0,0) -- (1,0);
		\draw[thick,green!55!black] (0,0) -- (2.18,4.35);
		\draw[->,green!55!black,line width=0.75pt,shorten >=1.7pt] (0,0) -- (1,2);
		\foreach \x/\y in {0/0,1/0,1/2,2/0,2/2,2/4,3/0,3/2,3/4,4/0,4/2,4/4}
		\fill[black] (\x,\y) circle (1.8pt);
		\foreach \x/\y in {1/1,2/1,2/3,3/1,3/3,4/1,4/3}
		\draw[red!80!black] (\x,\y) circle (2.3pt);
		\draw[red!80!black, semithick] (2,3) circle (2.2pt);
		\draw[red!80!black, semithick] (1,1) circle (2.2pt);
		\draw[->,orange!90!black,line width=1.0pt,shorten >=3pt,shorten <=3pt] (2,3) -- (1,1)
		node[right=0.78pt, pos=0.7] {$-g_2$};
		\node[anchor=south west, red!80!black, node font=\footnotesize] at (1.81,3.03) {$(2,3)$};
		\node[anchor=north west, red!80!black, node font=\footnotesize] at (0.55,0.97) {$(1,1)$};
	\end{tikzpicture}
	\caption{Decomposing an element.}
	\label{fig:counterexample-reduction}
\end{wrapstuff}

For intuition, consider again the monoid from \Cref{fig:monoid-example}, generated by $g_1=(1,0)$ and $g_2=(1,2)$.
The integer point $(2,3)$ lies in the cone generated by $g_1$ and $g_2$ since
$(2,3)=\tfrac12 g_1+\tfrac32 g_2$, but it is not part of the monoid: there are no $\alpha_1,\alpha_2\in\mathbb Z_{\ge0}$ with $\alpha_1 g_1+\alpha_2 g_2=(2,3)$.
Thus $(2,3)$ witnesses a violation of being integrally closed. It can be written as $(2,3)=(1,2)+(1,1)$ where $(1,2)$ constitutes one of the generators as the base part and $(1,1)$ corresponds to the residual part.

Our reduction lemmas play the analogous role for the committee election monoid: they show that in any decomposition $(c,k,u)=(c^0,k^0,u^0)+(c^r,k^r,u^r)$ where the residual part is still ``too large'', it can be further reduced by subtracting suitable generators and adding them to the base part.
Consequently, it suffices to analyze instances that correspond to residual parts. These turn out to be much easier to understand since they live ``close to the origin'', similar to $(1,1)$ in the example.

The following theorem summarizes the decomposition properties established in the subsequent lemmas and used to prove core existence.

\begin{theorem}[Decomposition theorem]\label{thm:decomposition}
	Let $u\in\Z_{\ge0}^n$ be fractional-committee-feasible but not integral-committee-feasible in $(c,k)$, and let $x$ be a fractional witness with $\sum_{R\in\mathcal R}x_R=k$. Then there exists a decomposition
	\[
		(c,k,u)=(c^0,k^0,u^0)+(c^r,k^r,u^r)
	\]
	such that $(c^r,k^r,u^r)$ is fractional-committee-feasible but not integral-committee-feasible and the following conditions hold:
	\begin{enumerate}
		\renewcommand{\theenumi}{(D\arabic{enumi})}
		\renewcommand{\labelenumi}{\theenumi}
		\item\label{cond:decomp-base} $(c^0,k^0,u^0)$ is generated using only the generators $X_R$, $Z_R$, and $T$;
		\item\label{cond:decomp-feasibility} $u(x)-u^0$ is fractional-committee-feasible in $(c^r,k^r)$.
	\end{enumerate}
	Unless $u_i^r=0$ for some voter $i\in N$, the decomposition can additionally be chosen to satisfy:
	\begin{enumerate}
		\renewcommand{\theenumi}{(D\arabic{enumi})}
		\renewcommand{\labelenumi}{\theenumi}
		\setcounter{enumi}{2}
		\item\label{cond:decomp-supply} $\mathcal R_c^r$ is an antichain (no type in $\mathcal R_c^r$ is a strict subset of another) and $c_R^r=1$ for every $R\in\mathcal R_c^r$;
		\item\label{cond:decomp-support} for every $R \in \mathcal R_c^r$ we have $x_R\notin\mathbb Z$;
		\item\label{cond:decomp-witness} there exists a fractional witness $x^r$ for $u(x)-u^0$ in $(c^r,k^r)$ such that
		\[
		\sum_{R\in\mathcal R_c^r}x_R^r=k^r
		\quad\text{and}\quad
		0<x_R^r<1 \quad\text{for every }R\in\mathcal R_c^r;
		\]
		\item\label{cond:decomp-bounds} $k^r\le|\mathcal R_c^r|-2\le n-2$ and $1\le u_i^r\le d_i^r-1$ for every $i \in N$.
	\end{enumerate}
\end{theorem}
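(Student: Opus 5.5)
We will obtain the decomposition by iterated reduction. We start from the trivial decomposition $(c^0,k^0,u^0)=0$, $(c^r,k^r,u^r)=(c,k,u)$, together with the fractional witness $x^r:=x$ for $u(x)-u^0=u(x)$. Throughout we maintain the invariants: the base part is a non-negative integer combination of $X_R,Z_R,T$; $x^r$ is a fractional committee in $(c^r,k^r)$ with $u(x^r)\ge u(x)-u^0$ and $\sum_R x^r_R=k^r$; and the residual $u^r=u-u^0$ is not integral-committee-feasible in $(c^r,k^r)$.

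The last invariant holds automatically. If $u^r$ were integral-feasible in $(c^r,k^r)$, then adding the base part would give $(c,k,u)\in M_{\mathcal R}$ by \Cref{prop:interpret-generators}, contradicting that $(c,k,u)$ is a hole. Fractional feasibility of $u^r$ follows from $u^r\le u(x)-u^0\le u(x^r)$, using $u\le u(x)$. Conditions \ref{cond:decomp-base} and \ref{cond:decomp-feasibility} are then immediate from the invariants. Each reduction step strictly decreases $\sum_R c^r_R+k^r$, so the process terminates.

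The reduction moves are as follows.
\begin{enumerate}
\item \emph{Integer parts.} If $x^r_R\ge1$ for some $R$, move $X_R$ to the base: $c^r_R$ and $k^r$ drop by $1$, $u^0$ gains $a_R$, and $x^r_R$ drops by $1$. All invariants survive, and $u(x^r)\ge u(x)-u^0$ still holds coordinatewise.
\item \emph{Unused supply.} If $c^r_R-x^r_R\ge1$, move $Z_R$ to the base.
\end{enumerate}
After these moves are exhausted, every present type has $c^r_R=1$ and $0<x^r_R<1$. Since $x$ and $x^r$ differ by integers on each type, $x_R\notin\Z$ for every $R\in\mathcal R_c^r$, which gives \ref{cond:decomp-support}. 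The equality $\sum x^r=k^r$ is preserved because the first move lowers both sides by $1$ and the second lowers neither. This gives \ref{cond:decomp-witness} and the supply part of \ref{cond:decomp-supply}. Note that if $u^r_i=0$ for some $i$, we simply stop after the first two moves, since only \ref{cond:decomp-base} and \ref{cond:decomp-feasibility} are claimed in that case.

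The antichain property needs an extra move when $R\subsetneq R'$ are both present. Here we want to shift the fractional mass $x^r_R$ onto $R'$, which weakly increases every utility. Since $c^r_{R'}=1$, however, this requires capacity. We set $x^r_{R'}\leftarrow\min(1,x^r_{R'}+x^r_R)$ and put the excess back on $R$. If $R'$ becomes $1$, we apply the first move to $R'$. Otherwise $x^r_R$ becomes $0$, and we apply the second move to $R$ with $Z_R$. In both cases the budget sum is unchanged, utilities weakly increase, and the measure strictly decreases.

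The bounds in \ref{cond:decomp-bounds} come last, and this is the step I expect to be the main obstacle. For $1\le u^r_i$ we use the case hypothesis. For $u^r_i\le d^r_i-1$: if $u^r_i\ge d^r_i$ then, since $u^r_i\le u_i(x^r)<d^r_i$ (each of the $d^r_i$ types containing $i$ has $x^r<1$), we get a contradiction. Hence $u^r_i\le d^r_i-1$.

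For $k^r\le|\mathcal R^r_c|-2$, note that $k^r=\sum x^r_R<|\mathcal R^r_c|$, so it only remains to rule out $k^r=|\mathcal R^r_c|-1$. In that case, select all types except one to obtain an integral committee. A voter $i$ loses at most one type, so gets utility at least $d^r_i-1\ge u^r_i$. This makes $u^r$ integral-feasible, a contradiction.

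Finally, $|\mathcal R^r_c|\le n$ needs an antichain-plus-vertex argument: take $x^r$ to be a vertex of the polytope $\{x\ge 0,\ x\le 1,\ \sum x=k^r,\ u(x)\ge u(x)-u^0\}$ restricted to the present types. Fractional coordinates at a vertex are bounded by the number of tight non-box constraints, which is at most $n+1$ (the $n$ utility constraints plus the sum equality). So there are at most $n+1$ fractional types, and every present type is fractional after re-running the first two moves. Shaving this to $n$ is the delicate part. I would try to show that if all $n$ utility constraints together with the sum constraint are tight with $n+1$ fractional types, the voters' covering constraints force an integral-feasible rounding, or else argue that the sum constraint is implied. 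If that fails, I would fall back on a direct combinatorial argument exploiting the antichain structure.
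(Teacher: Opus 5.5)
Your integer-part, unused-supply and antichain reductions essentially match the paper's \Cref{lem:minimal-reduction01}. Your derivations of $u_i^r\le d_i^r-1$ and $k^r\le|\mathcal R_c^r|-2$ essentially match \Cref{lem:k-1<Rc}. The genuine gap is the bound $|\mathcal R_c^r|\le n$, which (D6) requires through $|\mathcal R_c^r|-2\le n-2$. Your vertex argument keeps the size equality $\sum_R x_R^r=k^r$ as a tight non-box constraint. It therefore yields at most $n+1$ fractional types, hence only $k^r\le n-1$. That is too weak for the statement and for its later uses, e.g.\ $k^r\le 1$ when $n=3$. The routes you list for removing the last type are not carried out. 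The point is not to force an integral rounding but to shrink the residual further while it stays a hole.

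The missing observation is that during this step you only need $\sum_R x_R^r\le k^r$; the equality can be restored afterwards.
\begin{itemize}
\item If $|\mathcal R_c^r|>n$, the incidence vectors $a_R\in\R^n$ of the present types are linearly dependent. So there is $\gamma\neq 0$ with $\sum_R\gamma_R a_R=0$, and after replacing $\gamma$ by $-\gamma$ if necessary, $\sum_R\gamma_R\le 0$.
\item Moving $x^r$ in direction $\gamma$ leaves every voter's utility exactly unchanged and does not increase the size. Continue until some coordinate reaches $0$ or $1$, then delete that type with a $Z_R$ or $X_R$ move.
\item If the size is now below $k^r$, raise the remaining coordinates until it equals $k^r$ or another coordinate hits $1$, which gives another $X_R$ move. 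This is always possible: otherwise the witness would select the whole residual supply and be integral, contradicting that the residual is a hole.
\item Each round deletes a type and preserves (D1)--(D5). A subset of an antichain is an antichain, and (D4) refers to the original $x$ while types are only ever removed. Iterating gives $|\mathcal R_c^r|\le n$, after which you re-derive (D6) exactly as you did.
\end{itemize}
Equivalently, take a basic optimal solution of $\min\sum_R z_R$ subject to $0\le z_R\le 1$ and $u(z)\ge u(x)-u^0$ on the present types. Its only non-box constraints are the $n$ utility constraints, so at most $n$ coordinates are fractional. Its size is at most $k^r$ because $x^r$ is feasible, and topping up afterwards never creates new types.

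A smaller point: you must stop the moment an $X_R$ move creates a zero coordinate of $u^r$, as the paper does, rather than ``after the first two moves.'' Otherwise a coordinate can become negative, and then neither the exception clause nor your lower bound $1\le u_i^r$ applies.
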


The rest of this subsection proves the decomposition theorem.
We start with $(c^r,k^r,u^r)=(c,k,u)$ and $(c^0,k^0,u^0)=(0,0,0)$. By ``moving'' a generator from the residual part to the base part, we mean subtracting it from $(c^r,k^r,u^r)$ and adding it to $(c^0,k^0,u^0)$. We only make such a reduction when $(c^r,k^r)$ remains an instance, $u^r$ remains non-negative, and $u(x)-u^0$ remains fractional-committee-feasible in $(c^r,k^r)$. Note that during all of these moves, the residual remains outside $M_{\mathcal R}$: otherwise, since the moved generator belongs to $M_{\mathcal R}$, the residual before the move would already belong to $M_{\mathcal R}$.
With this introduction, we now state the first reduction lemma.

\begin{lemma}\label{lem:minimal-reduction01}
	The decomposition can be chosen so that \ref{cond:decomp-base} and \ref{cond:decomp-feasibility} hold and, if no coordinate of $u^r$ is zero, \ref{cond:decomp-supply}--\ref{cond:decomp-witness} also hold. Moreover, starting from a decomposition satisfying \ref{cond:decomp-base}, \ref{cond:decomp-feasibility}, \ref{cond:decomp-support}, and \ref{cond:decomp-witness}, it can be further reduced until either a coordinate of $u^r$ becomes zero or \ref{cond:decomp-supply} also holds, while preserving the four initial conditions.
\end{lemma}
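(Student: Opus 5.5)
We proceed by a sequence of moves, each transferring one of $X_R$, $Z_R$ or $T$ from the residual to the base part. We track the fractional witness $x^r$ for $u(x)-u^0$ in $(c^r,k^r)$, starting from $x^r=x$. We maintain $\sum_R x^r_R = k^r$ throughout. This holds initially because $\sum_R x_R = k$ by assumption. Every move strictly decreases $\sum_R c^r_R + k^r$, so the process terminates.

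\textbf{Moves.} Each of the following is applied whenever possible.
\begin{enumerate}
	\item[(a)] \emph{Integer part of $x$.} If $x^r_R\ge 1$ for some $R$, we move $X_R$. This lowers $c^r_R$, $k^r$ and $u^0$ in the coordinates of $R$ by one. The vector $x^r-e_R$ is then a valid witness: supply, budget, utility lower bounds and the tightness $\sum x^r=k^r$ are all preserved. We need $u^r\ge 0$ after the move, which holds as long as $u^r_i\ge 1$ for $i\in R$. If some coordinate is $0$, we stop reducing, which the lemma allows.
	\item[(b)] \emph{Slack supply.} If $c^r_R - x^r_R\ge 1$, we move $Z_R$. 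The witness $x^r$ stays valid. Applied repeatedly together with (a), this leaves $c^r_R=\lceil x^r_R\rceil$ with $0\le x^r_R<1$ for every remaining type.
	\item[(c)] \emph{Empty types.} Types with $x^r_R=0$ have $c^r_R=0$ after (b), so they vanish.
\end{enumerate}
After these moves, every $R\in\mathcal R_c^r$ has $c^r_R=1$ and $0<x^r_R<1$, and $\sum x^r_R=k^r$, which gives \ref{cond:decomp-witness}. Since each move subtracted whole units, $x^r_R\equiv x_R \pmod 1$. Hence $x_R\notin\Z$, which gives \ref{cond:decomp-support}. No $S_i$ is ever used, so \ref{cond:decomp-base} holds, and \ref{cond:decomp-feasibility} is maintained by construction. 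The residual stays a hole by the remark preceding the lemma: it is fractional-feasible via $x^r$ together with $u^r\le u(x)-u^0$, and integrally infeasible since otherwise $(c,k,u)\in M_{\mathcal R}$. The bookkeeping for the latter is $u^r = u-u^0 \le u(x)-u^0$, because $u\le u(x)$.

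\textbf{Antichain.} To obtain \ref{cond:decomp-supply}, suppose $R\subsetneq R'$ with both in $\mathcal R_c^r$. We shift fractional mass from $R$ to $R'$, setting $x^r_{R'}\gets x^r_{R'}+\delta$ and $x^r_R\gets x^r_R-\delta$ with $\delta=\min(x^r_R,\,1-x^r_{R'})$. This keeps the sum fixed and weakly increases every utility, since $R'\supseteq R$, so the witness stays valid. One of two cases then occurs. If $x^r_R=0$, we remove $R$ by a $Z_R$ move. If $x^r_{R'}=1$, we move $X_{R'}$, which requires $u^r_i\ge1$ on $R'$; otherwise we stop with a zero coordinate.

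\textbf{Main obstacle.} This shift breaks the congruence $x^r_R\equiv x_R\pmod 1$ used for \ref{cond:decomp-support}. The ``moreover'' clause, which must preserve \ref{cond:decomp-support}, has to be read for the types that remain. A type that survives a shift step is either untouched or is $R'$ before it reached $1$. This is the delicate point. I would handle it by performing the shift only when its outcome removes one of the two types, and otherwise choosing the shift direction (using $\delta$ in the reverse direction is not available, since utilities would drop) so that no surviving altered type remains. Concretely, I would set $\delta=x^r_R$ when $x^r_R+x^r_{R'}\le1$. In the remaining case I would write $x^r_{R'}+x^r_R-1$ for the leftover on $R$. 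The claim to prove is that all remaining types keep non-integral values of the original $x$ up to integer shifts. I expect this to require a refined potential argument, or a reinterpretation of \ref{cond:decomp-support} in terms of the current witness.

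Termination follows from the decreasing potential. Finally, the antichain move is iterated until no comparable pair remains, or until some $u^r_i=0$.
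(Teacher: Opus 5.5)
Your moves are exactly the paper's. You peel off integer parts with $X_R$ and strip slack supply with $Z_R$. For comparable types $R\subsetneq R'$ you shift mass from $R$ to $R'$ until $x^r_R=0$ or $x^r_{R'}=1$, then move $Z_R$ and/or $X_{R'}$.

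The gap is that you stop at a ``main obstacle'' you do not resolve, and the routes you sketch for it would fail. The obstacle is not real. Condition \ref{cond:decomp-support} is a statement about the \emph{fixed} original witness $x$ on the current support: $x_R\notin\Z$ for every $R\in\mathcal R_c^r$. It says nothing about the current witness $x^r$. Every move subtracts a non-negative vector from $c^r$, so $\mathcal R_c^r$ only ever shrinks. A property that holds for every member of a set holds for every member of any subset. Hence \ref{cond:decomp-support} is preserved by every reduction without any bookkeeping. This is also why the paper can assert, after the arbitrary perturbation of the witness in \Cref{lem:Rcle5}, that \ref{cond:decomp-base}--\ref{cond:decomp-support} remain true.

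The congruence $x^r_R\equiv x_R \pmod 1$ is only a convenient way to \emph{establish} \ref{cond:decomp-support} at the end of the first phase. It genuinely fails after a shift: $x^r_R=0.6$, $x^r_{R'}=0.7$ leaves $R$ with $0.3$. In the ``moreover'' clause it need not hold even at the start, since the initial decomposition is arbitrary apart from the four conditions. So do not try to prove that remaining types ``keep non-integral values of the original $x$ up to integer shifts.'' Nor can you choose the shift so that no altered type survives; this is impossible when $x^r_R+x^r_{R'}>1$.

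What must be checked after a shift is \ref{cond:decomp-witness}, and it holds. The surviving type has value $x^r_R+x^r_{R'}\in(0,1)$ or $x^r_R+x^r_{R'}-1\in(0,1)$. The witness sum drops by one exactly when an $X_{R'}$ move lowers $k^r$ by one.

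Two small points complete the argument. In the ``moreover'' setting, first apply your move (b) to bring any $c^r_R>1$ down to $1$, since \ref{cond:decomp-witness} alone does not give $c^r_R=1$. Also check for a zero coordinate of $u^r$ after each move rather than before. Then the $X_{R'}$ move is always admissible, because all $u^r_i\ge 1$ beforehand, and you never halt in a state with $x^r_{R'}=1$. With these points the argument is complete and coincides with the paper's.
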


\begin{proof}
	Start with $x^r=x$. If some coordinate of $u^r$ is zero, the initial decomposition already satisfies \ref{cond:decomp-base} and \ref{cond:decomp-feasibility}, so we stop. Otherwise, while $x_R^r\ge1$, move $X_R$ from the residual part to the base part and replace $x_R^r$ by $x_R^r-1$. While $c_R^r-x_R^r\ge1$, move $Z_R$ to the base part. These operations preserve \ref{cond:decomp-base} and \ref{cond:decomp-feasibility}. If an $X_R$-reduction produces a zero coordinate of $u^r$, we stop. After completing the initial reductions, $c_R^r=1$ exactly for the originally non-integral coordinates $x_R$, and the remaining fractional committee $x^r$ satisfies $\sum_{R\in\mathcal R_c^r}x_R^r=k^r$ and $0<x_R^r<1$ for every $R\in\mathcal R_c^r$. This proves \ref{cond:decomp-base}, \ref{cond:decomp-feasibility}, \ref{cond:decomp-support}, and \ref{cond:decomp-witness}.
	
	It remains to obtain the antichain property; the same argument also proves the second statement. First, if $c_R^r>1$, move copies of $Z_R$ to the base part until $c_R^r=1$. Now suppose that $R',R\in\mathcal R_c^r$ with $R'\subsetneq R$. Move mass in $x^r$ from $R'$ to $R$ until $x^r_{R'}=0$ or $x^r_R=1$. The total mass is unchanged, and every voter weakly gains utility because $R'\subsetneq R$. If $x^r_{R'}=0$, move $Z_{R'}$ to the base part. If $x^r_R=1$, move $X_R$ to the base part, subtract its incidence vector from $u^r$, and replace $x_R^r$ by $0$. Thus, if both equalities hold, both generators are moved. The updated $x^r$ still witnesses fractional-committee feasibility of $u(x)-u^0$. If some coordinate of $u^r$ becomes zero, we stop. Otherwise, all the stated properties are preserved. Each such reduction removes an available type, so repeating the argument yields an antichain and proves \ref{cond:decomp-supply}.
\end{proof}

The next lemma bounds the number of available candidate types.

\begin{lemma}\label{lem:Rcle5}
	Starting from the decomposition in \Cref{lem:minimal-reduction01}, we can additionally ensure that $\lvert\mathcal R_c^r\rvert\le n$, unless a coordinate of $u^r$ becomes zero. Any of conditions \ref{cond:decomp-base}--\ref{cond:decomp-support} that hold initially are preserved throughout.
\end{lemma}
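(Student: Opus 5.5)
We reduce the support of the fractional witness to a vertex of the constraint polytope, in the style of Carathéodory. After \Cref{lem:minimal-reduction01}, we have a fractional witness $x^r$ with $\sum_{R\in\mathcal R_c^r}x^r_R=k^r$ and $0<x^r_R<1$ on every available type, and $c^r_R=1$ there. Let $v=u(x)-u^0$ be the target vector, which $x^r$ covers: $\sum_{R\ni i}x^r_R\ge v_i$. Consider the polytope
\[
Q=\Bigl\{y\in\R^{\mathcal R_c^r}: 0\le y_R\le 1,\ \textstyle\sum_R y_R=k^r,\ \sum_{R\ni i}y_R\ge v_i \ \forall i\in N\Bigr\}.
\]
It is non-empty because it contains $x^r$. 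We pick a vertex $y^\ast$ of $Q$. At a vertex, $|\mathcal R_c^r|$ linearly independent constraints are tight. Suppose $y^\ast$ has $f$ fractional coordinates. Then the $f$ fractional coordinates must be pinned down by tight constraints other than the box constraints, namely the cardinality equation and at most $n$ voter constraints. A rank count then gives $f\le n+1$. We sharpen this to the needed bound as follows. If all voter constraints and the equation were tight and independent on the fractional coordinates, then summing the voter constraints with suitable weights would force a dependency. Instead, we argue directly: the equation $\sum y_R=k^r$ with $k^r$ integral and the integral $y_R$ on the other coordinates force at least two fractional coordinates whenever any exist. Combining this with the rank count gives $f\le n$ as long as not every voter row is needed. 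If necessary, we handle this by replacing $v$ by $u^r$-level constraints. I expect this counting to be the delicate point. The cleaner route is to note that the cardinality row plus $n$ voter rows have rank at most $n+1$, and then run the same rounding argument on the integral coordinates below.

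Next we move the integral coordinates of $y^\ast$ into the base. For every $R$ with $y^\ast_R=1$, we move $X_R$ to the base part; for every $R$ with $y^\ast_R=0$, we move $Z_R$ to the base part. Feasibility of $u(x)-u^0$ is preserved, because the restriction of $y^\ast$ to the fractional coordinates is a witness for the updated residual; this preserves \ref{cond:decomp-feasibility}. Only generators $X_R$ and $Z_R$ are moved, so \ref{cond:decomp-base} holds. The new witness has all remaining coordinates strictly in $(0,1)$ and sums to the updated $k^r$, which gives \ref{cond:decomp-witness}. Supply on surviving types stays $1$, and removing types preserves the antichain property, so \ref{cond:decomp-supply} holds. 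For \ref{cond:decomp-support}, the surviving types are a subset of the earlier ones, on which $x_R\notin\Z$ already held. If a move of some $X_R$ makes a coordinate of $u^r$ zero, we stop, as the statement allows. As noted before the lemma, the residual stays outside $M_{\mathcal R}$ throughout.

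It remains to secure $|\mathcal R_c^r|\le n$, that is, $f\le n$. If $f=n+1$, then all $n$ voter constraints and the cardinality equation are tight and independent on the fractional block, so $y^\ast$ restricted to that block is the unique solution of a square nonsingular system. The main obstacle is to exclude this case. My plan is to exploit the integral-closure failure: with all voter constraints tight, $v$ would equal the exact utilities. We would then perturb the system by dropping the cardinality equality to the inequality $\le k^r$, as the model allows through the generator $T$. The vertex of this relaxed polytope has at most $n$ tight non-box rows, giving $f\le n$. If the resulting sum drops below $k^r$ by an integer amount, we move copies of $T$. This uses only allowed moves, so \ref{cond:decomp-base} is unaffected, and the sum then equals the new $k^r$. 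The final step of the proof is to check carefully that this $T$-move keeps $\sum y_R=k^r$ exact, so that \ref{cond:decomp-witness} is preserved.
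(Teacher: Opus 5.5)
\textbf{There is a genuine gap, exactly at the step you flag as delicate.}

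\emph{First gap: the bound $n$ is not established.} A vertex $y^\ast$ of $Q$ can have $n+1$ fractional coordinates. The cardinality equation is an extra non-box row, so your rank count only gives $f\le n+1$, and none of your proposed sharpenings lowers it. The observation that a non-empty fractional block has at least two coordinates says nothing about $n+1$ versus $n$. The configuration $f=n+1$ (all $n$ voter rows and the all-ones row tight and independent on the block) is a perfectly possible vertex of $Q$ as soon as the all-ones row is independent of the voter rows on that block, and nothing in your argument excludes it. Replacing $v$ by $u^r$ is not an option either, since that destroys \ref{cond:decomp-feasibility}.

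The missing idea, which is what the paper uses, is to find the dependency among the $n$ voter rows alone. With more than $n$ types, the incidence vectors $a_R$ are linearly dependent. The sign of a dependency $\gamma$ can be chosen so that $\sum_R\gamma_R\le 0$. Moving along $\gamma$ then keeps every utility fixed and does not increase the size, until some coordinate hits $0$ or $1$. In your vertex language, this means taking a vertex of $\{0\le y\le 1,\ \sum_{R\ni i}y_R\ge v_i\ \forall i\}$ that minimizes $\sum_R y_R$, with no cardinality row at all. Such a vertex has at most $n$ fractional coordinates and size at most $k^r$. Your last paragraph gestures at this, but it keeps $\sum_R y_R\le k^r$ as a constraint. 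That constraint can be tight at a vertex together with all $n$ voter rows, so your claim that a vertex of the relaxed polytope has at most $n$ tight non-box rows is false.

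\emph{Second gap: \ref{cond:decomp-witness} is not restored.} You leave this step open, and the $T$-move cannot deliver it. Once the cardinality row is relaxed, the size of the new witness may fall short of $k^r$ by a non-integer amount. Moving copies of $T$ only removes integer deficits.

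What is needed is a top-up:
\begin{itemize}
\item Raise coordinates on the surviving types until the size equals $k^r$. This adds no new types, so the bound $\lvert\mathcal R_c^r\rvert\le n$ persists.
\item Whenever a coordinate reaches $1$, move the corresponding $X_R$ to the base, stopping if some $u^r_i$ becomes zero.
\item This cannot get stuck. If all surviving coordinates reached $1$ before the size reached $k^r$, the entire residual supply would be an integral witness for $u^r$, contradicting that the residual is a hole.
\end{itemize}

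With these two repairs (a size-minimizing vertex without the cardinality row, followed by the top-up), your one-shot LP-vertex argument becomes a valid variant of the paper's iterative perturbation. Your bookkeeping of \ref{cond:decomp-base}--\ref{cond:decomp-support} under the $X_R$ and $Z_R$ moves is fine.
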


\begin{proof}
	Assume that $\lvert\mathcal{R}^r_c\rvert>n$.
	Let $x^r$ be the witness from \ref{cond:decomp-witness}. We have
	\[
		c^r_R=1 \quad\text{and}\quad 0<x_R^r<1 \qquad\text{for all } R\in\mathcal{R}^r_c.
	\]
	For each $R\in\mathcal{R}^r_c$, let $a_R\in\{0,1\}^n$ denote the incidence vector of $R$ (see also (\ref{eq:generators})). 
	Since $\lvert\mathcal{R}^r_c\rvert>n$, the vectors $(a_R)_{R\in\mathcal{R}^r_c}$ are linearly dependent in $\R^n$.
	Hence, there exists a non-zero vector
	\[
		\gamma=(\gamma_R)_{R\in\mathcal{R}^r_c}\in\R^{\lvert\mathcal{R}^r_c \rvert}
	\]
	such that $\sum_{R\in\mathcal{R}^r_c}\gamma_R a_R = 0$.
	Replacing $\gamma$ by $-\gamma$ if necessary, we may assume that $\sum_{R\in\mathcal{R}^r_c}\gamma_R \le 0$.
	For $\mu\in\R$, define a perturbed vector $(x^r)^\mu$ by
	\[
		(x^r)^\mu_R :=
		\begin{cases}
			x_R^r+\mu\gamma_R, & \text{if } R\in\mathcal{R}^r_c,\\
			0, & \text{if } R\notin\mathcal{R}^r_c.
		\end{cases}
	\]
	We claim that for a suitable $\mu>0$, the vector $(x^r)^\mu$ is again a fractional witness for $u(x)-u^0$, but satisfies $(x^r)^\mu_R\in\{0,1\}$ for some $R\in\mathcal{R}^r_c$.
	
	First, the utility vector induced by $(x^r)^\mu$ is unchanged. Indeed,
	\[
		\sum_{R\in\mathcal{R}^r_c}(x^r)^\mu_R a_R
		= \sum_{R\in\mathcal{R}^r_c}x_R^r a_R + \mu\sum_{R\in\mathcal{R}^r_c}\gamma_R a_R
		= \sum_{R\in\mathcal{R}^r_c}x_R^r a_R.
	\]
	Hence, every voter receives exactly the same utility under $(x^r)^\mu$ as under $x^r$.
	
	Second, the committee size does not increase. Since $\sum_{R\in\mathcal{R}^r_c}\gamma_R\le 0$, we have for every $\mu\ge 0$,
	\[
		\sum_{R\in\mathcal{R}}(x^r)^\mu_R
		= \sum_{R\in\mathcal{R}^r_c}x_R^r + \mu\sum_{R\in\mathcal{R}^r_c}\gamma_R
		\le \sum_{R\in\mathcal{R}^r_c}x_R^r
		\le k^r.
	\]
	
	It remains to preserve the box constraints. Because $0<x_R^r<1$ for all $R\in\mathcal{R}^r_c$ and $\gamma\neq 0$, the quantity
	\[
	\mu^* :=
	\min\!\left(
	\left\{\frac{1-x_R^r}{\gamma_R}: R \in \mathcal R^r_c,\ \gamma_R>0\right\}
	\cup
	\left\{\frac{x_R^r}{-\gamma_R}: R \in \mathcal R^r_c,\ \gamma_R<0\right\}
	\right)
	\]
	is well-defined and strictly positive. By construction, for every $\mu\in[0,\mu^*]$, we have $0\le (x^r)^\mu_R\le 1$ for all $R\in\mathcal{R}^r_c$ and for $\mu=\mu^*$ at least one coordinate satisfies $(x^r)^{\mu^*}_R\in\{0,1\}$.
	Thus, $(x^r)^{\mu^*}$ is a fractional witness for $u(x)-u^0$ with at least one coordinate in $\{0,1\}$. Replace $x^r$ by this witness. For every coordinate equal to $0$, move the corresponding $Z_R$ from the residual part to the base part. For every coordinate equal to $1$, move the corresponding $X_R$ and replace that coordinate by $0$. Conditions \ref{cond:decomp-base}--\ref{cond:decomp-support} remain true. If an $X_R$-reduction produces a zero coordinate in $u^r$, we stop.

	The perturbation may decrease the size of the witness. In that case, increase its coordinates until either its size reaches $k^r$ or some coordinate reaches $1$. The latter outcome permits another $X_R$-reduction. If no coordinate can be increased, the witness selects the entire residual supply and is integral, contradicting that the residual is a hole. Thus, unless a zero target coordinate occurs, we can restore \ref{cond:decomp-witness}. Iterating these reductions results in $\lvert\mathcal R_c^r\rvert\le n$.
\end{proof}

The following lemma gives the remaining bounds.

\begin{lemma}\label{lem:k-1<Rc}
	Unless $u_i^r=0$ for some voter $i\in N$, the decomposition satisfies \ref{cond:decomp-bounds}.
\end{lemma}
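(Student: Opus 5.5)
We work with the decomposition produced by \Cref{lem:minimal-reduction01} and \Cref{lem:Rcle5}, assuming throughout that no coordinate of $u^r$ is zero. It therefore satisfies \ref{cond:decomp-supply}--\ref{cond:decomp-witness} and $\lvert\mathcal R_c^r\rvert\le n$. Let $x^r$ be the witness from \ref{cond:decomp-witness}: it has $0<x^r_R<1$ on every $R\in\mathcal R^r_c$, it satisfies $\sum_R x^r_R=k^r$, and it fractionally covers $u(x)-u^0$. Since $u\le u(x)$, the witness $x^r$ also covers $u^r=u-u^0$ coordinatewise. We must establish three bounds: $k^r\le\lvert\mathcal R_c^r\rvert-2$, $u_i^r\le d_i^r-1$, and $u_i^r\ge1$. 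The last one is simply the standing assumption, since $u^r$ is a non-negative integer vector with no zero coordinate.

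\textbf{Bound on $k^r$.} First, $k^r$ is an integer, and $k^r=\sum_{R}x^r_R<\lvert\mathcal R_c^r\rvert$ because every summand is strictly below $1$. So $k^r\le\lvert\mathcal R_c^r\rvert-1$. Now suppose $k^r=\lvert\mathcal R_c^r\rvert-1$. Then the committee that takes the whole residual supply except one type $R^*$ is a valid integral committee of size $k^r$. We use it to derive a contradiction with the residual being a hole. For any voter $i$, we have $u^r_i\le u_i(x^r)<\lvert\{R\ni i\}\rvert=d_i^r$, the strict inequality holding because each $x^r_R<1$ and $d^r_i\ge1$ (as $u^r_i\ge1$). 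By integrality, $u^r_i\le d_i^r-1$. So dropping any single type costs voter $i$ at most one unit, and the integral committee $\mathcal R_c^r\setminus\{R^*\}$ gives $i$ utility at least $d_i^r-1\ge u_i^r$. This makes $u^r$ integral-committee-feasible, a contradiction. Hence $k^r\le\lvert\mathcal R_c^r\rvert-2\le n-2$, where the last step uses \Cref{lem:Rcle5}.

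\textbf{Bound on $u_i^r$.} The inequality $u_i^r\le d_i^r-1$ is exactly the strict-inequality argument above: fractional feasibility gives $u_i^r\le\sum_{R\ni i}x_R^r<d_i^r$, and integrality of $u_i^r$ finishes it.

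\textbf{Main obstacle.} The only delicate point is bookkeeping, namely checking that the reductions of the earlier lemmas terminate in the stated state. When a reduction produces a zero coordinate, we are in the excluded case. When the witness cannot be refilled to size $k^r$, the argument in \Cref{lem:Rcle5} already yields a contradiction. It remains to verify that $d_i^r\ge1$ whenever $u_i^r\ge1$, which follows from feasibility. This ensures that all the strict inequalities above are legitimate.
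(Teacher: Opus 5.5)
Your proof is correct and follows essentially the same route as the paper. The witness from \ref{cond:decomp-witness} gives the strict bound $u_i^r\le u_i(x^r)<d_i^r$, hence $u_i^r\le d_i^r-1$, and when $k^r=|\mathcal R_c^r|-1$, dropping a single residual type yields an integral witness, contradicting that the residual is a hole. The only cosmetic difference is that you exclude $k^r\ge|\mathcal R_c^r|$ via $\sum_R x^r_R<|\mathcal R_c^r|$ rather than by selecting the entire residual supply; both work equally well.
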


\begin{proof}
	Let $x^r$ be the witness from \ref{cond:decomp-witness}. Since every coordinate of $x^r$ is strictly less than $1$,
	\[
		u_i^r\le u_i(x)-u_i^0\le u_i(x^r)<d_i^r.
	\]
	As $u_i^r$ is integral and positive, this gives $1\le u_i^r\le d_i^r-1$ for every voter $i$.

	If $k^r\ge|\mathcal R_c^r|$, selecting the entire residual supply gives an integral witness for $u^r$, a contradiction. Now suppose that $k^r=|\mathcal R_c^r|-1$. Select all but an arbitrary candidate type $R'\in\mathcal R_c^r$. Every voter then receives utility at least $d_i^r-1\ge u_i^r$, again contradicting that the residual is a hole. Hence,
	\[
		k^r\le|\mathcal R_c^r|-2.
	\]
	Together with $|\mathcal R_c^r|\le n$ from \Cref{lem:Rcle5}, this proves \ref{cond:decomp-bounds}.
\end{proof}

\Cref{lem:minimal-reduction01,lem:Rcle5,lem:k-1<Rc} prove \Cref{thm:decomposition}. In particular, the residual parts of decompositions satisfying \ref{cond:decomp-base}--\ref{cond:decomp-bounds} belong to a finite set for fixed $n$.

\section{Core existence for $n \le 5$}\label{sec:coreexistence}

We now state our main result: for $n \le 5$, the committee election monoid turns out to be normal, which will allow us to deduce the existence of a core committee via \Cref{thm:roundfracutils}.

\begin{restatable}{theorem}{monoidnormal}\label{thm:monoidnormal}
	For $n \le 5$, the affine monoid $M_{\mathcal{R}}$ is normal.
\end{restatable}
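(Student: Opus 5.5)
We argue by contradiction using the decomposition theorem. By \Cref{prop:interpret-generators}, it suffices to show that for $n\le5$ there is no hole, i.e.\ no $(c,k,u)$ with $u$ integral and fractional-committee-feasible but not integral-committee-feasible. We may assume $u\in\Z_{\ge0}^n$, since negative coordinates can be raised to $0$ without affecting feasibility (add $S_i$ generators afterwards). We may also take a fractional witness $x$ with $\sum_R x_R=k$, padding it with extra mass if needed; if the whole supply fits, the instance is trivially integral-feasible. Suppose a hole exists, and apply \Cref{thm:decomposition} to obtain a residual $(c^r,k^r,u^r)$ that is itself a hole.

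First we dispose of the degenerate case in which some $u_i^r=0$. Then voter $i$ imposes no constraint, so we delete $i$: project each type $R$ to $R\setminus\{i\}$, merging types and summing supplies, and drop types that become empty. The result is a hole for $n-1$ voters. Both directions of feasibility transfer, since fractional and integral witnesses project and lift along the merge. The residual is not in $M_{\mathcal R}$ and its base part lies in $M_{\mathcal R}$, so the statement reduces by induction on $n$. The base cases $n\le 1$ are trivial, because with a single voter the feasible utilities are all integers up to $\min(k,\sum_R c_R)$.

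In the main case all $u_i^r\ge1$, and conditions \ref{cond:decomp-supply}--\ref{cond:decomp-bounds} hold. Then $\mathcal R_c^r$ is an antichain of at most $n\le5$ subsets of $N$, each with supply $1$, and
\[
k^r\le|\mathcal R_c^r|-2,\qquad 1\le u_i^r\le d_i^r-1 .
\]
In particular $|\mathcal R_c^r|\ge3$, and every voter lies in at least two types. This leaves a finite list of candidates, which we enumerate up to permutation of voters. For each candidate $(\mathcal R_c^r,k^r,u^r)$ we check two things. First, whether fractional feasibility holds with a strictly interior witness as in \ref{cond:decomp-witness}; this is a small LP. Second, if it holds, whether some subset of at most $k^r$ types already covers $u^r$. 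Feasibility is monotone in $u^r$ and $k^r$, so it suffices to check the maximal admissible $u^r$ for each $k^r$ and use pruning. The claim to verify is that every fractionally feasible configuration is integrally feasible when $n\le5$. This contradicts the residual being a hole, and therefore $M_{\mathcal R}$ is normal. As a cross-check, normality for $n\le5$ can be confirmed independently with \texttt{Normaliz} on the generators \eqref{eq:generators}.

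The main obstacle is this finite verification. A human-readable argument would need a case split on $|\mathcal R_c^r|\in\{3,4,5\}$ and $k^r\in\{1,2,3\}$. For $k^r=1$, fractional feasibility with $u^r\ge\mathbf 1$ forces a common voter-covering type, so a single type suffices. For larger $k^r$, a counting or LP-duality argument should produce a transversal-type selection. The difficult sub-case is $|\mathcal R_c^r|=5$ with $k^r=3$, where structures like the two-triangle example of \Cref{fig:two-triangles-intro} would be needed to create a hole. The plan is to show that five voters cannot host such an obstruction, because the antichain and degree bounds leave too few voters. If that argument becomes unwieldy, we fall back on exhaustive computer enumeration of the finitely many residuals.
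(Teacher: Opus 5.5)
Your reduction is correct and is the paper's own: raising negative coordinates, padding the witness to size $k$, applying \Cref{thm:decomposition}, and deleting a voter with $u_i^r=0$ by induction (this is \Cref{lem:zero-utility}). Together with your $k^r=1$ argument, these ingredients already settle $n\le 3$, as in \Cref{prop:normality-n-at-most-3}.

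The gap is that the proposal stops exactly where the content of the theorem begins. For residuals, ``every fractionally feasible configuration is integrally feasible'' is just a restatement of what is to be proved. You neither run the enumeration nor supply the case analysis, and your sketch (``a counting or LP-duality argument should produce a transversal-type selection'', ``too few voters'') names no mechanism that works. An executed enumeration, or the \texttt{Normaliz} computation, would be a legitimate proof, since the claim is true. As written, however, nothing excludes a minimal hole with $n=4$, $k^r=2$ or with $n=5$, $k^r\in\{2,3\}$. Your pruning rule also needs correcting. You must check the maximal \emph{fractionally feasible} integer vectors in the box $1\le u_i^r\le d_i^r-1$, not the top vector of the box, because fractional infeasibility of a large $u^r$ says nothing about smaller ones.

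The missing idea in the paper is a second inductive reduction, followed by short combinatorics.
\begin{itemize}
\item If some voter lies in every residual type, delete it, apply normality for $n-1$, pad to size $k^r$, and re-add the voter. Hence $\bigcap_{R\in\mathcal R^r_c}R=\emptyset$, which gives $u_i^r\le k^r-1$ and excludes singleton and full types (\Cref{lem:voter-deletion-bounds,lem:Rcbounds}).
\item For $k^r=2$ this forces $u^r=\mathbf{1}$, and two covering types are found directly. For $n=4$, take a triple plus any type through the fourth voter; otherwise the pair types form a $4$-cycle, which has a perfect matching.
\item For $n=5$, $k^r=2$: all-pair types contradict $5\le 2k^r=4$. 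A $4$-set is completed by any type through the missing voter. A triple $R$ is completed by a type containing both remaining voters, since otherwise their utilities sum to at most $2-x_R<2$.
\item For $n=5$, $k^r=3$, $|\mathcal R^r_c|=5$: voters with $u_i^r\le d_i^r-2$ are satisfied by any three types. So restricting (again by induction) to $N^+=\{i:u_i^r=d_i^r-1\}$ forces $N^+=N$, hence $d_i^r\le 3$.
\item Then any two types intersect, since otherwise the other three suffice. A pair type $\{1,2\}$ is impossible, because it would give $4\le u_1(x)+u_2(x)=3+x_{\{1,2\}}<4$. So $15\le\sum_R|R|=\sum_i d_i^r\le 15$, every type has size $3$, $u^r=2\cdot\mathbf{1}$, and counting gives $10\le 3k^r=9$.
\end{itemize}
These steps, or the actual computation, are what your proposal needs to become a proof.
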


Let us unpack this statement. First, it is easy to see that the group of differences $\operatorname{gp}(M_{\mathcal{R}})$ equals $\Z^{|\mathcal{R}|+1+n}$, since our generator set contains the unit vector (or its negation) for each coordinate. Thus, \Cref{thm:monoidnormal} is equivalent to saying that $M_{\mathcal{R}}$ is integrally closed in $\Z^{|\mathcal{R}|+1+n}$, and hence that $M_{\mathcal{R}} = \cone(M_{\mathcal{R}}) \cap \Z^{|\mathcal{R}|+1+n}$.

Next, let us consider $\cone(M_{\mathcal{R}})=\R_{\ge 0}\{X_R, Z_R, T, S_i\} \supseteq M_{\mathcal{R}}$ and in particular $\cone(M_{\mathcal{R}}) \cap \Z^{\lvert \mathcal R \rvert + 1 + n}$.
By \Cref{prop:interpret-generators}, the set $\cone(M_{\mathcal{R}}) \cap \Z^{|\mathcal{R}|+1+n}$ consists of instance-utility pairs $(c,k,u)$ such that $(c,k)$ is an instance and $u$ is an integer utility vector that is \emph{fractional-committee-feasible}. By normality, any such instance-utility pair $(c,k,u)$ is also a member of $M_{\mathcal{R}}$, and hence $u$ is also integral-committee-feasible by \Cref{prop:interpret-generators}.

\subsection{Proof using Normaliz}
A straightforward way to establish \Cref{thm:monoidnormal} is to use the \texttt{Normaliz} program, which is ``an open source tool for computations in affine monoids, vector configurations, lattice polytopes, and rational cones'' \citep{Normaliz,BrunsKoch2001}. To use it, one needs to write down the list of generators of the monoid as given in \Cref{def:committee-monoid}. For $n = 5$, there are $31+31+1+5=68$ generators.
 Then we call the program on this input, and in less than 0.05s, \texttt{Normaliz} reports that the ``\texttt{original monoid is integrally closed in chosen lattice}'', the chosen lattice being $\Z^{37}$.%
\footnote{This can also be tried inside a web browser, see \url{https://dominikpeters.github.io/normaliz.wasm/}.} 
\texttt{Normaliz} also confirms that the committee election monoid fails to be normal for $n = 6$, as witnessed by \Cref{fig:two-triangles-intro} and \Cref{ex:four-candidates}. %

\subsection{Explicit proof}
This way of confirming normality requires us to trust the correctness of \texttt{Normaliz} and does not provide much insight; hence, we give an explicit proof. 

Let $(c,k,u)$ be an instance-utility pair such that $u$ is fractional-committee-feasible in $(c,k)$. We need to show that $u$ is also integral-committee-feasible. 
First consider the case where there is insufficient supply for the committee size, i.e., $\sum_{R\in\mathcal R}c_R<k$. Then setting the supply vector $c$ itself forms an integral committee witnessing integral-committee-feasibility of $u$ (since $u$ is fractional-committee-feasible), and we are done. So we can consider the case with sufficient supply. Let $x$ be a witness of fractional-committee-feasibility of $u$. We may assume that $\sum_{R\in\mathcal R}x_R=k$ since a smaller witness can be topped up while still witnessing $u$.

Now, assume for a contradiction that $u$ is not integral-committee-feasible. Then the decomposition theorem (\Cref{thm:decomposition}) gives us a decomposition $(c,k,u)=(c^0,k^0,u^0)+(c^r,k^r,u^r)$.
We will show that for $n \le 5$, $u^r$ is integral-committee-feasible in $(c^r,k^r)$.
Because $u^0$ is integral-committee-feasible in $(c^0,k^0)$ by construction, this will imply that $u$ is integral-committee-feasible in $(c,k)$, which will be a contradiction. In addition, our proof will proceed by induction on $n$, so in some of the following lemmas we will assume that we have already shown normality for $n - 1$ voters.

We begin the proof of integral-committee-feasibility of $u^r$ with two degenerate cases; the arguments for those cases work for arbitrary $(c,k,u)$ but we will only apply them to $(c^r,k^r,u^r)$. The first degenerate case is for very small committees, $k^r \in \{0,1\}$.

\begin{proposition}\label{prop:k=1intfeas}
	For every instance-utility pair $(c,k,u)$ with $k \in \{0,1\}$, $u$ is integral-committee-feasible for $(c,k)$ if it is fractional-committee-feasible.
\end{proposition}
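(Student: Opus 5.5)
We split into the cases $k=0$ and $k=1$ and in each compare the fractional witness with an explicit integral one. Throughout, let $x\in\mathcal P$ be a fractional witness for $u$, so $0\le x_R\le c_R$, $\sum_R x_R\le k$, and $u_i\le u_i(x)$ for all $i\in N$.

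\emph{Case $k=0$.} Here $\sum_R x_R\le 0$ forces $x=0$, so $u_i\le 0$ for every $i$. The empty committee $x'=0\in\mathcal W$ then gives $u_i(x')=0\ge u_i$ for all $i$, so $u$ is integral-committee-feasible.

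\emph{Case $k=1$.} Since $u$ is integral, $u_i\le u_i(x)=\sum_{R\ni i}x_R\le\sum_R x_R\le 1$ gives $u_i\le 1$ for all $i$. Let $P=\{i\in N: u_i=1\}$; every other voter has $u_i\le 0$ and is satisfied by any committee. If $P=\emptyset$, the empty committee works.

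Otherwise, for each $i\in P$ we have $1=u_i\le\sum_{R\ni i}x_R\le\sum_R x_R\le 1$. Equality throughout forces $\sum_{R\ni i}x_R=\sum_R x_R$, so $x_R=0$ for every $R$ with $i\notin R$. Hence every type $R$ in the support of $x$ (which is non-empty because $\sum_R x_R=1$) contains all of $P$. Choose one such $R^*$. Then $c_{R^*}\ge x_{R^*}>0$, and $c_{R^*}$ is an integer, so $c_{R^*}\ge 1$. The committee $x'=e_{R^*}$ is therefore in $\mathcal W$, and it gives utility $1$ to every voter in $P$ and utility at least $0$ to everyone else. This witnesses integral-committee-feasibility of $u$.

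The only step that needs care is the equality-forcing argument showing that the support of $x$ lies entirely in types containing all of $P$; the remaining steps are immediate.
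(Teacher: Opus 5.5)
Your proof is correct and follows essentially the same route as the paper's: for $k=0$ the empty committee works, and for $k=1$ every voter with $u_i=1$ must lie in every type in the support of $x$, so a single candidate of one such type is an integral witness. You also make explicit two steps the paper leaves implicit, namely that the support is non-empty and that $c_{R^*}\ge 1$ by integrality of the supply.
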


\begin{proof}
	The statement is immediate for $k=0$ as every utility vector $u$ that is fractional-committee-feasible satisfies $u \le 0$, and the empty committee is a witness for integral-committee feasibility of any non-positive utility vector.
	
	Now suppose that $u = (u_i)_{i \in N} \in \Z^n$ is an integer utility vector that is fractional-committee-feasible for an instance $(c,1)$. Hence, there exists a fractional committee $x$ with $u_i(x)\ge u_i$ for all $i \in N$. 
	If $u_i \le 0$ for all $i \in N$, then the empty committee witnesses integral-committee feasibility.
	Otherwise, since $k=1$, every voter's utility is at most $1$, and hence, $u_i \le 1$ for all $i \in N$. Furthermore, $u_i=1$ implies $i \in R$ for all $R \in \mathcal{R}_c$ with $x_R>0$. Thus, there exists $R^* \in \mathcal{R}_c$ with $i \in R^*$ for all voters $i$ with $u_i=1$. Setting $x'_{R^*}=1$ and $x'_R=0$ for all $R \in \mathcal{R}_c \setminus \{R^*\}$ defines an integral committee of size $1$ with $u_i(x')\ge u_i$ for all $i \in N$, so $u$ is also integral-committee-feasible.
\end{proof}

The second degenerate case concerns voters with $u_i^r = 0$, in which case we can argue inductively.

\begin{lemma}\label{lem:zero-utility}
	Assume that the affine monoid $M_{\mathcal{R}}$ is normal for $n-1$ voters. If a non-negative integer utility vector $u$ for an instance with $n$ voters is fractional-committee-feasible and satisfies $u_i=0$ for some voter $i$, then $u$ is integral-committee-feasible.
\end{lemma}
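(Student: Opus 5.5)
The plan is to remove voter $i$ and transfer the problem to an instance with $n-1$ voters. There the induction hypothesis (normality for $n-1$ voters) applies directly.

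\textbf{Constructing the reduced instance.} Let $x$ be a fractional witness for $u$ in $(c,k)$, and let $N' = N\setminus\{i\}$. Each candidate type $R\in\mathcal R$ is mapped to $R' = R\setminus\{i\}$.
\begin{itemize}
\item Types with $R' \neq \emptyset$ are aggregated: we set $c'_{R'} = \sum_{R:\,R\setminus\{i\}=R'} c_R$ and $x'_{R'} = \sum_{R:\,R\setminus\{i\}=R'} x_R$.
\item The type $R=\{i\}$ becomes empty, and its mass is simply dropped.
\end{itemize}
We keep $k' = k$ and let $u' = (u_j)_{j\in N'}$.

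\textbf{Fractional feasibility in the reduced instance.} The vector $x'$ satisfies $0\le x'_{R'}\le c'_{R'}$ and $\sum_{R'} x'_{R'}\le k$. Every voter $j\neq i$ receives the same utility as under $x$, since $j\in R$ if and only if $j\in R'$. Hence $u'$ is fractional-committee-feasible in $(c',k')$. By the induction hypothesis and \Cref{prop:interpret-generators}, there is then an integral committee $y'$ in $(c',k')$ with $u_j(y')\ge u_j$ for all $j\in N'$.

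\textbf{Lifting back to $n$ voters.} For each $R'$ we must split the integer $y'_{R'}\le c'_{R'}$ among the original types $R$ mapping to $R'$. There are at most two such types, $R'$ and $R'\cup\{i\}$. We assign integers $y_R\le c_R$ summing to $y'_{R'}$, filling greedily, which is possible because $y'_{R'}\le \sum c_R$. The resulting $y$ is an integral committee of size at most $k$. Each $j\neq i$ again obtains $u_j(y)=u_j(y')\ge u_j$. Voter $i$ needs only $u_i=0\le u_i(y)$, which holds automatically.

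\textbf{Main obstacle.} The only delicate point is bookkeeping: one must check that merging types preserves the supply bounds in both directions. Merging sums the supplies, and splitting is always possible because the supply of the merged type is exactly the sum. The type $\{i\}$ also needs separate handling, since it contributes nothing to $N'$ and can be discarded. Nonnegativity of $u$ is used only to guarantee that voter $i$'s constraint $u_i = 0$ is trivially satisfied. Beyond this, the argument should go through without difficulty.
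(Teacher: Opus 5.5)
Your proposal is correct and follows the paper's proof: you delete voter $i$, merge the types that coincide after removing $i$ (discarding the empty type $\{i\}$), apply normality for $n-1$ voters to the projected witness, and lift the integral committee back by splitting each merged count among its original types, with voter $i$ satisfied trivially because $u_i=0$. Your explicit check of the supply bounds in the merge and split steps spells out what the paper states in a single line.
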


\begin{proof}
	Delete voter $i$ and replace every candidate type $R$ by $R\setminus\{i\}$, adding the supplies of types that become the same non-empty type and discarding the empty type. The utility vector obtained from $u$ by deleting coordinate $i$ remains fractional-committee-feasible. By normality for $n-1$ voters, it has an integral witness. Replace every selected candidate by an available original candidate from which it was obtained by deleting voter $i$. The resulting committee gives all remaining voters the required utilities, and it also satisfies voter $i$ because $u_i=0$.
\end{proof}

We will use \Cref{lem:zero-utility} to deal with cases where our decomposition has a residual part with $u_i^r = 0$ for some voter $i \in N$. When this is not the case, the decomposition theorem (\Cref{thm:decomposition}) ensures that the decomposition satisfies \ref{cond:decomp-base}--\ref{cond:decomp-bounds}. We now give two lemmas that obtain additional structure of the residual part under the inductive assumption that we already know normality for $n - 1$ voters.

\begin{lemma}\label{lem:voter-deletion-bounds}
	Assume that the affine monoid $M_{\mathcal{R}}$ is normal for $n-1$ voters. For the residual part from a decomposition satisfying \ref{cond:decomp-base}--\ref{cond:decomp-bounds}, it must hold that $\bigcap_{R \in \mathcal{R}^r_c}R=\emptyset$ and $1\le u_i^r\le k^r-1$ for all $i\in N$.
\end{lemma}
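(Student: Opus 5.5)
Both claims will be proved by contradiction. The residual $(c^r,k^r,u^r)$ is a hole: $u^r$ is fractional-committee-feasible but not integral-committee-feasible in $(c^r,k^r)$. In each case I will build an integral witness for $u^r$. The tools are the structure from \ref{cond:decomp-supply}--\ref{cond:decomp-bounds} (unit supplies, antichain, a fractional witness $x^r$ of size exactly $k^r$ with $0<x^r_R<1$) together with the inductive normality for $n-1$ voters.

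\emph{Empty intersection.} Suppose some voter $j$ lies in every $R\in\mathcal R^r_c$. Then every candidate in the residual instance contributes $1$ to voter $j$, so any committee of size $k^r$ gives $j$ utility exactly $k^r$. Voter $j$'s constraint is therefore automatically satisfied by any full-size committee, since $u^r_j\le u_j(x^r)=k^r$.
\begin{itemize}
\item Delete voter $j$ as in \Cref{lem:zero-utility}, replacing each $R$ by $R\setminus\{j\}$. If some $R=\{j\}$ becomes empty, treat it as a candidate that only fills seats.
\item The reduced utility vector is still fractional-committee-feasible via $x^r$. By normality for $n-1$ voters it has an integral witness.
\item Lift this witness back to the original types. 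Since supply is sufficient ($|\mathcal R^r_c|\ge k^r+2$), pad it to size exactly $k^r$. Padding only increases utilities, and $j$ then receives $k^r\ge u^r_j$.
\end{itemize}
This contradicts the hole property. The one point to check is the empty type $\{j\}$. Rather than discard it, I keep it as a ``seat-filler'' in the padding step; because padding is harmless, we may simply drop it in the reduced instance and add it back afterwards.

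\emph{Bound $u^r_i\le k^r-1$.} Suppose $u^r_i\ge k^r$ for some $i$. Since $u_i(x^r)\le\sum_{R\ni i}x^r_R\le k^r$, this forces $u_i(x^r)=k^r$ with all mass on types containing $i$, i.e.\ $x^r_R=0$ for $R\not\ni i$. But \ref{cond:decomp-witness} gives $x^r_R>0$ for every $R\in\mathcal R^r_c$, so every available type contains $i$. This contradicts $\bigcap_{R\in\mathcal R^r_c} R=\emptyset$ from the first part. The lower bound $u^r_i\ge1$ is already in \ref{cond:decomp-bounds}.

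The main obstacle is making the deletion argument in the first part rigorous. One must verify that fractional feasibility survives the merging of types, that the lifted committee respects unit supplies (merged types have supply equal to the sum, so at most one lifted candidate per original type is fine), and that the size can be brought to exactly $k^r$ so that voter $j$ indeed reaches $k^r$.
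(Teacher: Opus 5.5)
Your proposal is correct and follows the paper's proof essentially step for step: you delete the voter common to all residual types, invoke normality for $n-1$ voters, lift the integral witness and pad it to size $k^r$ using $|\mathcal R^r_c|\ge k^r+2$. You then derive $u_i^r\le k^r-1$ from the facts that the size-$k^r$ witness of (D5) has all coordinates positive and that some type excludes $i$. Your extra care about an emptied type $\{j\}$ is harmless but unnecessary: the antichain property together with $|\mathcal R^r_c|\ge k^r+2\ge 2$ already rules out $\{j\}\in\mathcal R^r_c$ when $j$ lies in every residual type.
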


\begin{proof}
	Let $x$ be the witness from \ref{cond:decomp-witness}.
	Suppose for contradiction that there is a voter $i \in \bigcap_{R \in \mathcal{R}^r_c}R \neq\emptyset$. 
	Thus, $u_i(x)=k^r$.
	Remove voter $i$ from the instance and let $N'=N\setminus\{i\}$, $u'$ be obtained from $u^r$ by deleting the $i$-th coordinate, and replacing each type $R\in\mathcal R^r_c$ by $R\setminus\{i\}$. 
	Denote this reduced instance by $(c',k^r,u')$.
	
	The same fractional committee $x$ witnesses fractional-committee feasibility of $u'$ in $(c',k^r)$ since removing voter $i$ does not change the utilities of the remaining voters.
	As \Cref{thm:monoidnormal} holds for $n-1$ voters, there exists an integral committee $x'$ for $(c',k^r)$ realizing $u'$.
	
	If $x'$ has size less than $k^r$, add arbitrary remaining original candidate types until the committee has size $k^r$; this is possible by \ref{cond:decomp-bounds}.
	
	Interpret $x'$ as a committee in the original instance by re-adding voter $i$ to every selected candidate type from which it has been excluded. 
	Since $i$ belongs to every candidate type in $\mathcal R^r_c$, voter $i$ receives utility $u_i(x')=\sum_{R\in\mathcal R^r_c} x'_R = k^r$.
	Hence, $x'$ realizes $u^r$ in the original instance. 
	This contradicts the fact that $u^r$ is not integral-committee-feasible in $(c^r,k^r)$. 
	Hence, $\bigcap_{R \in \mathcal{R}^r_c}R=\emptyset$. Since $x$ has size exactly $k^r$, every coordinate $x_R$ is positive, and some residual type excludes each voter $i$, we have $u_i(x)<k^r$. Together with \ref{cond:decomp-feasibility}, this gives $u_i^r\le k^r-1$. The lower bound follows from \ref{cond:decomp-bounds}.
\end{proof}

\begin{lemma}\label{lem:Rcbounds}
	Assume that the affine monoid $M_{\mathcal{R}}$ is normal for $n-1$ voters. For the residual part from a decomposition satisfying \ref{cond:decomp-base}--\ref{cond:decomp-bounds}, it must hold that $2\le d_i^r\le\lvert\mathcal R_c^r\rvert-1$. Moreover, $c_i^r=0$ for all $i\in N$ and $c_N^r=0$.
\end{lemma}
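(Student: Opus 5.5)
Here $c_i^r$ denotes the supply of the singleton type $\{i\}$ and $c_N^r$ the supply of the full type $N$. The plan is to derive all four claims from \ref{cond:decomp-bounds} and \Cref{lem:voter-deletion-bounds}, with a voter-deletion reduction in the style of \Cref{lem:zero-utility} for the singleton case.

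The two easy claims come first. The lower bound $d_i^r\ge 2$ is immediate from \ref{cond:decomp-bounds}: we have $1\le u_i^r\le d_i^r-1$. For $c_N^r=0$, suppose $N\in\mathcal R_c^r$. Then $N$ contains every other type, so the antichain property \ref{cond:decomp-supply} forces $\mathcal R_c^r=\{N\}$. This contradicts $k^r\le|\mathcal R_c^r|-2$ together with $k^r\ge 0$.

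For the upper bound $d_i^r\le|\mathcal R_c^r|-1$, suppose $d_i^r=|\mathcal R_c^r|$. Then voter $i$ lies in every available residual type, so $i\in\bigcap_{R\in\mathcal R_c^r}R$. This contradicts \Cref{lem:voter-deletion-bounds}, which states that the intersection is empty.

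The remaining claim, $c_{\{i\}}^r=0$, needs real work. Suppose $\{i\}\in\mathcal R_c^r$. By the antichain property, no other available type contains $i$, so $d_i^r=1$. This contradicts $d_i^r\ge 2$, which we already proved. So this case also closes directly, although it depends on the lower bound being established first.

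I expect no real obstacle. The only point needing care is the order: the singleton claim relies on $d_i^r\ge2$, which in turn comes from $u_i^r\ge1$. We need $u_i^r\ge1$ because the lemma assumes a decomposition satisfying all of \ref{cond:decomp-base}--\ref{cond:decomp-bounds}, hence no zero coordinate. If the intended reading of $c_i^r$ were a per-voter quantity instead of the singleton type, I would fall back on deleting voter $i$ via normality for $n-1$ voters, as in \Cref{lem:zero-utility}.
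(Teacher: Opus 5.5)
Your proof is correct and matches the paper's argument. The upper bound on $d_i^r$ comes from the empty intersection in \Cref{lem:voter-deletion-bounds}. The lower bound comes from $u_i^r\ge 1$: you cite \ref{cond:decomp-bounds} directly, while the paper goes through \ref{cond:decomp-witness}. The two supply claims come from the antichain property \ref{cond:decomp-supply}, combined with $d_i^r\ge 2$ for singletons and $k^r\le|\mathcal R_c^r|-2$ for $N$; the paper's one-line antichain remark leaves that combination implicit, so your write-up fills in what it omits. The voter-deletion fallback you mention is never needed.
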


\begin{proof}
	By \Cref{lem:voter-deletion-bounds}, $\bigcap_{R \in \mathcal{R}^r_c}R=\emptyset$ and thus, $d^r_i \le \lvert \mathcal{R}^r_c \rvert -1$ for all $i \in N$. 
	Also by \Cref{lem:voter-deletion-bounds}, $u^r_i\ge1$, and thus $2\le d_i^r$ by \ref{cond:decomp-witness}. As $\mathcal R_c^r$ forms an antichain by \ref{cond:decomp-supply}, we have $c_N^r=0$ and $c_i^r=0$ for all $i\in N$.
\end{proof}

We can use these lemmas to establish normality for each fixed $n \le 4$.

\begin{proposition}\label{prop:normality-n-at-most-3}
	The affine monoid $M_{\mathcal{R}}$ is normal for $n \le 3$.
\end{proposition}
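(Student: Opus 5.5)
Proof proposal. We argue by induction on $n$, showing that no residual part as in the decomposition theorem can exist for $n\le 3$. The base case $n=0$ is trivial, since then there are no voters and every $u$ is feasible. Alternatively, one can note that for $n\le 1$ the only nonempty antichain has a single type.

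So fix $n\in\{1,2,3\}$ and assume normality for $n-1$ voters. Suppose, for contradiction, that some integer $u$ is fractional-committee-feasible but not integral-committee-feasible in $(c,k)$. As in the explicit proof, we may assume sufficient supply and a witness $x$ with $\sum_R x_R=k$. Apply \Cref{thm:decomposition} to obtain a residual $(c^r,k^r,u^r)$ that is a hole.

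If $u^r_i=0$ for some $i$, then \Cref{lem:zero-utility} (applicable by the induction hypothesis) shows that $u^r$ is integral-committee-feasible in $(c^r,k^r)$. Together with \ref{cond:decomp-base} this makes $u$ integral-committee-feasible, a contradiction. We must also handle negative entries of $u^r$ in this reduction. If such entries occur, first raise them to $0$, which preserves fractional feasibility because $u^r$ is only a lower bound. This is the one point to double-check: that the decomposition keeps $u^r\ge 0$, which the paper asserts it does.

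Otherwise conditions \ref{cond:decomp-base}--\ref{cond:decomp-bounds} hold, and we combine the numerical constraints. By \Cref{lem:voter-deletion-bounds}, $1\le u^r_i\le k^r-1$, so $k^r\ge 2$. By \ref{cond:decomp-bounds}, $k^r\le |\mathcal R^r_c|-2\le n-2$. Hence $n\ge 4$, which contradicts $n\le 3$.

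This counting argument is essentially the whole proof; no case analysis over specific antichains is needed for $n\le 3$. The only subtle point is to verify that the lemmas are invoked under the correct inductive hypothesis. \Cref{lem:voter-deletion-bounds} and \Cref{lem:zero-utility} require normality for $n-1$ voters, which is available by induction starting from $n=0$ (or $n=1$). For $n=1$ one can also argue directly: the only candidate type is $\{1\}$, and rounding $\min(c,k)$ down gives integrality.
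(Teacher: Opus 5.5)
Your proof is correct and follows the paper's overall structure: induction on $n$, the decomposition theorem, and \Cref{lem:zero-utility} when the residual has a zero coordinate. The real difference is in how you close the case with no zero coordinate.

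\begin{itemize}
\item \textbf{The paper's closing step.} It uses \ref{cond:decomp-bounds} to get $k^r\le n-2\le 1$. It then appeals to \Cref{prop:k=1intfeas}, an elementary fact about committees of size at most one that needs no induction.
\item \textbf{Your closing step.} You invoke \Cref{lem:voter-deletion-bounds} to get $1\le u_i^r\le k^r-1$, hence $k^r\ge 2$, which contradicts $k^r\le n-2$ numerically.
\end{itemize}

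Your version is shorter and shows that the non-degenerate case is simply empty for $n\le 3$. The cost is that it uses the induction hypothesis a second time, since \Cref{lem:voter-deletion-bounds} itself needs normality for $n-1$ voters plus a lifting argument. The paper's closing step is self-contained.

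\textbf{Base case.} You start the induction at $n=0$, where the monoid is just $\Z_{\ge 0}$ generated by $T$ and is trivially normal, so that \Cref{lem:zero-utility} applies at $n=1$. The paper instead handles $n=1$ directly with the empty committee. Both work.

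\textbf{Negative entries.} Your concern is legitimate, but it is cleanest to dispose of it before decomposing rather than inside the zero-utility case. Replacing $u$ by $\max(u,0)$ preserves fractional feasibility, because utilities are non-negative. Integral feasibility of $\max(u,0)$ then implies integral feasibility of $u$. After this, the decomposition keeps $u^r\ge 0$ by construction, which is exactly what \Cref{thm:decomposition} (stated for $u\in\Z^n_{\ge 0}$) tacitly assumes.
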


\begin{proof}
	We proceed by induction on $n\in\{1,2,3\}$. Suppose that a hole $(c,k,u)$ exists in $M_{\mathcal{R}}$, and take its decomposition from \Cref{thm:decomposition}. If $u_i^r = 0$ for some $i \in N$, then for $n=1$ the empty committee is an integral witness, while for $n\in\{2,3\}$, \Cref{lem:zero-utility} and the induction hypothesis give a contradiction. Otherwise, the decomposition satisfies \ref{cond:decomp-bounds} which implies $k^r\le n-2\le1$. Then \Cref{prop:k=1intfeas} gives integral-committee-feasibility of the residual part, contradicting that $(c,k,u)$ is a hole.
\end{proof}

\begin{proposition}\label{prop:normality-n4}
	The affine monoid $M_{\mathcal{R}}$ is normal for $n=4$.
\end{proposition}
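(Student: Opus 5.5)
We argue by contradiction using the decomposition theorem, in the same way as for \Cref{prop:normality-n-at-most-3}. Suppose $(c,k,u)$ is a hole for $n=4$. As in the explicit proof above, we may assume sufficient supply and a fractional witness $x$ with $\sum_R x_R=k$. Take the decomposition $(c,k,u)=(c^0,k^0,u^0)+(c^r,k^r,u^r)$ from \Cref{thm:decomposition}.

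If $u_i^r=0$ for some voter $i$, then \Cref{lem:zero-utility}, combined with normality for $n=3$ (\Cref{prop:normality-n-at-most-3}), shows that $u^r$ is integral-committee-feasible in $(c^r,k^r)$. Since $u^0$ is integral-committee-feasible in $(c^0,k^0)$, so is $u$ in $(c,k)$, which is a contradiction. Hence \ref{cond:decomp-base}--\ref{cond:decomp-bounds} hold, and so do \Cref{lem:voter-deletion-bounds,lem:Rcbounds}, since normality for $n-1=3$ is known.

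The main step is to pin down the residual completely. By \ref{cond:decomp-bounds}, $k^r\le n-2=2$, and the case $k^r\le 1$ is excluded by \Cref{prop:k=1intfeas}. So $k^r=2$. Then \Cref{lem:voter-deletion-bounds} gives $1\le u_i^r\le k^r-1=1$, i.e.\ $u^r=(1,1,1,1)$. Moreover, \ref{cond:decomp-bounds} gives $|\mathcal R_c^r|\ge k^r+2=4$, and \Cref{lem:Rcle5} gives $|\mathcal R_c^r|\le n=4$, so there are exactly four residual types, each with supply $1$ by \ref{cond:decomp-supply}. By \Cref{lem:Rcbounds}, no residual type is a singleton or equal to $N$, so every type has size $2$ or $3$. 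In addition, every voter lies in at least two types, and the types form an antichain.

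It then remains to show that two of these four types have union $N$; selecting those two gives an integral committee of size $2=k^r$ realizing $u^r=(1,1,1,1)$, contradicting that the residual is a hole. We split into two cases.
\begin{itemize}
\item If some type has size $3$, say $\{1,2,3\}$, then voter $4$ lies in some residual type (since $d_4^r\ge 2$). That type differs from $\{1,2,3\}$ and together with it covers $N$.
\item Otherwise all four types are edges on the four voters, and every vertex has degree at least $2$. Since the degree sum is $2\cdot 4=8$, every vertex has degree exactly $2$. The only $2$-regular graph on four vertices is the $4$-cycle, which has a perfect matching; its two edges cover $N$.
\end{itemize}

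I expect the only delicate point to be verifying that all the lemmas' hypotheses are available, in particular that the inductive hypothesis for $n-1=3$ is supplied by \Cref{prop:normality-n-at-most-3}. After that, the case analysis is short.
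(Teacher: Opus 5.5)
Your proof is correct and follows essentially the same route as the paper's: rule out a zero residual coordinate via \Cref{lem:zero-utility} and normality for three voters, use \ref{cond:decomp-bounds} with \Cref{prop:k=1intfeas} to force $k^r=2$ and $|\mathcal R_c^r|=4$, and use \Cref{lem:voter-deletion-bounds} to force $u^r=(1,1,1,1)$. The case split (a triple plus a type containing voter $4$, or the $4$-cycle with its perfect matching) is then the same; your degree-sum justification that the graph is $2$-regular just makes explicit a step the paper leaves implicit.
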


\begin{proof}
	Assume for a contradiction that a hole $(c,k,u)$ exists in $M_{\mathcal{R}}$, and take its decomposition from \Cref{thm:decomposition}. If $u_i^r = 0$ for some $i \in N$, then the residual part is integral-committee-feasible by \Cref{lem:zero-utility} (applicable since we have normality for $n-1=3$ voters by \Cref{prop:normality-n-at-most-3}), a contradiction.
	Otherwise, the decomposition satisfies \ref{cond:decomp-base}--\ref{cond:decomp-bounds}.
	By \ref{cond:decomp-bounds} and \Cref{prop:k=1intfeas}, $\lvert \mathcal{R}^r_c \rvert=4$ and $k^r=2$ have to hold.
	Moreover, $u^r_i=1$ must hold for all $i \in N$ by \Cref{lem:voter-deletion-bounds}.
	By \Cref{lem:Rcbounds}, $c^r_i=0$ for all $i \in N$, $c^r_{1234}=0$, and each voter must belong to at least two available candidate types.
	
	If $\mathcal{R}^r_c$ contains a triple of voters, say $R=\{1,2,3\}$, then Voter $4$ belongs to some other available candidate type $R'$. 
	These two candidate types cover all four voters, so selecting them as an integral committee realizes $u^r=(1,1,1,1)$.
	
	So we are left with instances where $\mathcal{R}^r_c$ contains no triple of voters. 
	Hence, all four available types are pairs. 
	Construct a graph with four vertices, corresponding to the voters, with an edge between two vertices $i, j \in N$ if $\{i, j\} \in \mathcal R^r_c$.
	By \Cref{lem:Rcbounds}, $d^r_i \geq 2$, i.e., every vertex in the graph has degree at least $2$. 
	Therefore, it must be a $4$-cycle. 
	Two opposite edges of this cycle cover all four vertices, meaning that choosing the two respective candidates gives $u^r_i = 1$ for all $i \in N$.
	This contradicts our assumption that $u^r$ is not integral-committee-feasible in $(c^r,k^r)$.
\end{proof}

Finally, we turn to the case of five voters. We begin by ruling out the possibility of having a residual $(c^r,2,u^r)$ where $u^r$ is not integral-committee-feasible in $(c^r,2)$.

\begin{lemma}\label{lem:n5-k2-impossible}
	For every residual hole from a decomposition satisfying \ref{cond:decomp-base}--\ref{cond:decomp-bounds} with $n=5$ and $k^r=2$, $u^r$ is integral-committee-feasible in $(c^r,k^r)$.
\end{lemma}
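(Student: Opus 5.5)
Since $M_{\mathcal R}$ is normal for $n=4$ by \Cref{prop:normality-n4}, we may apply \Cref{lem:voter-deletion-bounds}. With $k^r=2$ it gives $1\le u^r_i\le k^r-1=1$, so $u^r=(1,1,1,1,1)$. The claim therefore reduces to a covering statement: there are two available types $R,R'\in\mathcal R^r_c$ with $R\cup R'=N$. Selecting one candidate of each is an integral committee of size $2$ that realizes $u^r$, since $c^r_R=1$ for all available types by \ref{cond:decomp-supply}.

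To find such a pair, I would use the fractional witness $x^r$ from \ref{cond:decomp-witness}. It satisfies $\sum_R x^r_R=2$, $0<x^r_R<1$ on every available type, and $u_i(x^r)\ge 1$ for every voter (this follows from \ref{cond:decomp-feasibility}, since $u(x)-u^0\ge u^r$). Summing utilities over the voters gives $\sum_R x^r_R\lvert R\rvert=\sum_i u_i(x^r)\ge 5$. With total mass $2$, the average size of a type weighted by $x^r$ is at least $5/2$, so some available type has size at least $3$. By \Cref{lem:Rcbounds}, $N$ itself is not available, so this type has size $3$ or $4$.

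If some available type has size $4$, say $N\setminus\{j\}$, then voter $j$ has $d^r_j\ge 2$ by \Cref{lem:Rcbounds}, so some available type contains $j$, and the two types together cover $N$. Otherwise some available type has size $3$, say $R=\{1,2,3\}$. If some available type contains both $4$ and $5$, it covers $N$ together with $R$. If no available type contains both, let $A$ be the available types containing voter $4$ and $B$ those containing voter $5$. These families are disjoint, and both avoid $R$. Feasibility gives $x^r(A)\ge1$ and $x^r(B)\ge1$, so
\[
2=\sum_{R''}x^r_{R''}\ \ge\ x^r(A)+x^r(B)+x^r_R\ \ge\ 2+x^r_R,
\]
which forces $x^r_R=0$. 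This contradicts $x^r_R>0$ from \ref{cond:decomp-witness}.

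The only step needing care is this last case. The key is that the strict positivity of every coordinate of the residual witness forbids the weight from concentrating entirely on types through $4$ and through $5$. Everything else is direct bookkeeping from the decomposition properties.
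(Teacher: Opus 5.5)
Your proposal is correct and follows essentially the same route as the paper. You use \Cref{lem:voter-deletion-bounds} to get $u^r=(1,1,1,1,1)$, the weighted size count to force a type of size $3$ or $4$, a covering pair in the size-$4$ case, and the bound $u_4+u_5\le 2-x_R<2$ in the size-$3$ case. The only cosmetic differences are that you phrase the count as a weighted average and invoke $d_j^r\ge2$ instead of $u_j^r=1$ for the size-$4$ case.
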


\begin{proof}
	Assume for contradiction that there exists such a residual $(c^r,2,u^r)$ with five voters, and let $x$ be the witness from \ref{cond:decomp-witness}.
	By \ref{cond:decomp-bounds}, $\lvert \mathcal{R}^r_c \rvert \in \{4,5\}$.
	Furthermore, $u^r=(1,1,1,1,1)$ by \Cref{lem:voter-deletion-bounds}.
	
	We again show that two available candidate types cover all five voters, and thus, they form a witness for integral-committee feasibility of $(1,1,1,1,1)$. 
	By \Cref{lem:Rcbounds}, $c^r_{i}=0$ for all $i \in N$ and $c^r_{12345}=0$.
	
	If some $R\in\mathcal{R}^r_c$ contains four voters, then the unique voter $i$ outside $R$ belongs to some other available candidate type $R'$ as $u^r_i=1$. 
	Thus, $R$ and $R'$ cover all five voters.
	
	If all available candidate types have size $2$, then
	\[
	5 = \sum_{i \in N} u^r_i
	\le \sum_{i \in N} \sum_{R\ni i}x_R
	= \sum_{R \in \mathcal{R}^r_c}|R|x_R
	\le 2\sum_{R \in \mathcal{R}^r_c} x_R
	\le 4,
	\]
	a contradiction.
	
	Thus, some available candidate type $R$ contains exactly three voters, w.l.o.g., $R=\{1,2,3\}$. If there is an available candidate type $R'$ containing both Voters $4$ and $5$, then $R$ and $R'$ cover all voters. 
	Otherwise, no available candidate type contains both Voters $4$ and $5$, and therefore
	\[
		u_4(x)+u_5(x)
		\le \sum_{R' \in \mathcal{R}^r_c \setminus \{R\}}x_{R'}
		\le 2-x_R
	<2,
	\]
	contradicting $u_4(x)\ge 1$ and $u_5(x)\ge 1$.
	We have exhausted all possible constructions and hence, no such residual exists.
\end{proof}

\begin{proposition}\label{prop:normality-n5}
	The affine monoid $M_{\mathcal{R}}$ is normal for $n=5$.
\end{proposition}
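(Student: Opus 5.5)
The plan is to mirror the proofs of \Cref{prop:normality-n-at-most-3,prop:normality-n4} and argue by contradiction. We use induction from normality for $n=4$ (\Cref{prop:normality-n4}). Suppose a hole $(c,k,u)$ exists and take its decomposition from \Cref{thm:decomposition}. If $u^r_i=0$ for some $i$, then \Cref{lem:zero-utility} together with the $n=4$ case gives integral-committee-feasibility of the residual, a contradiction. Otherwise \ref{cond:decomp-base}--\ref{cond:decomp-bounds} hold, so $k^r\le \lvert\mathcal R^r_c\rvert-2\le 3$. The cases $k^r\le 1$ are excluded by \Cref{prop:k=1intfeas}, and $k^r=2$ by \Cref{lem:n5-k2-impossible}. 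It therefore remains to rule out $k^r=3$, which forces $\lvert\mathcal R^r_c\rvert=5$. The five types form an antichain, each with supply $1$ and witness value $0<x_R<1$ summing to $3$. By \Cref{lem:voter-deletion-bounds,lem:Rcbounds} we have $u^r_i\in\{1,2\}$ and $2\le d^r_i\le 4$.

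For $k^r=3$ I would pass to the complement. An integral committee of size $3$ is the same as a choice of two types to \emph{drop}, and it realizes $u^r$ iff every voter $i$ lies in at most $s_i:=d^r_i-u^r_i$ dropped types. Set $y_R=1-x_R\in(0,1)$, so $\sum_R y_R=2$. Fractional feasibility then reads $\sum_{R\ni i}y_R\le d^r_i-u_i(x)\le s_i$, and $s_i\ge 1$ by \ref{cond:decomp-bounds}. Voters with $s_i\ge 2$ never obstruct a choice of two dropped types, so only \emph{critical} voters with $s_i=1$ matter. For such a voter, the set $K_i=\{R: i\in R\}$ of types containing it satisfies $y(K_i)\le 1$.

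Now form the conflict graph on the five types, joining two types when some critical voter lies in both. A non-adjacent pair gives an integral witness, so the hole forces this graph to be $K_5$, with its edges covered by the cliques $K_i$ of at most five critical voters. If some $K_i$ has size $4$, the remaining type $R$ satisfies $y_R=2-y(K_i)\ge 1$, contradicting $y_R<1$. Hence all $K_i$ have size $2$ or $3$, and at most five such cliques must cover all ten edges of $K_5$. They must also satisfy $y(K_i)\le1$; for a triangle this is equivalent to the complementary pair having $y$-weight at least $1$.

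The main obstacle is to finish this covering argument. By counting, five triangles are needed: a triangle covers three edges, so with at most five voters covering ten edges essentially every voter is critical with $d^r_i=3$. I would then exploit the weight conditions. Summing $y(K_i)\le 1$ over the five triangles and using that each type lies in exactly three of them (by double counting on the required coverage) should give $3\cdot 2\le 5$, a contradiction. The edge-covered cases with mixed sizes would be handled similarly by a short enumeration of degree patterns. If the clean double count fails, I would fall back to a finite case check of five-vertex clique covers of $K_5$ with these weight constraints.
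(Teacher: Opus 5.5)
Your reduction to $k^r=3$, the drop-two reformulation with $y_R=1-x_R$, the conflict graph being $K_5$, and the exclusion of cliques of size $4$ are all correct. The gap is in the last step, which is where the proof actually happens, and two of your claims there are false as stated. First, counting does not force five triangles. With $t$ triangles and $e$ two-element cliques you only get $3t+e\ge 10$ and $t+e\le 5$, hence $t\ge 3$. For instance, $\{R_1,R_2,R_3\},\{R_1,R_4,R_5\},\{R_2,R_4,R_5\},\{R_3,R_4\},\{R_3,R_5\}$ covers all ten edges of $K_5$. Second, even with five triangles, double counting does not give that each type lies in exactly three of them. The cover $\{R,A,B\},\{R,C,D\},\{A,B,C\},\{A,B,D\},\{A,C,D\}$ has $R$ in only two triangles. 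So $3\cdot 2\le 5$ is not justified, and the remaining cases are left to an enumeration you have not carried out.

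The missing ingredient is a local weight argument at each type, which disposes of all covers at once. Fix a type $R$. The critical cliques containing $R$ must cover the four edges at $R$, so their union is all five types. A single clique of size at most $3$ cannot achieve this. Two cliques $K_i,K_j$ achieve it only if both are triangles with $K_i\cap K_j=\{R\}$ and union equal to all five types. In that case $y(K_i)+y(K_j)=2+y_R>2$, contradicting $y(K_i),y(K_j)\le 1$. Hence every type lies in at least three cliques (counted with multiplicity), so $\sum_i|K_i|\ge 15$. Since there are at most five critical voters and $|K_i|\le 3$, there are exactly five triangles and every type lies in exactly three of them. This gives $5\ge\sum_i y(K_i)=3\sum_R y_R=6$, a contradiction.

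This is the complement-side version of the paper's argument: its claim that no type is approved by exactly two voters, followed by $10\le 9$. With this fix your route is valid and slightly leaner than the paper's. Non-critical voters can never block a drop-two committee, so you do not need the paper's step establishing $u^r_i=d^r_i-1$ for all $i$ by projecting onto $N^+$ and invoking normality for fewer voters.
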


\begin{proof}
	Assume for a contradiction that a hole $(c,k,u)$ exists in $M_{\mathcal{R}}$, and take its decomposition from \Cref{thm:decomposition}. As before, if $u_i^r = 0$ for some $i \in N$, then the residual part is integral-committee-feasible by \Cref{lem:zero-utility} (applicable since we have normality for $n-1=4$ voters by \Cref{prop:normality-n4}), a contradiction.
	
	By \Cref{lem:n5-k2-impossible}, $k^r > 2$. Together with \ref{cond:decomp-bounds}, this implies $\lvert \mathcal{R}^r_c \rvert=5$ and $k^r=3$. Then, $u^r_i \in \{1,2\}$ for all $i \in N$ by \Cref{lem:voter-deletion-bounds}, and $u_i^r\le d_i^r-1$ for all $i\in N$ by \ref{cond:decomp-bounds}.
	Let $x$ be the witness from \ref{cond:decomp-witness}.
	
	Denote by $N^+=\{i\in N:u^r_i=d^r_i-1\}$ the set of voters that get maximal possible utility in $u^r$. Note that for all other voters, $u^r_i \le d^r_i-2$ and thus, choosing any three sets in $\mathcal{R}^r_c$ yields utility at least $u^r_i$ to each of them. In particular, if $N^+ = \emptyset$, this already implies integral-committee-feasibility of $u^r$ and we are done.
	
	Otherwise, we claim that $N^+=N$. If instead $N^+ \neq N$, we could reduce the instance by excluding all voters not in $N^+$. Let $N'=N^+$, let $u'$ be obtained from $u^r$ by deleting all entries of voters not in $N^+$, and replace each candidate type $R\in\mathcal R^r_c$ by one candidate type where all voters not in $N^+$ are excluded. Denote that instance with $\lvert N^+ \rvert$ voters by $(c',3,u')$.
	
	The same fractional committee $x$ witnesses fractional-committee feasibility of $u'$ in $(c',3)$ since removing some voters does not change the utilities of the remaining voters. As \Cref{thm:monoidnormal} holds for $\lvert N^+ \rvert$ voters by \Cref{prop:normality-n-at-most-3,prop:normality-n4}, there exists an integral committee $x'$ for $(c',3)$ realizing $u'$.
	If $x'$ has size less than $3$, add arbitrary remaining original candidate types until the committee has size $3$; this is possible because $\lvert\mathcal R^r_c\rvert=5$.
	
	Interpret $x'$ as a committee in the original instance by re-adding voters from $N \setminus N^+$ to every selected candidate type from which they have been excluded. Since $x'$ selects three of the five available candidate types, each voter $i \in N\setminus N^+$ receives utility at least $d^r_i-2 \ge u^r_i$.
	Hence, $x'$ realizes $u^r$ in the original instance. This would contradict the fact that $u^r$ is not integral-committee-feasible in $(c^r,3)$.
	
	Therefore, $u^r_i=d^r_i-1$ must hold for all $i \in N$. As $u^r_i \in \{1,2\}$, this also implies that no voter can appear in more than three sets in $\mathcal{R}^r_c$.
	
	Second, we claim that for every pair of sets $R,R' \in \mathcal{R}^r_c$, there exists a voter $i \in R \cap R'$. Otherwise, $\mathcal{R}^r_c \setminus \{R,R'\}$ would form a committee $x'$ of size $3$ with $u_i(x') \ge d^r_i-1 =u^r_i$ for all $i \in N$ which would contradict the fact that $u^r$ is not integral-committee-feasible. 
	
	Third, we claim that for every pair of sets $R,R' \in \mathcal{R}^r_c$, $x_R+x_{R'}\ge1$. Otherwise, there would be a pair $R,R' \in \mathcal{R}^r_c$ with $x_R+x_{R'}<1$. By the previous claim, there would also exist a voter $i$ with $i\in R\cap R'$. However, this would imply $u_i^r\le u_i(x)<d_i^r-1$ as $x_R+x_{R'}<1$, which would contradict $u^r_i=d^r_i-1$.
	
	Fourth, we claim that $\mathcal{R}^r_c$ cannot contain a pair of voters. Otherwise, assuming w.l.o.g. $c^r_{12}=1$ and denoting by $R_{12}$ the respective candidate type, the second claim and the fact that no voter is contained in more than three sets imply that each of the two voters must be contained in two other sets while no other set can contain both of them. Thus, $d^r_1=d^r_2=3$ and $u^r_1=u^r_2=2$, so 
	\[
	4=u^r_1+u^r_2 \le u_1(x)+u_2(x)=\sum_{R \in \mathcal{R}^r_c}x_R +x_{R_{12}}=3+x_{R_{12}}<4,
	\]
	where the first equality after $u_1(x)+u_2(x)$ is implied by the fact that each set apart from $R_{12}$ contains either voter $1$ or voter $2$. This is a contradiction as $x_{R_{12}}<1$ by \ref{cond:decomp-witness}.
	
	Therefore, $\sum_{R \in \mathcal{R}^r_c}\lvert R \rvert \ge 5 \cdot 3=15$, and together with the fact that no voter can be contained in more than three sets, we can conclude that each voter is contained in exactly three sets in $\mathcal{R}^r_c$, i.e., $d^r_i=3$ for all $i \in N$. Moreover, $u^r_i=2$ for all $i \in N$ by the second claim and 
	\[
		10=\sum_{i \in N}u^r_i \le \sum_{i \in N}\sum_{R \ni i}x_R=\sum_{R \in \mathcal{R}^r_c}\lvert R \rvert x_R=
		3\sum_{R \in \mathcal{R}^r_c}x_R=9,
	\]
	a contradiction.
	
	Hence, $u^r$ must be integral-committee-feasible in $(c^r,k^r)$ and thus $u$ must be integral-committee-feasible in $(c,k)$. 
\end{proof}

Combining \Cref{prop:normality-n-at-most-3,prop:normality-n4,prop:normality-n5} proves \Cref{thm:monoidnormal}, showing that the committee election monoid is normal for $n \leq 5$.

Hence, for every instance $(c,k,b)$ with up to five voters and every fractional core committee $x$ (which exists by \Cref{thm:fraccoreexists}), the rounded-down utility vector $\lfloor u(x)\rfloor$ is integral-committee-feasible, and any witness $x'$ lies in the core by \Cref{thm:roundfracutils}. This proves the following theorem.

\begin{theorem}\label{thm:n5coreexists}
	For $n \le 5$, the core is non-empty for every instance $(c,k,b)$.
\end{theorem}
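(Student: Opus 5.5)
The plan is to chain three earlier results: \Cref{thm:fraccoreexists}, \Cref{thm:monoidnormal} (via \Cref{prop:interpret-generators}), and \Cref{thm:roundfracutils}.

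Fix an instance $(c,k,b)$ with $n\le 5$ voters. By \Cref{thm:fraccoreexists}, the fractional core is non-empty, so we can pick a fractional committee $x\in\mathcal P$ in it.

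Next, set $u'=\lfloor u(x)\rfloor\in\Z^n$. Since $u'\le u(x)$ coordinatewise, $x$ witnesses that $u'$ is fractional-committee-feasible in $(c,k)$. By the second part of \Cref{prop:interpret-generators}, the triple $(c,k,u')$ therefore lies in $\cone(M_{\mathcal R})\cap\Z^{|\mathcal R|+1+n}$.

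By \Cref{thm:monoidnormal}, $M_{\mathcal R}$ is normal for $n\le5$. Its group of differences is all of $\Z^{|\mathcal R|+1+n}$, since the generators include the unit vectors $Z_R$ and $T$ and the negated unit vectors $S_i$. Hence $(c,k,u')\in M_{\mathcal R}$, and the first part of \Cref{prop:interpret-generators} gives an integral committee $x'\in\mathcal W$ with $u_i(x')\ge u'_i$ for all $i$.

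Finally, \Cref{thm:roundfracutils} applies directly and shows that $x'$ is in the core.

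None of these steps is hard; the real work is already done in \Cref{thm:monoidnormal}. The only point needing care is that the monoid ignores budgets, so normality holds independently of how $b$ is distributed. The budgets enter only through the fractional core, whose existence \Cref{thm:fraccoreexists} guarantees for arbitrary positive rational $b$.
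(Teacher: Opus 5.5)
Your proposal is correct and follows the paper's argument exactly: take a fractional core committee from \Cref{thm:fraccoreexists}, note that $\lfloor u(x)\rfloor$ is integral-committee-feasible by normality (\Cref{thm:monoidnormal} via \Cref{prop:interpret-generators}), and conclude with \Cref{thm:roundfracutils}. Your remarks that $\operatorname{gp}(M_{\mathcal R})=\Z^{|\mathcal R|+1+n}$ and that the monoid does not depend on how $b$ is distributed are accurate; the paper states both when setting up the monoid.
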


Normality of the committee election monoid for $n \leq 5$ also implies that every fractional committee can be rounded to an integral committee while preserving (the integer part of) voter utilities and while only rounding the fractions of the candidates in the fractional committee up or down.

\begin{proposition}\label{prop:roundingcommittees}
	Let $n \le 5$ and let $(c,k,b)$ be an instance. Suppose $x \in \mathcal P$ is a fractional committee. Then there exists an integral committee $x' \in \mathcal{W}$ such that $u_i(x') \ge \lfloor u_i(x) \rfloor$ for all $i \in N$ and $\lfloor x_R \rfloor \le x'_R \le \lceil x_R \rceil$ for all $R \in \mathcal{R}$.
\end{proposition}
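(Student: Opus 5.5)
The plan is to reduce the statement to \Cref{thm:monoidnormal} by building a modified instance whose supply and committee size force every integral committee to lie between $\lfloor x\rfloor$ and $\lceil x\rceil$ coordinatewise.

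First, set $f_R \coloneqq \lfloor x_R\rfloor$. We may assume $x_R\le c_R$, so $f_R\le c_R$. Define the residual fractional committee $\tilde x \coloneqq x - f$, which satisfies $0\le \tilde x_R<1$. Define the residual instance by
\[
\tilde c_R \coloneqq \lceil x_R\rceil - f_R\in\{0,1\},
\qquad
\tilde k \coloneqq k-\sum_R f_R .
\]
The committee size $\tilde k$ is a non-negative integer, since $\sum_R f_R\le\sum_R x_R\le k$. The fractional committee $\tilde x$ respects the supply $\tilde c$ and has size $\sum_R\tilde x_R=\sum_R x_R-\sum_R f_R\le\tilde k$. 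Hence $\tilde x$ is a fractional committee in $(\tilde c,\tilde k)$.

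Next, choose the target utility
\[
\tilde u_i \coloneqq \lfloor u_i(x)\rfloor - u_i(f)\in\Z .
\]
Since $u_i(f)$ is an integer, $\tilde u_i = \lfloor u_i(x)-u_i(f)\rfloor=\lfloor u_i(\tilde x)\rfloor\le u_i(\tilde x)$. So $\tilde u$ is a fractional-committee-feasible integer vector in $(\tilde c,\tilde k)$. By \Cref{prop:interpret-generators}, $(\tilde c,\tilde k,\tilde u)$ lies in $\cone(M_{\mathcal R})\cap\Z^{|\mathcal R|+1+n}$. By \Cref{thm:monoidnormal} (as $n\le5$), it lies in $M_{\mathcal R}$. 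Applying \Cref{prop:interpret-generators} again gives an integral committee $y$ with $0\le y_R\le\tilde c_R$, $\sum_R y_R\le\tilde k$, and $u_i(y)\ge\tilde u_i$.

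Finally, set $x'\coloneqq f+y$. Then $\lfloor x_R\rfloor\le x'_R\le\lfloor x_R\rfloor+\tilde c_R=\lceil x_R\rceil\le c_R$. Its size is $\sum_R x'_R\le\sum_R f_R+\tilde k=k$, so $x'\in\mathcal W$. By linearity of utilities, $u_i(x')=u_i(f)+u_i(y)\ge\lfloor u_i(x)\rfloor$.

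The only delicate step is showing that the floor commutes with subtracting the integer $u_i(f)$, so that $\tilde u$ really is dominated by $u(\tilde x)$; that identity is immediate. The box constraint on $x'$ comes entirely from choosing the supply $\tilde c$. No further obstacle is expected, since normality carries all the combinatorial weight.
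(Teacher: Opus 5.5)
Your proposal is correct and follows the paper's proof essentially step for step. Like the paper, you subtract $\lfloor x\rfloor$ and pass to the residual instance with unit supply exactly on the fractional types and committee size $k-\sum_R\lfloor x_R\rfloor$. You then apply normality (\Cref{thm:monoidnormal}) to the integer vector $\lfloor u(x)\rfloor-u(\lfloor x\rfloor)$ and add the resulting $0/1$ committee back. The only cosmetic difference is that you work directly with $(\tilde c,\tilde k)$ as a monoid element, so you don't need the paper's separate treatment of the case $k'=0$.
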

\begin{proof}
	Let $z$ be the subcommittee with $z_R=x_R-\lfloor x_R \rfloor$ for all $R \in \mathcal{R}$. By additivity of utilities, $u_i(x)=u_i(\lfloor x \rfloor)+u_i(z)$ for all $i \in N$ and in addition, $\lfloor u(x) \rfloor =u(\lfloor x \rfloor)+\lfloor u(z) \rfloor$ as $\lfloor x \rfloor$ induces integer utilities.
	Thus, $z$ is a witness for fractional-committee feasibility of the integer utility vector $\hat u:=\lfloor u(x)\rfloor-u(\lfloor x \rfloor)$ in the instance $(c,k',b')$ with $k'=k-\sum_{R\in\mathcal R}\lfloor x_R\rfloor$ and $b'_i=\frac{k'}{n}$ for all $i \in N$.
	If $k'=0$, the claim is immediate by taking the empty committee. Hence, assume $k'>0$.
	We also note that $z_R<1$ for all $R \in \mathcal{R}$, which remains true for the instance $(c',k',b')$ with $c'_R=1$ for all $R \in \mathcal{R}$ with $z_R>0$ and $c'_R=0$ otherwise. By normality, there exists an integral committee $z'$ such that $u_i(z') \ge \hat{u}_i$ for all $i \in N$ for the instance $(c',k',b')$, implying $z' \in \{0,1\}^{\lvert \mathcal{R}\rvert}$ and $z'_R=0$ if $z_R=0$.
	
	Returning to the original instance, the committee $x'=\lfloor x\rfloor+z'$ is of size at most $k$ and $\lfloor x_R \rfloor \le x'_R \le \lceil x_R \rceil$ for all $R \in \mathcal{R}$ by construction. Finally, for every $i \in N$, we get
	\[
	u_i(x')=u_i(\lfloor x \rfloor)+u_i(z')\ge u_i(\lfloor x \rfloor)+\hat{u}_i =\lfloor u_i(x) \rfloor. \qedhere
	\]
\end{proof}

Details on how to compute committees in the core for $n \le 5$ can be found in \Cref{sec:corecomputation}.

\section{Core existence for $n\in \{6,7\}$} \label{sec:coreexistencen=6}

\Cref{thm:monoidnormal} (normality of the committee election monoid) no longer holds for $n=6$. We already discussed a counterexample with $k=3$, consisting of two triangles, in the introduction (see \Cref{fig:two-triangles-intro}).
There are also other interesting counterexamples with $n = 6$, such as the following one that uses a committee size of only $k = 2$ and has only four candidates.

\begin{example}[Normality-violating instance with $n = 6$ and four candidates] \label{ex:four-candidates}
	Let $n=6$, $k=2$, and $\mathcal R_c=\{126,456,234,135\}$ with $c_R = 1$ for all $R \in \mathcal R_c$, see \Cref{fig:four-candidates}. Each voter $i$ has budget $b_i=\frac{1}{3}$.
	The integer utility vector $u=(1,1,1,1,1,1)$ is fractional-committee-feasible as it is realized by the fractional committee with $x_R=\frac{1}{2}$ for all $R \in \mathcal{R}_c$. However, $u$ is not integral-committee-feasible. 
	To see this, note that each selected candidate is approved by exactly three voters, so every committee of size at most $2$ yields total utility at most $6$. Realizing $u$ would require total utility at least $6$, so such a committee would have size $2$, and all six voters would need to receive utility exactly $1$. 
	However, any two selected candidates always share a voter, so some voter receives utility $2$, a contradiction. \qed
\end{example}

The core is still non-empty for both examples. In the instance from \Cref{fig:two-triangles-intro}, we can take $x_{12} = x_{13} = x_{45} = 1$, and in \Cref{ex:four-candidates}, we can take $x_{126}=x_{456}=1$, setting $x_{R}=0$ for all other $R \in \mathcal{R}_c$ in both cases.

\begin{figure}[thb]
		\centering
		\begin{tikzpicture}[
			scale=0.8,
			voter/.style={circle,draw,fill=white}
			]
			\foreach \i/\ang in {1/90,2/30,3/-30,4/-90,5/-150,6/150}{
				\node[voter] (v\i) at (\ang:2) {$\i$};
			}
			
			\draw[very thick,blue!75]          (v1) -- (v2) -- (v6) -- (v1);
			\draw[very thick,red!75]           (v4) -- (v5) -- (v6) -- (v4);
			\draw[very thick,green!60!black]   (v2) -- (v3) -- (v4) -- (v2);
			\draw[very thick, double=orange!80!black, draw=white, double distance=1.2pt]  (v1) -- (v3) -- (v5) -- (v1);
		\end{tikzpicture}
		\caption{Instance of \Cref{ex:four-candidates} showing a violation of normality for $n = 6$ voters.}
		\label{fig:four-candidates}
\end{figure}

Observe that both instances, together with their respective utility vectors, can occur as residuals in decompositions satisfying \ref{cond:decomp-base}--\ref{cond:decomp-bounds}. These examples show that additional tools are needed to prove non-emptiness of the core for $n>5$. In what follows, we develop such tools and use them to extend our results from up to five voters to up to seven voters.

\begin{theorem}\label{thm:upton7coreexists}
	For $n \leq 7$, the core is non-empty for every instance $(c,k,b)$.
\end{theorem}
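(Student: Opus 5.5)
For $n\le5$ the statement is \Cref{thm:n5coreexists}, so only $n\in\{6,7\}$ needs work. Throughout I will use a Lindahl equilibrium rather than an arbitrary fractional core point. If $\sum_R c_R\le k$, selecting the whole supply is trivially in the core, so assume $\sum_R c_R>k$. Then \Cref{thm:fraccoreexists} gives a Lindahl equilibrium $(x,p)$ with $\sum_R x_R=k$. Put $u=\lfloor u(x)\rfloor$.

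If $u$ is integral-committee-feasible, \Cref{thm:roundfracutils} finishes. Otherwise $(c,k,u)$ is a hole. Apply \Cref{thm:decomposition} with witness $x$ to get $(c,k,u)=(c^0,k^0,u^0)+(c^r,k^r,u^r)$. If some $u_i^r=0$, I delete voter $i$ as in \Cref{lem:zero-utility}. This leaves a residual with fewer voters. For $n=6$ it is integral-committee-feasible by normality for $5$ voters, which is a contradiction. For $n=7$ I recurse into the $6$-voter analysis, noting that the Lindahl structure is inherited by the remaining voters. So assume \ref{cond:decomp-base}--\ref{cond:decomp-bounds} hold. Together with \Cref{lem:voter-deletion-bounds,lem:Rcbounds}, the residual is one of finitely many ``minimal holes'' with $|\mathcal R^r_c|\le n$, $k^r\le n-2$, unit supplies and an antichain support. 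These are enumerated by computer up to relabeling voters: $50$ for $n=6$, and the $n=7$ list filtered by the Lindahl conditions.

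The main structural step is a sufficient condition, which I prove once, under which a hole residual still yields an integral core committee. The condition has three parts: in the Lindahl equilibrium every voter's utility $u_i(x)$ is an integer, or at least the residual part of it is; some voter $j$ satisfies $b_j<1$, so $j$ cannot deviate alone; and the shorted vector $u^r-e_j$ is integral-committee-feasible in $(c^r,k^r)$.

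Given this condition, take an integral witness $y^r$ for $u^r-e_j$ and add the integral base committee realizing $u^0$. Call the result $x'$. All voters other than $j$ get at least $\lfloor u_i(x)\rfloor$, and $j$ gets at least $\lfloor u_j(x)\rfloor-1$. Now suppose a coalition $S$ blocks $x'$ with some $y\in\mathcal W_S$. If $j\notin S$, the argument of \Cref{thm:roundfracutils} gives a fractional-core violation. If $j\in S$, then $S\ne\{j\}$ since $b_j<1$, and $y$ gives $j$ utility at least $u_j(x)$ while the others get strictly more than $u_i(x)$. This is a strict-core-type improvement. It contradicts the Lindahl property, because each member would have to pay more than its budget, or at least its budget with strict inequality for some member, and so the total price of $y$ exceeds $b(S)$.

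To make this rigorous I will state the Lindahl conditions explicitly: personalized prices $p_{i,R}$, each $x$ maximizing $u_i$ over the budget set, and profits over the supply box maximized. I then verify the pricing inequality, and handle voters whose demand is satiated through the supply caps.

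For $n=6$, I verify by computer that all $50$ residual holes meet the condition. Integrality of $u(x)$ on these holes should follow from the hole structure together with the Lindahl equilibrium conditions.

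For $n=7$, the $298$ holes with non-integral Lindahl utilities are the main obstacle. There I generalize the condition. I short a set of voters by their fractional parts, so the target becomes $u^r$ minus a vector supported on those voters. I require that the total overshoot $\sum_i(u_i(x)-\lfloor u_i(x)\rfloor)$ be small and that, at the Lindahl prices, it not buy a blocking improvement. Concretely, any deviation that strictly improves the members of $S$ on the floored utilities must cost more than $b(S)$. I bound this cost by the price value of the overshoot. I will check this case-by-case with an LP over the possible prices for each hole. This is the step I expect to be hardest.
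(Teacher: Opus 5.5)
Your architecture matches the paper's: a Lindahl equilibrium with $\sum_R x_R=k$, flooring, decomposition into a base part plus a minimal hole, enumeration, shorting one voter by $1$, and a pricing argument for coalitions containing the shorted voter. The sufficient condition you build on, however, fails at the singleton deviation.

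\textbf{The singleton case.} You exclude $S=\{j\}$ by requiring $b_j<1$. Budgets belong to the original instance, not to the residual. A minimal hole has $k^r\le n-2$ no matter how large $k$ is, and with equal budgets $b_j=k/n\ge1$ as soon as $k\ge n$. So the condition fails in essentially every instance of interest, and it cannot be checked on the enumerated holes. Your pricing argument does not cover this case either: for $S=\{j\}$, \Cref{prop:Lindahlprops}(5) and profit maximization give only $\langle y,p_j\rangle=b_j$, which is no contradiction.

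The missing idea is that the Lindahl equilibrium itself gives $u_j(x)>b_j$. In a minimal hole, $j$ lies in a residual type $R$ with $0<x_R<c_R$ and $|R|\ge2$ (as $c^r_j=0$). Then any other $j'\in R$ is unsatiated, and $0<b_{j'}=\langle x,p_{j'}\rangle\le p_{j'R}u_{j'}(x)$ forces $p_{j'R}>0$. Hence $p_{jR}<1$ and $b_j=\langle x,p_j\rangle\le p_{jR}u_j(x)<u_j(x)$. With $u_j(x)$ integral, a singleton deviation would need $u_j(y)\ge u_j(x)>b_j\ge\sum_R y_R\ge u_j(y)$, which is impossible.

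Integrality of $u_j(x)$ also needs a concrete mechanism rather than ``should follow''. The paper checks that \emph{every} fractional witness of $u^r$ in $(c^r,k^r)$ gives $j$ exactly $u^r_j$. Then \ref{cond:decomp-feasibility} forces $u_j(x)-u^0_j=u^r_j$.

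\textbf{The case $n=7$.} \Cref{lem:voter-deletion-bounds,lem:Rcbounds} assume normality for $n-1=6$ voters, which fails. So you cannot use them to discard residuals in which some voter lies in every residual type. That case, and the zero-coordinate case, need the separate treatment of \Cref{lem:n7boundary}.

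There, ``the Lindahl structure is inherited'' is not correct. After deleting $i^*$, the remaining prices in general no longer sum to $1$ on purchased types, so there is no six-voter Lindahl equilibrium to run the six-voter argument on. Instead, one does the following:
\begin{itemize}
\item decomposes the projected residual as a six-voter instance;
\item uses the pinning property of the $50$ holes to get $u_j(x)\in\Z$ for all $j\neq i^*$ in the \emph{original} equilibrium;
\item lifts witnesses for $u^r-e_j$;
\item applies \Cref{thm:reduceLindahlutils} in the seven-voter instance.
\end{itemize}

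\textbf{The 298 remaining holes.} Your condition is never actually formulated, and ``shorting by fractional parts'' makes no sense for integer targets. What works is shorting a single voter $i^*$ by $1$ under $\sum_i\beta_i\phi_i+\beta_{i^*}<1$ and $\lfloor u_{i^*}(x)\rfloor\ge1$ (\Cref{thm:reduceLindahlutilsfor7}).

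Its proof rests on the marginal-expenditure inequality $\langle z,p_i\rangle-b_i\ge\beta_i(u_i(z)-u_i(x))$. Summing this over $S$ does not by itself give a contradiction; you additionally need $u^{\max}\le b(S)\le\sum_{i\in S}\beta_iu_i(x)$.

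The computer check $\max_{\beta\in\Delta^{\mathcal R^r_c}}\beta_{i^*}+k^r-\sum_i\beta_iu^r_i<1$ then covers the unknown Lindahl $\beta$. This is because $\sum_i\beta_i\phi_i\le\langle\beta,u(x^r)-u^r\rangle=k^r-\langle\beta,u^r\rangle$.
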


For a given instance $(c,k,b)$, we can assume w.l.o.g. that $\sum_{R\in\mathcal R} c_R > k$, as otherwise the set of all candidates is affordable and constitutes an integral core committee. We start with a specific fractional core committee $x$, namely one that forms a Lindahl equilibrium $(x,p)$ with $\sum_{R \in \mathcal{R}}x_R=k$, which always exists by \Cref{thm:fraccoreexists}. If the rounded-down utility vector $\lfloor u(x) \rfloor$ is integral-committee-feasible, its witness $x'$ lies in the core by \Cref{thm:roundfracutils}.
Therefore, we can focus on instances $(c,k)$ for which $u:=\lfloor u(x) \rfloor$ is fractional-committee-feasible but \emph{not} integral-committee-feasible, i.e., $(c,k,u)$ is a \emph{hole} in the corresponding monoid (see \Cref{sec:decomposition}).

We may therefore apply \Cref{thm:decomposition} to $u=\lfloor u(x)\rfloor$ and the witness $x$. The resulting residual 
\[
	u^r=u-u^0=\lfloor u(x)-u^0\rfloor,
\]
because $u^0$ is integral by \ref{cond:decomp-base}. The theorem guarantees that $(c^r,k^r,u^r)$ is a hole. When \ref{cond:decomp-base}--\ref{cond:decomp-bounds} hold, we call the residual part a \emph{minimal hole}.

For $n \in \{6,7\}$, we now scan over all possible minimal holes and verify that all of them still let us construct an element in the core. For the latter task, we will use properties of Lindahl equilibria. Thus, we begin by formally defining the Lindahl equilibrium, discussing its properties, and deriving sufficient conditions implying the existence of integral core committees.

\subsection{Lindahl equilibrium and sufficient conditions for core existence} \label{sec:lindahl}

We use the definition as given by \citet{KrPe25b} in their capped public goods setting.

\begin{definition}[Lindahl equilibrium]\label{def:Lindahleq}
	For a given instance $(c,k,b)$, let $x \in \mathcal{P}$ be a fractional committee and $p=(p_i)_{i \in N}\in \mathbb{R}^{\lvert \mathcal{R}\rvert \times n}_{\ge 0}$ be a collection of individual price vectors. Then $(x,p)$ is a \emph{Lindahl equilibrium} if
	
	\begin{enumerate}
		\item $x$ is \emph{affordable}: we have $\langle x,p_i \rangle \le b_i$ for all $i \in N$,
		\item $x$ is \emph{utility-maximizing}: for every $i \in N$ and $y$ with $0 \leq y_R \leq c_R$ for all $R \in \mathcal{R}$ and $\langle y,p_i \rangle \le b_i$, we have $u_i(x)\ge u_i(y)$, and 
		\item prices are \emph{profit-maximizing}: for every $R \in \mathcal{R}$, we have $\sum_{i \in N}p_{iR} \le 1$, and whenever $x_R>0$ then $\sum_{i \in N}p_{iR} = 1$.
	\end{enumerate}
\end{definition}

We refer to \citet{KrPe25b} for a detailed discussion of these requirements. 
Given a Lindahl equilibrium $(x,p)$, we say that a voter $i \in N$ is \emph{unsatiated} if there exists some $R\ni i$ with $0<x_R<c_R$. The following proposition collects useful properties of unsatiated voters. Its first two items encode bang-per-buck conclusions from utility maximization. The third item shows that an unsatiated voter spends their full budget. The fourth item gives a property that \citet{KrPe25b} call \emph{zero-respecting}. The fifth item gives an alternative formulation of utility maximization. The final two items connect a voter's spending to their utility.

\begin{proposition}\label{prop:Lindahlprops}
	Let $(c,k,b)$ be an arbitrary instance and let $(x,p)$ be a Lindahl equilibrium. For every unsatiated voter $i \in N$ with $0<x_R<c_R$ for some $R\ni i$, we have  
	\begin{enumerate}
		\item $p_{iR}\ge p_{iR'}$ for all $R' \ni i$ with $0 < x_{R'}=c_{R'}$,
		\item $p_{iR}=p_{iR'}$ for all $R' \ni i$ with $0<x_{R'}<c_{R'}$, 
		\item $\langle x,p_i\rangle=b_i$,  
		\item $p_{iR'}=0$ for all $R' \not \ni i$ with $x_{R'}>0$, 
		\item $\langle y,p_i\rangle\ge b_i$ for all $y \in \mathcal{P}$ with $u_i(y)\ge u_i(x)$,
		\item $\langle x,p_i \rangle \le p_{iR} u_i(x)$, and
		\item if $|R|\ge 2$, then $u_i(x)>b_i$.
	\end{enumerate}
\end{proposition}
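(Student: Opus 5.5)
We fix an unsatiated voter $i$ and a type $R\ni i$ with $0<x_R<c_R$. Every item will follow from one local exchange argument: build a bundle $y$ in the box $0\le y\le c$ that $i$ can afford and that gives $i$ strictly more utility than $x$, contradicting utility maximization (Definition~\ref{def:Lindahleq}, item~2). Profit maximization is needed only in item~7.

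We first establish the auxiliary fact $p_{iR}>0$. If $p_{iR}=0$, then increasing $x_R$ by a small $\varepsilon>0$ keeps $x_R\le c_R$, costs $i$ nothing, and raises $u_i$ by $\varepsilon$. This contradicts utility maximization.

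With this fact in hand, items 1--4 are short perturbation arguments.
\begin{itemize}
\item[(1)] Suppose $p_{iR'}>p_{iR}$ for some $R'\ni i$ with $x_{R'}>0$. Shift $\varepsilon$ mass from $R'$ to $R$, which is possible since $x_R<c_R$. Utility is unchanged because both types contain $i$, and $i$ saves $\varepsilon(p_{iR'}-p_{iR})>0$. Spending the saving on further units of $R$ (positive, finite price) strictly increases utility, a contradiction.
\item[(2)] If also $0<x_{R'}<c_{R'}$, apply (1) with the roles of $R$ and $R'$ exchanged.
\item[(3)] If $\langle x,p_i\rangle<b_i$, buy a little more of $R$.
\item[(4)] Suppose $p_{iR'}>0$ for some $R'\not\ni i$ with $x_{R'}>0$. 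Decrease $x_{R'}$ slightly, which costs $i$ no utility and saves money, and spend the saving on $R$.
\end{itemize}

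For item~5, suppose $y\in\mathcal P$ satisfies $u_i(y)\ge u_i(x)$ and $\langle y,p_i\rangle<b_i$. There are two cases.
\begin{itemize}
\item If some $R''\ni i$ has $y_{R''}<c_{R''}$, then $y+\varepsilon e_{R''}$ is still affordable for small $\varepsilon$. It lies in the box and gives $u_i>u_i(x)$, a contradiction.
\item Otherwise $y$ saturates every type containing $i$. Then $u_i(y)=\sum_{R'\ni i}c_{R'}>u_i(x)$, because $x_R<c_R$. This again contradicts utility maximization, since $y$ is affordable and lies in the box.
\end{itemize}
Note that Definition~\ref{def:Lindahleq} only requires $y$ in the box, so $y\in\mathcal P$ qualifies.

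For item~6, only types with $x_{R'}>0$ contribute to $\langle x,p_i\rangle$. By item~4, among these only types containing $i$ carry a nonzero price for $i$. By items~1 and~2, each such price is at most $p_{iR}$. Hence
\[
\langle x,p_i\rangle=\sum_{R'\ni i}x_{R'}p_{iR'}\le p_{iR}\sum_{R'\ni i}x_{R'}=p_{iR}u_i(x).
\]

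For item~7, take $j\in R\setminus\{i\}$. Voter $j$ is also unsatiated via the same type $R$, so the auxiliary fact gives $p_{jR}>0$. Since $x_R>0$, profit maximization gives $\sum_\ell p_{\ell R}=1$, and therefore $p_{iR}\le 1-p_{jR}<1$. Combining items~3 and~6 with $u_i(x)\ge x_R>0$ yields
\[
b_i=\langle x,p_i\rangle\le p_{iR}u_i(x)<u_i(x).
\]

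The step that needs the most care is item~5, because it requires the case split on whether $y$ saturates all of $i$'s types. The auxiliary fact $p_{iR}>0$ must also be established before items 1 and 7; without it, item~7 would only give a weak inequality.
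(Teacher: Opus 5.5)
Your proof is correct and takes essentially the same route as the paper: the same local exchange arguments against utility maximization for items 1--4, and the same chains of inequalities for items 6 and 7. There are only two minor differences. In item 7 you get $p_{jR}>0$ from your up-front zero-price argument rather than from $0<b_j=\langle x,p_j\rangle\le p_{jR}u_j(x)$. In item 5 you split on whether $y$ saturates all of $i$'s types, instead of using items 3 and 4 to find a type $R^*\ni i$ with $y_{R^*}<x_{R^*}$. Both variants are valid.
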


\begin{proof}
	Let $i \in N$ be a voter with $0<x_R<c_R$ for some $R\ni i$.
	
	First, if for some $i \in N$, $p_{iR}< p_{iR'}$ for some $R' \in \mathcal{R}$ with $x_{R'}=c_{R'}$, and $R' \ni i$, then $x$ would not be utility-maximizing for voter $i$, as voter $i$ could purchase less of $R'$ and a higher amount of $R$ (since $p_{iR}<p_{iR'}$), which would result in a utility higher than $u_i(x)$.
	
	Second, if $p_{iR'}>p_{iR}$ for some $R' \in \mathcal{R}$ with $0<x_{R'}<c_{R'}$ and $R' \ni i$, then $x$ would again not be utility-maximizing for voter $i$, as voter $i$ could purchase less of $R'$ and a higher amount of $R$ (since $p_{iR'}>p_{iR}$), which would result in a utility higher than $u_i(x)$.
	
	Third, if $\langle x,p_i\rangle<b_i$, voter $i$ could purchase more of $R$ and $x$ would not be utility-maximizing. 
	
	Fourth, if $p_{iR'}>0$ for some $R' \not \ni i$ with $x_{R'}>0$, voter $i$ could purchase less of $R'$ and use the saved budget to purchase more of $R$. 
	
	Fifth, if for some $y \in \mathcal{P}$ with $u_i(y)\ge u_i(x)$, we had $\langle y,p_i\rangle< b_i = \langle x, p_i\rangle$, then we would need to have $y_{R^*}<x_{R^*}$ for some $R^* \ni i$ (since $p_{iR'} = 0$ for all $R' \not\ni i$). Then, voter $i$ could purchase more of $R^*$ with the remaining budget, again contradicting that $x$ is utility-maximizing.
	
	For the sixth property, we conclude $p_{iR'}\le p_{iR}$ for all $R' \ni i$ with $x_{R'}>0$ and $p_{iR'}=0$ for all $R' \not \ni i$ with $x_{R'}>0$, which results in
	\[
		\langle x,p_i \rangle =\sum_{R' \ni i}p_{iR'}x_{R'} \le p_{iR}\sum_{R' \ni i}x_{R'}=p_{iR} u_i(x). 
	\]

	Finally, suppose that $|R|\ge2$ and choose some $j\in R\setminus\{i\}$. Voter $j$ is also unsatiated by definition. Since $b_j>0$, budget exhaustion and the preceding inequality applied to $j$ give
	$
		0<b_j=\langle x,p_j\rangle\le p_{jR}u_j(x),
	$
	so $p_{jR}>0$. Profit maximization for type $R$ implies $p_{iR}\le1-p_{jR}<1$. Applying budget exhaustion and the sixth property to voter $i$, and noting that $u_i(x)>0$ because $x_R>0$, yields
	$
		b_i=\langle x,p_i\rangle\le p_{iR}u_i(x)<u_i(x).
	$
\end{proof}

We prove two sufficient conditions for converting a Lindahl equilibrium fractional committee $x$ into an integral core committee.  
While in \Cref{thm:roundfracutils}, we considered integral committees inducing utilities $\lfloor u(x)\rfloor$, we will now consider the utility vector $\lfloor u(x)\rfloor - e_{i^*}$, where $e_{i^*}$ denotes the $i^*$-th unit vector in $\mathbb{Z}^n$. Thus, we will select some voter $i^*$ and reduce $i^*$'s target utility by 1.
Both conditions allow the rounded-down Lindahl utility of one specified voter $i^*$ to be reduced by one.
Under certain conditions on the voter $i^*$, we can show that any integral committee inducing this reduced utility vector is in the core.

\begin{theorem}\label{thm:reduceLindahlutils}
	Given an instance $(c,k,b)$ and a Lindahl equilibrium $(x,p)$, assume that there exists an unsatiated voter $i^* \in N$ with $u_{i^*}(x)>b_{i^*}$ (voter $i^*$ receives more utility than is attainable using only $b_{i^*}$) and $u_{i^*}(x) = \lfloor u_{i^*}(x) \rfloor$ (voter $i^*$ receives integral utility in the Lindahl equilibrium).
	Then every witness for integral-committee feasibility of $\lfloor u(x) \rfloor-e_{i^*}$ lies in the core.
\end{theorem}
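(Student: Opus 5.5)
Let $x'$ be an integral committee with $u_i(x')\ge \lfloor u_i(x)\rfloor$ for all $i\neq i^*$ and $u_{i^*}(x')\ge u_{i^*}(x)-1$; here we use $u_{i^*}(x)\in\Z$. Suppose, for contradiction, that some non-empty $S\subseteq N$ and $y\in\mathcal W_S$ satisfy $u_i(y)>u_i(x')$ for all $i\in S$. Integrality gives $u_i(y)\ge \lfloor u_i(x)\rfloor+1>u_i(x)$ for every $i\in S\setminus\{i^*\}$, exactly as in the proof of \Cref{thm:roundfracutils}. If $i^*\notin S$, then $y$ blocks $x$ in the fractional core, which contradicts that Lindahl equilibria lie in the fractional core (\Cref{thm:fraccoreexists}). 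So we may assume $i^*\in S$, in which case only $u_{i^*}(y)\ge u_{i^*}(x)$ is guaranteed: a weak improvement for $i^*$ and strict improvements for all others.

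Next I rule out $S=\{i^*\}$. In that case $y\in\mathcal W_{\{i^*\}}$ has $\sum_R y_R\le b_{i^*}$, so $u_{i^*}(y)\le b_{i^*}<u_{i^*}(x)$, which contradicts $u_{i^*}(y)\ge u_{i^*}(x)$. This is exactly where the hypothesis $u_{i^*}(x)>b_{i^*}$ enters. Hence $S\setminus\{i^*\}\neq\emptyset$.

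It remains to derive a contradiction from Lindahl prices, a strict-core-type argument. For every $j\in S$, I will show $\langle y,p_j\rangle\ge b_j$, with strict inequality for $j\ne i^*$. For $j\neq i^*$, since $u_j(y)>u_j(x)$, utility maximization in \Cref{def:Lindahleq} gives $\langle y,p_j\rangle>b_j$; otherwise $y$, which respects the supply caps, would be affordable and strictly better. For $i^*$, who is unsatiated and satisfies $u_{i^*}(y)\ge u_{i^*}(x)$, item 5 of \Cref{prop:Lindahlprops} gives $\langle y,p_{i^*}\rangle\ge b_{i^*}$. Summing over $S$ and using profit maximization ($\sum_{i\in N}p_{iR}\le 1$, prices non-negative) yields
\[
b(S)<\sum_{j\in S}\langle y,p_j\rangle\le \sum_{R}y_R\sum_{i\in N}p_{iR}\le \sum_R y_R\le b(S),
\]
a contradiction. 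The strict inequality uses $S\setminus\{i^*\}\neq\emptyset$.

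The main subtlety is the step for $i^*$. Item 5 of \Cref{prop:Lindahlprops} requires $y\in\mathcal P$, i.e.\ $\sum_R y_R\le k$, and this holds because $y\in\mathcal W_S\subseteq\mathcal P$. One must also keep the case split in order: the singleton case must be excluded before summing, since with $S=\{i^*\}$ the chain of inequalities would contain no strict step.
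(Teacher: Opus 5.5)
Your proof is correct and follows the same route as the paper's proof. You use the same three-way case split: $i^*\notin S$ is handled via the fractional core, $S=\{i^*\}$ via the hypothesis $u_{i^*}(x)>b_{i^*}$, and $\{i^*\}\subsetneq S$ via utility maximization together with item~5 of \Cref{prop:Lindahlprops}, summed against profit maximization.
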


\begin{proof}
	Let $x' \in \mathcal{W}$ be a witness for integral-committee feasibility of $u':=\lfloor u(x) \rfloor-e_{i^*}$ and assume for contradiction that there exists $S \subseteq N$ and $y \in \mathcal{W}_S$ such that $u_i(y)>u'_i$ for all $i \in S$.
	
	First, if $i^* \not \in S$, $u_i(y) >u'_i =\lfloor u_i(x) \rfloor$ for all $i \in S$, contradicting that $x$ is a fractional core committee as $u_i(y)$ is integral and thus larger than $u_i(x)$.
	
	Second, if $S=\{i^*\}$, $u_{i^*}(y) >u'_{i^*}=\lfloor u_{i^*}(x) \rfloor-1=u_{i^*}(x)-1$, which implies $u_{i^*}(y) \ge u_{i^*}(x) >b_{i^*}$ and thus contradicts $y \in \mathcal{W}_S$. 
	
	Finally, if $\{i^*\} \subsetneq S$, we have $u_i(y)\ge \lfloor u_i(x)\rfloor+1 >u_i(x)$ for all $i \in S\setminus \{i^*\}$ as well as $u_{i^*}(y)\ge \lfloor u_{i^*}(x) \rfloor=u_{i^*}(x)$. As $x$ is utility-maximizing for all $i \in S$ (see \Cref{def:Lindahleq}), we get that $\langle y,p_i\rangle > b_i$ for all $i \in S\setminus \{i^*\}$ as $u_i(y)>u_i(x)$. Since voter $i^*$ is unsatiated, we obtain $\langle y,p_{i^*} \rangle \ge  b_{i^*}$ by \Cref{prop:Lindahlprops} as $u_{i^*}(y)\ge u_{i^*}(x)$.
	All in all, we get
	\begin{align*}
		b(S)<\sum_{i \in S}\langle y,p_i \rangle
		=\sum_{R \in \mathcal{R}}\left(y_R \cdot \sum_{i \in S}p_{iR}\right) \le \sum_{R \in \mathcal{R}}y_R
	\end{align*}
	where the last inequality follows from profit maximization; this contradicts $y \in \mathcal{W}_S$.
\end{proof}

For $n=6$, we will show that such a voter $i^*$ can always be found and the corresponding reduced utility vector is integral-committee-feasible.

For $n=7$, there are cases where no voter receives integral utility in the Lindahl equilibrium, and so we will need another condition. For an unsatiated voter $i$, write $\beta_{i} = p_{iR}$ where $R \ni i$ is some candidate type with $0 < x_R < c_R$ (this is well-defined by \Cref{prop:Lindahlprops}). 
Our result will show that inducing the utility vector $\lfloor u(x) \rfloor-e_{i^*}$ suffices to be in the core when additionally assuming $\sum_{i \in N}\beta_{i} \phi_i+\beta_{i^*}<1$, where $\phi_i:=u_i(x)-\lfloor u_i(x) \rfloor$.
Intuitively, the condition says that the budget saved by lowering voter $i^*$'s utility is not sufficient to compensate all other members of a deviating coalition $S\subseteq N$. More precisely, suppose a coalition chooses some $y\in\mathcal{P}_S$. Lowering voter $i^*$'s utility by $\phi_{i^*}$ frees only a limited amount of budget. This saving is not enough to raise every other voter $i\in S$ from $\lfloor u_i(x)\rfloor$ to $\lceil u_i(x)\rceil$, which would require an additional gain of $1-\phi_i$ for each such voter.

\begin{theorem}\label{thm:reduceLindahlutilsfor7}
	Consider an instance $(c,k,b)$ and a Lindahl equilibrium $(x,p)$ in which every voter is unsatiated. Use the values $\beta_i$ and $\phi_i$ defined above.
	Suppose that for some voter $i^*$ we have $\sum_{i \in N}\beta_{i} \phi_i+\beta_{i^*}<1$ and $\lfloor u_{i^*}(x) \rfloor \ge 1$. Then every witness for integral-committee-feasibility of $\lfloor u(x) \rfloor-e_{i^*}$ lies in the core.
\end{theorem}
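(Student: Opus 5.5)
We follow the template of the proof of \Cref{thm:reduceLindahlutils}. Let $x'\in\mathcal W$ witness integral-committee feasibility of $u':=\lfloor u(x)\rfloor-e_{i^*}$. Suppose for contradiction that some non-empty $S\subseteq N$ and $y\in\mathcal W_S$ satisfy $u_i(y)>u'_i$ for all $i\in S$. Since utilities are integral, we get $u_i(y)\ge\lfloor u_i(x)\rfloor+1$ for $i\in S\setminus\{i^*\}$, and $u_{i^*}(y)\ge\lfloor u_{i^*}(x)\rfloor$ if $i^*\in S$. If $i^*\notin S$, then every member of $S$ gets $u_i(y)>u_i(x)$. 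This blocks $x$ in the fractional core, which a Lindahl equilibrium lies in. So we may assume $i^*\in S$.

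The key step is to turn utility gaps into budget gaps with a quantitative version of \Cref{prop:Lindahlprops}(5). For an unsatiated voter $i$ with price level $\beta_i$, I claim that for every $y\in\mathcal P$,
\[
\langle y,p_i\rangle \;\ge\; b_i+\beta_i\bigl(u_i(y)-u_i(x)\bigr).
\]
To prove this, split $y$'s contribution to voter $i$ among three kinds of types. Types with $x_R=0$ have $p_{iR}\ge\beta_i$; otherwise $i$ would buy them. Fractional types with $0<x_R<c_R$ have price exactly $\beta_i$. Saturated types with $x_R=c_R$ have price $\ge\beta_i$ by \Cref{prop:Lindahlprops}(1) applied in reverse, and $y_R\le c_R=x_R$ on them. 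Types not containing $i$ contribute nonnegatively. Comparing with $\langle x,p_i\rangle=b_i$, where only types containing $i$ carry price, gives the inequality. This works in both directions: gains are charged at price at least $\beta_i$, and the losses on saturated types are refunded at price at most $\beta_i$ only where $y_R<x_R$ on fractional types. This bookkeeping is the main obstacle. The case with $x_R=0$ but $p_{iR}<\beta_i$ must be excluded via utility maximization, and a "refund" must never exceed $\beta_i$ per unit. If the linear bound fails in general, I would fall back on the concave piecewise form, which is enough because only the sign of the final sum matters.

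Now apply the claim to each $i\in S$ and sum. For $i\in S\setminus\{i^*\}$ we have $u_i(y)-u_i(x)\ge 1-\phi_i$. For $i^*$ we have $u_{i^*}(y)-u_{i^*}(x)\ge-\phi_{i^*}$. Hence
\[
\sum_{i\in S}\langle y,p_i\rangle\;\ge\; b(S)+\sum_{i\in S\setminus\{i^*\}}\beta_i(1-\phi_i)-\beta_{i^*}\phi_{i^*}.
\]
By profit maximization the left side is at most $\sum_R y_R\le b(S)$. This forces $\sum_{i\in S\setminus\{i^*\}}\beta_i(1-\phi_i)\le\beta_{i^*}\phi_{i^*}$.

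We then split on the size of $S$. If $S=\{i^*\}$, the argument above says nothing, so we argue directly. We have $u_{i^*}(y)\ge\lfloor u_{i^*}(x)\rfloor\ge1$ while $\sum_R y_R\le b_{i^*}$. The condition $\lfloor u_{i^*}(x)\rfloor\ge 1$ and the hypothesis $\beta_{i^*}(1+\phi_{i^*})\le\sum_i\beta_i\phi_i+\beta_{i^*}<1$ should give $b_{i^*}<\lfloor u_{i^*}(x)\rfloor$. Indeed, $b_{i^*}=\langle x,p_{i^*}\rangle\le\beta_{i^*}u_{i^*}(x)$ by \Cref{prop:Lindahlprops}(6). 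Also, $\beta_{i^*}u_{i^*}(x)<\lfloor u_{i^*}(x)\rfloor$ follows from $\beta_{i^*}(\lfloor u_{i^*}(x)\rfloor+\phi_{i^*})<\lfloor u_{i^*}(x)\rfloor$, which holds because $\beta_{i^*}(1+\phi_{i^*})<1$ and $\lfloor u_{i^*}(x)\rfloor\ge1$. Then $y$ cannot deliver utility $\lfloor u_{i^*}(x)\rfloor$ with fewer seats than that number, a contradiction.

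If $|S|\ge2$, the summed inequality must be contradicted using $\sum_i\beta_i\phi_i+\beta_{i^*}<1$. This needs a lower bound on the $\beta_i$. Here I would use profit maximization on a fractional type $R$ that $y$ or $x$ uses: on such $R$ the prices $\sum_{j\in R}\beta_j=1$. Instead of summing per voter, I would sum per type in the final inequality and exploit the slack $1-\sum_{i\in S}p_{iR}$. This is exactly the place where the hypothesis is consumed, and I expect it to need the most care.
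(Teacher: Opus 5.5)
Your setup, the case $i^*\notin S$, the marginal-expenditure inequality, the summed inequality, and the singleton case $S=\{i^*\}$ all agree with the paper's proof. There is one slip: on saturated types \Cref{prop:Lindahlprops}(1) gives $p_{iR}\le\beta_i$, not $p_{iR}\ge\beta_i$. That is exactly the direction needed, because $y_R-x_R\le0$ there, and with it the linear bound holds outright, so no concave fallback is required.

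The genuine gap is the case $\{i^*\}\subsetneq S$, which is the heart of the theorem and which you leave as an unexecuted plan. The inequality $\sum_{i\in S\setminus\{i^*\}}\beta_i(1-\phi_i)\le\beta_{i^*}\phi_{i^*}$ is not contradictory on its own; it holds, consistently with the hypothesis, whenever the other coalition members have small $\beta_i$. Your proposed remedy points the wrong way. No lower bound on the $\beta_i$ is needed, and the relations $\sum_{j\in R}\beta_j=1$ on fractional types do not give one. Exploiting the per-type slack $1-\sum_{i\in S}p_{iR}$ cannot work in general either. This slack vanishes whenever $y$ only uses types $R\subseteq S$ with $x_R>0$, by profit maximization and \Cref{prop:Lindahlprops}(4). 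In that case price accounting yields nothing beyond the inequality you already have.

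The missing idea is to run your singleton argument for the whole coalition, in two steps.

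First, rearrange the summed inequality to $\sum_{i\in S}\beta_i\le\beta_{i^*}+\sum_{i\in S}\beta_i\phi_i<1$.

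Second, set $u^{\max}:=\max_{i\in S}(u'_i+1)\ge\lfloor u_{i^*}(x)\rfloor\ge1$. Some member of $S$ receives at least $u^{\max}$ from $y$, so $u^{\max}\le\sum_R y_R\le b(S)$. Budget exhaustion together with \Cref{prop:Lindahlprops}(6) gives $b(S)=\sum_{i\in S}\langle x,p_i\rangle\le\sum_{i\in S}\beta_i u_i(x)$. Now substitute $u_i(x)=(u'_i+1)-(1-\phi_i)$ for $i\neq i^*$ and $u_{i^*}(x)=(u'_{i^*}+1)+\phi_{i^*}$, and bound each $u'_i+1$ by $u^{\max}$. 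This yields
\[
1+(u^{\max}-1)\Bigl(1-\sum_{i\in S}\beta_i\Bigr)\;\le\;\beta_{i^*}+\sum_{i\in S}\beta_i\phi_i\;<\;1,
\]
which is impossible because the left side is at least $1$ by the first step.

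This size-based second inequality, not a per-type refinement of the first, is what consumes the hypothesis $\sum_{i\in N}\beta_i\phi_i+\beta_{i^*}<1$.
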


\begin{proof}
	Define $u':=\lfloor u(x)\rfloor-e_{i^*}$ and let $x'\in\mathcal W$ be a witness for integral-committee feasibility
	of $u'$. Thus, $u_i(x')\ge u'_i$ for every $i\in N$.

	Suppose, toward a contradiction, that $x'$ does not lie in the core. Then there exist a non-empty coalition $S\subseteq N$ and a committee $y\in\mathcal W_S$ such that $u_i(y)>u_i(x')$ for every $i\in S$. Since $x'$ and $y$ are integral committees, it follows that $u_i(y)\ge u'_i+1$ for every $i\in S$.

	We first establish a \emph{marginal-expenditure inequality}. We claim that, for every voter $i$ and every $z$ satisfying $0\le z_R\le c_R$ for all $R\in\mathcal R$,
	\[
		\langle z,p_i\rangle-b_i \ge \beta_i\bigl(u_i(z)-u_i(x)\bigr).
	\]
	To see this, first consider a type $R \in \mathcal{R}_c$ approved by voter $i$. By \Cref{prop:Lindahlprops}, together with voter $i$'s utility maximization in the case $x_R=0$, we have
	\[
		\begin{array}{c|c|c}
		x_R=0 & 0<x_R<c_R & x_R=c_R\\ \hline
		p_{iR}\ge\beta_i & p_{iR}=\beta_i & p_{iR}\le\beta_i.
		\end{array}
	\]
	Indeed, the only additional case is $x_R=0$, for which utility maximization implies $p_{iR}\ge\beta_i$. Since $z_R-x_R\ge0$ when $x_R=0$, $z_R-x_R\le0$ when $x_R=c_R$, and the middle case holds with equality, it follows in all three cases that
	\[
		p_{iR}(z_R-x_R) \ge \beta_i(z_R-x_R).
	\]

	Now consider a type $R$ not approved by voter $i$. If $x_R>0$, then $p_{iR}=0$ by \Cref{prop:Lindahlprops}; if $x_R=0$, then non-negativity of prices and $z_R\ge0$ imply $p_{iR}(z_R-x_R)=p_{iR}z_R\ge0$.
	Thus, summing over all $R\in\mathcal R$, we obtain
	\[
		\langle z-x,p_i\rangle \ge \beta_i\bigl(u_i(z)-u_i(x)\bigr).
	\]
	Since every voter is unsatiated by assumption, budget exhaustion yields $\langle x,p_i\rangle=b_i$.
	Therefore,
	\[
		\langle z,p_i\rangle-b_i \ge \beta_i\bigl(u_i(z)-u_i(x)\bigr).
	\]
	With this inequality in hand, we now show that no coalition can deviate. As in the proof of \Cref{thm:reduceLindahlutils}, any blocking coalition $S$ must contain $i^*$.

	If $S=\{i^*\}$, we get $u'_{i^*}+1=\lfloor u_{i^*}(x) \rfloor \le u_{i^*}(y) \le b_{i^*}$ since $y \in \mathcal{W}_S$.
	Hence,
	\begin{align*}
		\lfloor u_{i^*}(x) \rfloor \le b_{i^*}\le \beta_{i^*}u_{i^*}(x) &= \beta_{i^*}\lfloor u_{i^*}(x) \rfloor + \beta_{i^*}\phi_{i^*}\\
		&\le \beta_{i^*}\lfloor u_{i^*}(x) \rfloor + \sum_{i \in N}\beta_{i} \phi_i.
	\end{align*}
	Thus, $1-\beta_{i^*}\le \sum_{i \in N}\beta_{i} \phi_i$ as $\lfloor u_{i^*}(x) \rfloor \ge 1$ and $1-\beta_{i^*} \geq 0$, contradicting our assumption $\sum_{i \in N}\beta_{i} \phi_i+\beta_{i^*}<1$.

	It remains to consider the case $\{i^*\}\subsetneq S$. Blocking and integrality imply
	\begin{align*}
		u_{i^*}(y) &\ge u'_{i^*}+1 =\lfloor u_{i^*} \rfloor =u_{i^*}(x)-\phi_{i^*} &&\quad 
		\text{for }i^* \text{, and} \\
		u_i(y) &\ge u'_i+1 =\lfloor u_i \rfloor+1 =u_i(x)+(1-\phi_i) &&\quad \text{for all }i\in S\setminus\{i^*\}.
	\end{align*}

	Applying the \emph{marginal-expenditure inequality} to $y$ and summing over
	$i\in S$ yields
	\[
		\sum_{i\in S}\langle y,p_i\rangle
		\ge b(S)-\beta_{i^*}\phi_{i^*} +\sum_{i\in S\setminus\{i^*\}} \beta_i(1-\phi_i).
	\]
	By profit maximization and the affordability of $y\in\mathcal W_S$, we have $\sum_{i\in S}\langle y,p_i\rangle \leq \langle y,1\rangle \leq b(S)$. Thus, the preceding inequality simplifies to
	\[
		\sum_{i\in S\setminus\{i^*\}} \beta_i(1-\phi_i) \le \beta_{i^*}\phi_{i^*}.
	\]
	Rearranging this inequality finally gives us
	\[
		\sum_{i \in S} \beta_i \leq \beta_{i^*} + \sum_{i \in S} \beta_i \phi_i \leq \beta_{i^*} + \sum_{i \in N} \beta_i \phi_i < 1.
	\] 
	Now, let $u^{\max}:=\max_{i \in S}u'_i+1$. As $u^{\max}\le u_i(y)$ for some $i \in S$, $u^{\max}\le b(S)$.
	Then,
	\begin{align*}
		u^{\max} \le b(S)=\sum_{i \in S}\langle x,p_i\rangle &\le \sum_{i \in S} \beta_i u_i(x)
		\\
		&= \sum_{i \in S} \beta_i \left(u'_i+1\right)+\beta_{i^*}\phi_{i^*}-\,\sum_{\mathclap{i \in S\setminus\{i^*\}}} \: \beta_i \left(1-\phi_i\right)
		\\
		&\le u^{\max} \cdot \sum_{i \in S} \beta_i +\beta_{i^*}\phi_{i^*}-\,\sum_{\mathclap{i \in S\setminus\{i^*\}}} \: \beta_i \left(1-\phi_i\right).
	\end{align*}	
	Thus, $1-\sum_{i \in S}\beta_i\le \beta_{i^*}\phi_{i^*}-\sum_{i \in S\setminus\{i^*\}} \beta_i \left(1-\phi_i\right)$ as $u^{\max} \ge 1$ and $1-\sum_{i \in S}\beta_i > 0$.
	Rearranging terms yields
	\begin{align*}
		1 \le \beta_{i^*}\phi_{i^*}+\beta_{i^*}+\sum_{i \in S\setminus\{i^*\}} \beta_i \phi_i
		=\sum_{i \in S} \beta_i \phi_i+\beta_{i^*} \le \sum_{i \in N} \beta_i \phi_i + \beta_{i^*}
	\end{align*}
	which again contradicts our assumption $\sum_{i \in N}\beta_{i} \phi_i+\beta_{i^*}<1$.	
\end{proof}

\subsection{Core existence for $n=6$}\label{sec:core_for_six}

Our proof of core existence for $n = 6$ combines a computer enumeration with an invocation of our sufficient condition from \Cref{thm:reduceLindahlutils}. We begin by explaining the computer component, which enumerates all structures that form what we call a candidate minimal hole.

\begin{definition}
	An instance-utility pair $(c^r,k^r,u^r)$ is a \emph{candidate minimal hole} if it is a hole and satisfies the following conditions for each $i \in N$:
	\begin{enumerate}
		\renewcommand{\theenumi}{(R\arabic{enumi})}%
		\renewcommand{\labelenumi}{\theenumi}%
		\item\label{cond:antichain} $\mathcal{R}^r_c$ forms an antichain,
		\item\label{cond:supply} $c^r_i=0$, $c^r_N=0$, and $c^r_R=1$ for all $R\in\mathcal{R}^r_c$,
		\item\label{cond:degree} $2 \le d_i^r \le \lvert \mathcal{R}^r_c\rvert-1$,
		\item\label{cond:card} $3<k^r+2 \le \lvert \mathcal{R}^r_c\rvert \le n$, and %
		\item\label{cond:util} $1 \le u_i^r \le \min\{d^r_i-1, k^r-1\}$.
	\end{enumerate}
\end{definition}

Our core existence proof decomposes an input instance and then shows that the residual part of this decomposition forms a candidate minimal hole.
Note that conditions \ref{cond:antichain}--\ref{cond:util} imply that there are only \emph{finitely many} candidate minimal holes for $n = 6$. We use a computer program to enumerate all of them: it performs a depth-first search over all antichains and selects all those that satisfy conditions \ref{cond:antichain}--\ref{cond:degree}. It then further filters out any duplicates with respect to relabeling. For each of the resulting candidate antichains, it iterates over all choices of $k^r$ (compatible with \ref{cond:card}) and all utility vectors $u^r$ (compatible with \ref{cond:util}). For each resulting combination $(c^r,k^r,u^r)$, it checks that it is fractional-committee-feasible (using an exact LP) but not integral-committee-feasible (using brute-force search over all possible integral committees), i.e., that it constitutes a \emph{hole}. Hence, this procedure yields a list of 50 instance-utility pairs that are candidate minimal holes (based on 23 different antichains), which is exhaustive up to relabeling. See \Cref{tbl:50-holes}.

\begin{table}[t]
	\centering
	\scriptsize
	\begin{tabular}{L{0.02\linewidth}L{0.24\linewidth}C{0.01\linewidth}L{0.11\linewidth}@{\hspace{0.015\linewidth}}C{0.02\linewidth}L{0.27\linewidth}C{0.01\linewidth}L{0.11\linewidth}@{}}
		\cmidrule[\heavyrulewidth](r{1.0em}){1-4}\cmidrule[\heavyrulewidth]{5-8}
		Hole & $\mathcal{R}_c$ & $k$ & $u$ & Hole & $\mathcal{R}_c$ & $k$ & $u$ \\
		\cmidrule(r{1.0em}){1-4}\cmidrule{5-8}
		1 & $\{12,\,13,\,23,\,45,\,46,\,56\}$ & 3 & $(1,1,1,1,1,1)$ & 26 & $\{123,\,124,\,135,\,145,\,346,\,256\}$ & 4 & $(3,2,2,2,2,1)$ \\
		2 & $\{12,\,13,\,234,\,235,\,236,\,456\}$ & 3 & $(1,2,2,1,1,1)$ & 27 & $\{123,\,124,\,135,\,245,\,236,\,146\}$ & 2 & $(1,1,1,1,1,1)$ \\
		3 & $\{12,\,34,\,135,\,245,\,236,\,146\}$ & 2 & $(1,1,1,1,1,1)$ & 28 & $\{123,\,124,\,135,\,245,\,236,\,146\}$ & 4 & $(3,3,2,2,1,1)$ \\
		4 & $\{12,\,34,\,135,\,245,\,236,\,146\}$ & 3 & $(1,1,2,2,1,1)$ & 29 & $\{123,\,124,\,135,\,245,\,346,\,156\}$ & 2 & $(1,1,1,1,1,1)$ \\
		5 & $\{12,\,134,\,135,\,245,\,236,\,146\}$ & 2 & $(1,1,1,1,1,1)$ & 30 & $\{123,\,124,\,135,\,245,\,346,\,156\}$ & 3 & $(2,1,2,1,2,1)$ \\
		6 & $\{12,\,134,\,135,\,245,\,236,\,146\}$ & 3 & $(2,1,2,2,1,1)$ & 31 & $\{123,\,124,\,135,\,245,\,346,\,156\}$ & 3 & $(2,2,1,2,1,1)$ \\
		7 & $\{12,\,134,\,135,\,245,\,236,\,146\}$ & 3 & $(2,2,1,1,1,1)$ & 32 & $\{123,\,124,\,135,\,245,\,346,\,156\}$ & 4 & $(3,2,2,2,2,1)$ \\
		8 & $\{12,\,134,\,135,\,245,\,236,\,146\}$ & 4 & $(3,2,2,2,1,1)$ & 33 & $\{123,\,124,\,135,\,2345,\,346,\,256\}$ & 2 & $(1,1,1,1,1,1)$ \\
		9 & $\{12,\,134,\,235,\,345,\,246,\,156\}$ & 2 & $(1,1,1,1,1,1)$ & 34 & $\{123,\,124,\,135,\,2345,\,346,\,256\}$ & 3 & $(1,2,2,2,2,1)$ \\
		10 & $\{12,\,134,\,235,\,345,\,246,\,156\}$ & 3 & $(1,1,2,2,2,1)$ & 35 & $\{123,\,124,\,135,\,2345,\,346,\,256\}$ & 3 & $(2,2,2,1,1,1)$ \\
		11 & $\{12,\,134,\,235,\,345,\,246,\,156\}$ & 3 & $(2,2,1,1,1,1)$ & 36 & $\{123,\,124,\,135,\,2345,\,346,\,256\}$ & 4 & $(2,3,3,2,2,1)$ \\
		12 & $\{12,\,134,\,235,\,246,\,156\}$ & 2 & $(1,1,1,1,1,1)$ & 37 & $\{123,\,124,\,135,\,346,\,256\}$ & 2 & $(1,1,1,1,1,1)$ \\
		13 & $\{12,\,134,\,235,\,246,\,156\}$ & 3 & $(2,2,1,1,1,1)$ & 38 & $\{123,\,124,\,135,\,346,\,256\}$ & 3 & $(2,2,2,1,1,1)$ \\
		14 & $\{12,\,134,\,235,\,246,\,156,\,3456\}$ & 3 & $(1,1,2,2,2,2)$ & 39 & $\{123,\,124,\,135,\,346,\,256,\,456\}$ & 3 & $(1,1,1,2,2,2)$ \\
		15 & $\{12,\,134,\,235,\,246,\,156,\,3456\}$ & 3 & $(2,2,1,1,1,1)$ & 40 & $\{123,\,124,\,135,\,346,\,256,\,1456\}$ & 3 & $(2,1,1,2,2,2)$ \\
		16 & $\{123,\,124,\,134,\,235,\,236,\,156\}$ & 3 & $(2,2,2,1,1,1)$ & 41 & $\{123,\,124,\,135,\,346,\,256,\,1456\}$ & 3 & $(2,2,2,1,1,1)$ \\
		17 & $\{123,\,124,\,134,\,235,\,246,\,156\}$ & 2 & $(1,1,1,1,1,1)$ & 42 & $\{123,\,124,\,135,\,346,\,256,\,1456\}$ & 4 & $(3,2,2,2,2,2)$ \\
		18 & $\{123,\,124,\,134,\,235,\,246,\,156\}$ & 3 & $(2,2,1,2,1,1)$ & 43 & $\{123,\,124,\,345,\,346,\,156,\,256\}$ & 2 & $(1,1,1,1,1,1)$ \\
		19 & $\{123,\,124,\,134,\,235,\,246,\,156\}$ & 4 & $(3,3,2,2,1,1)$ & 44 & $\{123,\,124,\,345,\,346,\,156,\,256\}$ & 4 & $(2,2,2,2,2,2)$ \\
		20 & $\{123,\,124,\,134,\,2345,\,2346,\,156\}$ & 3 & $(2,2,2,2,1,1)$ & 45 & $\{123,\,145,\,2345,\,246,\,1346,\,356\}$ & 3 & $(1,2,2,2,2,1)$ \\
		21 & $\{123,\,124,\,125,\,345,\,136,\,246\}$ & 2 & $(1,1,1,1,1,1)$ & 46 & $\{123,\,145,\,2345,\,246,\,1346,\,356\}$ & 4 & $(2,2,3,3,2,2)$ \\
		22 & $\{123,\,124,\,125,\,345,\,136,\,246\}$ & 3 & $(2,2,1,2,1,1)$ & 47 & $\{123,\,145,\,2345,\,246,\,356\}$ & 2 & $(1,1,1,1,1,1)$ \\
		23 & $\{123,\,124,\,125,\,345,\,136,\,246\}$ & 4 & $(3,3,2,2,1,1)$ & 48 & $\{123,\,145,\,2345,\,246,\,356\}$ & 3 & $(1,2,2,2,2,1)$ \\
		24 & $\{123,\,124,\,135,\,145,\,346,\,256\}$ & 2 & $(1,1,1,1,1,1)$ & 49 & $\{123,\,145,\,246,\,356\}$ & 2 & $(1,1,1,1,1,1)$ \\
		25 & $\{123,\,124,\,135,\,145,\,346,\,256\}$ & 3 & $(2,1,1,2,2,1)$ & 50 & $\{1234,\,1235,\,1236,\,1456,\,2456,\,3456\}$ & 3 & $(2,2,2,2,2,2)$ \\
		\cmidrule[\heavyrulewidth](r{1.0em}){1-4}\cmidrule[\heavyrulewidth]{5-8}
	\end{tabular}
	\caption{The 50 residual instance-utility pairs with $n = 6$ that constitute a candidate minimal hole (up to voter relabeling). The instance from \Cref{fig:two-triangles-intro} corresponds to Hole~1, and the one from \Cref{fig:four-candidates} corresponds to Hole~49.}
	\label{tbl:50-holes}
\end{table}

Next, by going through each of these 50 minimal holes $(c^r,k^r,u^r)$, we can easily check that each satisfies two important properties: there is a voter $i^*$ who obtains integral utility in a Lindahl equilibrium, and rounding down the utility vector and reducing the utility of $i^*$ by 1 results in a utility vector that is integral-committee-feasible. For the first part, we can avoid having to reason about Lindahl equilibria by showing that \emph{any} witness of fractional-committee feasibility gives $i^*$ integral utility.

Reproduction scripts for these computations are available at \url{https://github.com/DominikPeters/corefewvoters}.

\begin{fact}[Checked computationally]\label{thm:pinnable-patchable}
	Let $(c^r,k^r,u^r)$ be one of the 50 candidate minimal holes with six voters. Let $x$ be any witness for fractional-committee feasibility of $u^r$ in $(c^r,k^r)$. Then for every voter $i\in N$, we get $u_i(x)=u_i^r$ and $u^r-e_i$ is integral-committee-feasible in $(c^r,k^r)$.
\end{fact}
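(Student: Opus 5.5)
This is a finite statement about the 50 explicit instance-utility pairs of \Cref{tbl:50-holes}, so the plan is to verify it hole by hole with exact computations. The claim splits into two parts: (a) every fractional witness $x$ gives every voter exactly $u_i(x)=u_i^r$, and (b) for each $i\in N$ the vector $u^r-e_i$ is integral-committee-feasible in $(c^r,k^r)$. Part (b) is a finite brute-force check. For each hole and each of the six voters, we enumerate all subsets of $\mathcal R^r_c$ of size at most $k^r$ (at most $\binom{6}{4}$ per size) and exhibit one that gives every voter $j$ utility at least $u_j^r-[j=i]$. The reproduction scripts would record these witnesses, so the check is independently verifiable.

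For part (a), I would argue via linear programming. We have $u_i(x)\ge u_i^r$ for every fractional witness, so it suffices to show $\max_i\{u_i(x)-u_i^r\}$ is zero over the witness polytope. I would do this one voter at a time. For each hole and each voter $i$, solve with exact rational arithmetic the LP
\begin{equation*}
\max\; u_i(x) \quad\text{s.t.}\quad 0\le x_R\le c^r_R,\ \textstyle\sum_R x_R\le k^r,\ u_j(x)\ge u_j^r\ \forall j\in N,
\end{equation*}
and confirm the optimum equals $u_i^r$. Each such optimality claim should come with a dual certificate, meaning nonnegative multipliers $\lambda_j,\mu,\nu_R$ with $a_{i}\le \sum_j\lambda_j(-a_j)+\dots$. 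The certificate then yields a human-checkable inequality $u_i(x)\le u_i^r$ derived from the other constraints.

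In many holes, aggregate counting already forces tightness. A representative case is Holes~1, 27 and 49, with $k^r=2$ or $3$, all types of equal size $s$, and $u^r=(1,\dots,1)$. There $\sum_i u_i(x)=s\sum_R x_R\le s k^r=6=\sum_i u_i^r$, so every constraint is tight. I would present such a counting argument where it applies, using the weighted sum $\sum_j u_j(x)$ or a suitable weighted variant, and defer to the LP certificates otherwise.

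The main obstacle is the holes where no uniform counting argument works, for example those with mixed type sizes such as Holes~2, 14 or 50. There, tightness of every voter needs a nontrivial dual combination, which I would find computationally and record, making exact rational arithmetic essential. A cleaner uniform fallback would be to show the feasible region is a single point $\hat x$ (say $x_R=\tfrac12$ or similar). Uniqueness would give (a) immediately, and it could be tested by maximizing and minimizing each $x_R$ over the witness polytope.
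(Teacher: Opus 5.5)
Your plan is essentially the paper's. The paper also checks the 50 holes one by one: it establishes integral feasibility of $u^r-e_i$ by exhaustive enumeration, and it certifies $u_i(x)=u_i^r$ in exact rational arithmetic by having Z3 prove that no witness has $u_i(x)>u_i^r$, which is equivalent to your exact LP optimum with a dual certificate.

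Two small slips. Hole~50 is not a mixed-size case: all its types have size $4$, so your counting bound $\sum_i u_i(x)=4\sum_R x_R\le 12=\sum_i u_i^r$ already settles it. Also, subsets of size $3$ can number $\binom{6}{3}=20>\binom{6}{4}$, so your per-size count is slightly off.
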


Based on this computation, we can now invoke \Cref{thm:reduceLindahlutils}, our first sufficient condition for rounding a Lindahl equilibrium to an integral core committee, to establish core existence for arbitrary instances $(c,k,b)$ with $n=6$.

\begin{theorem}\label{thm:n6coreexists}
	For $n=6$, the core is non-empty for every instance $(c,k,b)$.
\end{theorem}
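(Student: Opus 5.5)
We reduce to the residual and assemble the final committee from the base part and a patched residual. Take a Lindahl equilibrium $(x,p)$ with $\sum_R x_R=k$ (\Cref{thm:fraccoreexists}; we may assume $\sum_R c_R>k$). If $u:=\lfloor u(x)\rfloor$ is integral-committee-feasible, \Cref{thm:roundfracutils} finishes. Otherwise $(c,k,u)$ is a hole, and we apply \Cref{thm:decomposition} with witness $x$.

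\textbf{Zero residual coordinate.} If some $u^r_i=0$, we argue as in \Cref{lem:zero-utility}. Normality for $5$ voters (\Cref{thm:monoidnormal}) makes $u^r$ integral-committee-feasible in $(c^r,k^r)$. Adding a base witness for $u^0$ then shows that $u$ is integral-committee-feasible, a contradiction. The same induction on the number of voters applies verbatim to \Cref{lem:voter-deletion-bounds,lem:Rcbounds}. Hence in the remaining case the residual satisfies \ref{cond:antichain}--\ref{cond:util}. The condition $k^r\ge 2$ in \ref{cond:card} follows from \Cref{prop:k=1intfeas}. So the residual is a candidate minimal hole, i.e.\ one of the 50 in \Cref{tbl:50-holes} up to relabeling.

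\textbf{Identifying a good voter.} By \ref{cond:decomp-witness}, the witness $x^r$ (the fractional parts of $x$ on the residual types) is a fractional witness for $u(x)-u^0$, hence for $u^r$. By \Cref{thm:pinnable-patchable}, $u_i(x^r)=u^r_i$ for every $i$. Since $u(x)-u^0=u(x^r)$ on the residual coordinates, every $u_i(x)=u^0_i+u^r_i$ is an integer. Next we choose $i^*$ to be unsatiated with $|R|\ge2$ for a type $R\ni i^*$ with $0<x_R<c_R$. Any residual type $R$ has $x^r_R\in(0,1)$ and $x_R\notin\Z$ by \ref{cond:decomp-support}, so $0<x_R<c_R$. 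By \ref{cond:supply}, $|R|\ge2$, and every voter lies in some residual type since $d^r_i\ge2$. Thus every voter is such an unsatiated voter. By \Cref{prop:Lindahlprops}(7), $u_{i^*}(x)>b_{i^*}$, so all hypotheses of \Cref{thm:reduceLindahlutils} hold for any $i^*$.

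\textbf{Patching and assembling.} By \Cref{thm:pinnable-patchable}, $u^r-e_{i^*}$ is integral-committee-feasible in $(c^r,k^r)$. By \ref{cond:decomp-base}, $u^0$ is integral-committee-feasible in $(c^0,k^0)$. Summing the two integral witnesses gives an integral committee in $(c,k)$ realizing $u-e_{i^*}=\lfloor u(x)\rfloor-e_{i^*}$. By \Cref{thm:reduceLindahlutils}, this committee lies in the core.

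\textbf{Main obstacle.} The delicate step is verifying that the residual really is a candidate minimal hole, meaning that all of \ref{cond:antichain}--\ref{cond:util} hold without a gap. In particular, $k^r\ge2$ and the degree and utility bounds rely on \Cref{lem:voter-deletion-bounds,lem:Rcbounds} with the induction hypothesis (normality for $5$ voters). We must also check that the relabeling used to match \Cref{tbl:50-holes} is harmless; it is, since all statements are symmetric under voter permutations.
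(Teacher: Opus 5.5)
Your proposal is correct and follows the paper's proof essentially step for step. It takes a full-size Lindahl equilibrium, decomposes, excludes zero residual coordinates via normality for five voters, identifies the residual as one of the 50 candidate minimal holes, gets unsatiation from \Cref{prop:Lindahlprops}(7), and patches with \Cref{thm:pinnable-patchable} before invoking \Cref{thm:reduceLindahlutils}. One small correction: after the antichain and perturbation steps of \Cref{lem:minimal-reduction01,lem:Rcle5}, the witness $x^r$ need not be the fractional part of $x$, and in general only $u(x^r)\ge u(x)-u^0$ holds rather than equality; the sandwich $u^r_i\le u_i(x)-u^0_i\le u_i(x^r)=u^r_i$ still gives integrality of $u_i(x)$, which is exactly how the paper argues.
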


\begin{proof}
	For $\sum_{R \in \mathcal{R}}c_R \le k$, the committee consisting of all candidates is feasible and lies in the core.
	For every other instance $(c,k,b)$, a Lindahl equilibrium $(x,p)$ with $\sum_{R \in \mathcal{R}}x_R=k$ is known to exist by \Cref{thm:fraccoreexists}. If the rounded-down utility vector $u:=\lfloor u(x)\rfloor$ is integral-committee-feasible, every witness lies in the core by \Cref{thm:roundfracutils}.
	
	Otherwise, apply \Cref{thm:decomposition} to $u$ and $x$ to get a decomposition $(c,k,u)=(c^0,k^0,u^0)+(c^r,k^r,u^r)$. The residual cannot have a zero coordinate by \Cref{lem:zero-utility} and \Cref{thm:monoidnormal}, so $(c^r,k^r,u^r)$ is a minimal hole, i.e., \ref{cond:decomp-base}--\ref{cond:decomp-bounds} hold. 
	Now by using \ref{cond:decomp-supply}, \ref{cond:decomp-bounds}, \Cref{lem:voter-deletion-bounds,lem:Rcbounds}, we get that $(c^r,k^r,u^r)$ also satisfies \ref{cond:antichain}--\ref{cond:util} and, thus, is a candidate minimal hole. It must therefore (up to relabeling) be one of the $50$ candidate minimal holes found by our computer enumeration. By \Cref{thm:pinnable-patchable}, $u^r-e_i$ is integral-committee-feasible for all $i \in N$. By \ref{cond:decomp-base}, the base part $(c^0,k^0,u^0)$ has an integral committee that gives utility exactly $u^0$. Combining it with a witness for $u^r-e_i$ shows that $u-e_i=u^0+u^r-e_i$ is integral-committee-feasible in $(c,k)$ for all $i \in N$.
	
	Next, we claim that $u_{i}(x)>b_{i}$ for all $i \in N$. Fix an arbitrary voter $i \in N$.
	By \ref{cond:decomp-witness} and \ref{cond:decomp-bounds}, voter $i$ approves some type $R^{i}\in\mathcal R_c^r$. By \ref{cond:decomp-support}, $x_{R^{i}}\notin\mathbb Z$, which implies $0<x_{R^{i}}<c_{R^{i}}$. Thus, $i$ is unsatiated. Moreover, $c^r_{i}=0$, so $|R^{i}|\ge2$, and \Cref{prop:Lindahlprops}(7) gives $u_{i}(x)>b_{i}$. Since $i$ was arbitrary, the claim follows.
		
	To apply \Cref{thm:reduceLindahlutils}, we show that $u_{i^*}(x)=\lfloor u_{i^*}(x)\rfloor$ for some voter $i^*$. Let $v=u(x)-u^0$ and $u^r=\lfloor v\rfloor$.
	By \Cref{thm:pinnable-patchable}, every fractional witness $z$ for $u^r$ in $(c^r,k^r)$ satisfies $u_{i^*}(z)=u_{i^*}^r$.
	By \ref{cond:decomp-feasibility}, there exists such a witness $z$ with $u(z)\geq v$ and, consequently, $u_{i^*}^r=u_{i^*}(z)\geq v_{i^*}\geq u_{i^*}^r$.
	Since $u_{i^*}^0$ and $u_{i^*}^r$ are integral, it follows that
	\[
		u_{i^*}(x) =u_{i^*}^0+v_{i^*} =u_{i^*}^0+u_{i^*}^r \in\mathbb Z.
	\]
	Thus, $u_{i^*}(x)=\lfloor u_{i^*}(x)\rfloor$, as required.
	
	Hence, we can apply \Cref{thm:reduceLindahlutils} and conclude that the existing witness for integral-committee feasibility of $u-e_{i^*}$ lies in the core.
\end{proof}

\subsection{Core existence for $n=7$}\label{sec:core-existence-7}

Unfortunately, there are instances with seven voters for which no suitable voter in the sense of \Cref{thm:pinnable-patchable} can be found.
To see this, consider the ``Fano-plane instance'' shown in \Cref{fig:fano-plane} with 7 equally weighted voters, 7 candidate types (each with supply at least $1$), and $k = 5$. By symmetry, the Lindahl equilibrium of this instance assigns $x_R = \frac{5}{7}$ to each type $R \in \mathcal{R}_c$. Because each voter is part of exactly three types, $u_i = \frac{15}{7} \approx 2.14$ for each $i \in N$. Thus, no voter $i^*$ satisfies $u_{i^*}(x)=\lfloor u_{i^*}(x)\rfloor$ in this instance. However, every voter is unsatiated in the Lindahl equilibrium. In particular, we can invoke \Cref{thm:reduceLindahlutilsfor7} to show that the Fano-plane instance has a non-empty core: in the Lindahl equilibrium, we have $\beta_i = \frac13$ for all $i \in N$, and hence the key quantity in the statement of \Cref{thm:reduceLindahlutilsfor7} satisfies $\sum_{i \in N} \beta_i (u_i(x) - \lfloor u_i(x) \rfloor) + \beta_{i^*} \approx 7 \cdot \frac13 \cdot 0.14 + \frac 13 = 0.66 < 1$, as required by that theorem. Inspired by this, we will adapt our arguments to show that for $n = 7$, any hole satisfies sufficient conditions to apply either \Cref{thm:reduceLindahlutils} or \Cref{thm:reduceLindahlutilsfor7}.

\begin{figure}[thb]
	\centering
		\begin{tikzpicture}[
		scale=0.9,
		voter/.style={circle,draw,fill=white,inner sep=1.3pt,minimum size=6mm}
		]
		\coordinate (p1) at (0,2);
		\coordinate (p2) at (0.866,0.5);
		\coordinate (p3) at (1.732,-1);
		\coordinate (p4) at (0,-1);
		\coordinate (p5) at (-1.732,-1);
		\coordinate (p6) at (-0.866,0.5);
		\coordinate (p7) at (0,0);
		\draw[very thick,violet!80] (p1) -- (p7) -- (p4);
		\draw[very thick,teal!80!black] (p3) -- (p7) -- (p6);
		\draw[very thick,brown!80!black] (p5) -- (p7) -- (p2);
		\draw[very thick, double=orange!80!black, draw=white, double distance=1.2pt] (p7) circle (1);
		\draw[very thick,blue!75] (p1) -- (p2) -- (p3);
		\draw[very thick,red!75] (p3) -- (p4) -- (p5);
		\draw[very thick,green!60!black] (p5) -- (p6) -- (p1);
		\foreach \i in {1,...,7}{
			\node[voter] at (p\i) {$\i$};
		}
	\end{tikzpicture}
	\caption{The Fano-plane (also known as $\mathrm{PG}(2,2)$) instance for $n=7$: the seven colored triples are the candidate types $\{123,345,156,147,367,257,246\}$.}
	\label{fig:fano-plane}
\end{figure}

Recall that we still work with the decomposition from \Cref{thm:decomposition}. This time, we need to begin by considering residuals having a zero coordinate or a voter who approves every available type. Previously, we ruled out these cases using \Cref{lem:voter-deletion-bounds,lem:Rcbounds}. However, both lemmas rely on normality of the committee election monoid for $n-1$ voters and therefore do not apply when $n=7$. We thus handle these boundary cases separately first.

\begin{lemma}\label{lem:n7boundary}
	Let $n=7$ and $(x,p)$ be a Lindahl equilibrium for an instance $(c,k,b)$ with $\sum_{R\in\mathcal R}x_R=k$.
	Set $u=\lfloor u(x)\rfloor$, suppose that $(c,k,u)$ is a hole and let
	$
		(c,k,u)=(c^0,k^0,u^0)+(c^r,k^r,u^r)
	$
	be a decomposition satisfying \ref{cond:decomp-base}, \ref{cond:decomp-feasibility}, and \ref{cond:decomp-support}. 
	Suppose we are in one of the following two cases:
	\begin{enumerate}
		\renewcommand{\theenumi}{(\arabic{enumi})}%
		\renewcommand{\labelenumi}{\theenumi}%
		\item there exists a voter $i^* \in N$ with $u_{i^*}^r = 0$, or 
		\item \ref{cond:decomp-supply}, \ref{cond:decomp-bounds}, and $i^*\in\bigcap_{R\in\mathcal R_c^r}R$ hold.
	\end{enumerate}
	Then the core of $(c,k,b)$ is non-empty.
\end{lemma}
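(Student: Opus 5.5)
The plan is to reduce both cases to the six-voter analysis of \Cref{sec:core_for_six}, applied to a projected instance in which voter $i^*$ is deleted, and then to finish with \Cref{thm:reduceLindahlutils} for a suitable voter $j\neq i^*$. Let $x^r$ be a fractional witness for $v:=u(x)-u^0$ in $(c^r,k^r)$, which exists by \ref{cond:decomp-feasibility}, and recall $u^r=\lfloor v\rfloor$. Delete $i^*$ from the residual: replace every type $R\in\mathcal R^r_c$ by $R\setminus\{i^*\}$, merge supplies of types that coincide, and drop the empty type. This gives a six-voter pair $(c',k^r,u')$ with $u'=u^r_{-i^*}$. The projected $x^r$ witnesses that $u'$ is fractional-committee-feasible.

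First I show that $(c',k^r,u')$ is a hole, exactly as in \Cref{lem:zero-utility} and \Cref{lem:voter-deletion-bounds}. Suppose an integral witness exists, and lift it back by choosing, for each selected projected candidate, an original residual candidate it came from. In case (1), voter $i^*$ is satisfied because $u^r_{i^*}=0$. In case (2), first top the lifted committee up to size $k^r$; this is possible since $k^r\le|\mathcal R^r_c|-2$ by \ref{cond:decomp-bounds}. Because $i^*$ lies in every residual type, $i^*$ then gets utility $k^r\ge u^r_{i^*}$. Either way $u^r$ would be integral-committee-feasible, which contradicts that the residual is a hole.

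Next I decompose this six-voter hole again with \Cref{thm:decomposition}, using the projected $x^r$ as the witness (topped up to size $k^r$ if needed). Call the new residual $(c'',k'',u'')$ and the new base part $(c^{0\prime},k^{0\prime},u^{0\prime})$. If $u''$ has a zero coordinate, \Cref{lem:zero-utility} together with normality for five voters (\Cref{thm:monoidnormal}) makes $u''$ integral-committee-feasible, a contradiction. So \ref{cond:decomp-base}--\ref{cond:decomp-bounds} hold for this residual. Since normality holds for $n-1=5$, \Cref{lem:voter-deletion-bounds,lem:Rcbounds} apply, and the residual is one of the $50$ candidate minimal holes. \Cref{thm:pinnable-patchable} then gives two facts. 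First, $u''-e_j$ is integral-committee-feasible for every remaining voter $j$. Second, every fractional witness for $u''$ gives $j$ exactly utility $u''_j$; by the argument in the proof of \Cref{thm:n6coreexists}, $v_j$ minus the integral base utility $u^{0\prime}_j$ is integral, so $u_j(x)$ is an integer.

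To close the argument I lift everything back to the original instance. The base part $(c^{0\prime},k^{0\prime},u^{0\prime})$ consists of $X$, $Z$, $T$ generators of the projected instance. These lift to original candidates and still give $i^*$ at least $0$ in case (1), or at least its share of $k^r$ in case (2). Concatenating $u^0$, the lifted second base part, and a lifted witness for $u''-e_j$ yields an integral witness for $\lfloor u(x)\rfloor-e_j$. In case (2) I again top up to size $k^r$ inside the residual so that $i^*$ keeps utility at least $u^r_{i^*}$.

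It remains to check the hypotheses of \Cref{thm:reduceLindahlutils} for $j$. Voter $j$ lies in some type of $\mathcal R''_c$. That type lifts to an original type $R$ with $x_R\notin\Z$ by \ref{cond:decomp-support}, and $|R|\ge2$ because $c''_j=0$ forces $|R\setminus\{i^*\}|\ge 2$. Hence $j$ is unsatiated, and $u_j(x)>b_j$ by \Cref{prop:Lindahlprops}(7). Together with the integrality of $u_j(x)$ shown above, \Cref{thm:reduceLindahlutils} shows the witness lies in the core.

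I expect the main obstacle to be the bookkeeping of this double decomposition: showing that the second residual, obtained with a projected witness, really inherits \ref{cond:decomp-support}, and that integrality of $j$'s utility in the six-voter hole transfers to $u_j(x)$ in the original instance. For case (2) an extra subtlety is keeping the lifted committee at full size $k^r$ inside the residual so that $i^*$ is not shortchanged. If the double decomposition becomes messy, I would instead compare directly against the $50$ holes: treat $i^*$ as a dummy voter and check computationally that \Cref{thm:pinnable-patchable} still holds after re-adding $i^*$.
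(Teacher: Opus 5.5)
Your proposal is correct and follows essentially the same route as the paper. You delete $i^*$ from the residual and show the six-voter projection is a hole by lifting, topping up to size $k^r$ in case (2). You then decompose it again to land in one of the 50 holes of \Cref{thm:pinnable-patchable}, transfer integrality of $u_j(x)$ and feasibility of $\lfloor u(x)\rfloor-e_j$ back to the original instance, and finish with \Cref{thm:reduceLindahlutils}, using \ref{cond:decomp-support} of the original decomposition on a lifted type containing $j$ and another voter. The one detail you leave implicit is why the projected witness can be topped up to size $k^r$. If it could not, the projected supply would be exhausted, and selecting all of it would be an integral witness, contradicting the hole; the paper spells this out.
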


\begin{proof}
	Let $v:=u(x)-u^0$, so that $u^r=\lfloor v\rfloor$, and let $x^r$ be a fractional witness for $v$ given by \ref{cond:decomp-feasibility}.

	Delete voter $i^*$ from the residual instance: replace every type $R$ by $R\setminus\{i^*\}$, merge types that become identical, and discard the empty type. Denote the resulting six-voter instance by $(\tilde c,k^r)$. Let $\tilde u$ be obtained from $u^r$ by deleting coordinate $i^*$, and let $\bar x$ be the projection of $x^r$. Since $x^r$ witnesses $v$, $\bar x$ witnesses $\tilde u$.

	\medskip
	\noindent\emph{Step 1: Reduction to a six-voter hole.}
	Any integral witness for $\tilde u$ can be lifted to the original residual instance. In case (1) of the lemma statement, we have $u^r_{i^*}=0$, and then the lifted committee already realizes $u^r$. In case (2), $i^*$ belongs to every residual type, so every selected residual candidate gives $i^*$ one unit of utility. If necessary, we may further add unused residual candidates until $i^*$ reaches utility $u^r_{i^*}$. This is possible since $u^r_{i^*}\le k^r$ and \ref{cond:decomp-bounds} guarantees sufficient residual supply.

	Consequently, $(\tilde c,k^r,\tilde u)$ must be a hole, since otherwise an integral witness for $\tilde u$ would lift to an integral witness for $u^r$, which would mean that $(c,k,u)$ is not a hole.

	If $|\bar x|<k^r$, increase its coordinates subject to the supply constraints until its size is $k^r$, and call the resulting witness $\tilde x$. This must be possible: otherwise the projected supply would be exhausted, and selecting
	all of it would give an integral witness for $\tilde u$. Moreover, every coordinate of $\tilde u$ is positive; otherwise \Cref{lem:zero-utility} and \Cref{thm:monoidnormal} would again imply that $\tilde u$ is integral-committee-feasible.

	\medskip
	\noindent\emph{Step 2: Apply the six-voter classification.}
	Apply \Cref{thm:decomposition} to $(\tilde c,k^r,\tilde u)$ with witness $\tilde x$ to get a decomposition $(\tilde c,k^r,\tilde u) = (\hat c^0,\hat k^0,\hat u^0) + (\hat c^r,\hat k^r,\hat u^r)$. Its residual cannot have a zero utility coordinate, again by \Cref{lem:zero-utility} and \Cref{thm:monoidnormal} (normality for $n - 1 = 5$). By the six-voter analysis above, this residual is therefore one of the $50$ candidate minimal holes from \Cref{thm:pinnable-patchable}.
	By \ref{cond:decomp-feasibility}, there exists a fractional witness $\hat x^r$ for $u(\tilde x)-\hat u^0$. Since $\hat u^r=\tilde u-\hat u^0$, \Cref{thm:pinnable-patchable} gives, for every $j\ne i^*$,
	\[
		\hat u_j^r =u_j(\hat x^r) \ge u_j(\tilde x)-\hat u_j^0 \ge \tilde u_j-\hat u_j^0 =\hat u_j^r.
	\]
	Hence, $u_j(\tilde x)=\tilde u_j$. Since $\tilde x$ dominates the projection $\bar x$ of $x^r$, while $x^r$ witnesses $v$, we obtain
	\[
		\tilde u_j =u_j(\tilde x) \ge v_j \ge \lfloor v_j\rfloor =\tilde u_j.
	\]
	Thus, $v_j=\tilde u_j$ is integral, and therefore, $u_j(x)=u_j^0+v_j\in\mathbb Z$ for every $j\ne i^*$.

	Moreover, \Cref{thm:pinnable-patchable} implies that $\hat u^r-e_j$ is integral-committee-feasible. Combining a corresponding witness with the base part of the six-voter decomposition gives an integral witness for $\tilde u-e_j$. By the same lifting argument as above, this yields an integral witness for $u^r-e_j$ in $(c^r,k^r)$, and adding the original base part gives an integral witness for $u-e_j=u^0+(u^r-e_j)$ in $(c,k)$.

	\medskip
	\noindent\emph{Step 3: Apply \Cref{thm:reduceLindahlutils}.}
	Fix any $j\ne i^*$. By \Cref{lem:Rcbounds}, the six-voter minimal hole contains a residual type that contains $j$ and another voter. This type originates from some $R\in\mathcal R_c^r$ containing the same two voters.
	By \ref{cond:decomp-support}, $x_R\notin\mathbb Z$, and hence $0<x_R<c_R$. Thus, $j$ is unsatiated. Since $|R|\ge2$, \Cref{prop:Lindahlprops}(7) gives $u_j(x)>b_j$.

	We have already shown that $u_j(x)$ is integral and that $\lfloor u(x)\rfloor-e_j=u-e_j$ is integral-committee-feasible. Hence, \Cref{thm:reduceLindahlutils} yields an integral committee in the core.
\end{proof}

Hence, we can focus on decompositions satisfying \ref{cond:decomp-base}--\ref{cond:decomp-bounds} for which $\bigcap_{R \in \mathcal{R}_c^r}R=\emptyset$. By \ref{cond:decomp-bounds}, $u_i^r\ge1$ for every voter. Since the witness in \ref{cond:decomp-witness} has size exactly $k^r$ and some residual type excludes each voter, we also have $u_i^r\le k^r-1$. As in the proof of \Cref{lem:Rcbounds}, it follows that $c^r_N=0$, $c^r_i=0$, and $2 \le d^r_i \le \lvert \mathcal{R}^r_c \rvert -1$ for all $i \in N$. Thus, every remaining minimal hole is a \emph{candidate minimal hole} and additionally satisfies

\begin{enumerate}
	\renewcommand{\theenumi}{(R\arabic{enumi})}%
	\renewcommand{\labelenumi}{\theenumi}%
	\setcounter{enumi}{5}
	\item\label{cond:intersectionempty} $\bigcap_{R \in \mathcal{R}_c^r}R=\emptyset$.
\end{enumerate}

Candidate minimal holes satisfying \ref{cond:antichain}-\ref{cond:intersectionempty} constitute new structures compared to the six-voter instances.
Like in \Cref{sec:core_for_six}, we can enumerate all instance-utility pairs that satisfy these conditions. There are 54,985 such holes (up to relabeling) on 10,292 antichain classes, compared to only $50$ candidate minimal holes for $n=6$.

We cut down on this large number of holes by observing that minimal holes coming from the decomposition of a Lindahl equilibrium must satisfy a further property that involves the $\beta_i$'s of the Lindahl equilibrium (see the Lindahl definition in \Cref{sec:lindahl}).
Let $(x,p)$ be a Lindahl equilibrium for the instance-utility pair $(c,k,u)$ that we decomposed.
As we have shown (and assuming we did not already finish by invoking the conditions of \Cref{lem:n7boundary}), the residual part satisfies
\ref{cond:antichain}-\ref{cond:intersectionempty}. 
Now consider a voter $i \in N$. By \ref{cond:decomp-witness} and \ref{cond:decomp-bounds}, there exists some $R \in \mathcal{R}_c^r$ with $i \in R$, and by \ref{cond:decomp-support} we have $0<x_R<c_R$. Thus, voter $i$ is unsatiated, and we can write $\beta_i = p_{iR}$ as in \Cref{sec:lindahl}. Since $b_i>0$, budget exhaustion and \Cref{prop:Lindahlprops}(6) give $0<b_i\le\beta_i u_i(x)$, and hence $\beta_i>0$. Since $i$ was arbitrary, we have $\beta_i > 0$ for all $i \in N$. Moreover, \ref{cond:decomp-support} and the Lindahl equilibrium conditions imply that $\sum_{i \in R}\beta_i=1$ for all $R \in \mathcal{R}_c^r$.

During the computer enumeration of candidate minimal holes, we do not know the original instance or its Lindahl equilibrium, so we do not have access to the $\beta_i$'s. However, the above argument implies that the candidate minimal hole satisfies the following condition:
\begin{enumerate}
	\renewcommand{\theenumi}{(R\arabic{enumi})}%
	\renewcommand{\labelenumi}{\theenumi}%
	\setcounter{enumi}{6}
	\item\label{cond:lindahlbetas}
	There exists $\beta \in (0,1]^n$ such that $\sum_{i \in R}\beta_i=1$ for all $R \in \mathcal{R}_c^r$.
\end{enumerate}
This property further restricts the number of minimal holes that need to be considered to 21,818.%

Now we are ready to describe the result of our computer enumeration. For this, we use the notation $\Delta^{\mathcal{R}_c^r}:=\{\beta \in [0,1]^n \colon \sum_{i \in R}\beta_i=1 \text{ for all } R \in \mathcal{R}_c^r\}$ to describe potential $\beta$-vectors associated with the original instance; note that we allow $\beta_i = 0$ in the definition of $\Delta^{\mathcal{R}_c^r}$ but not in condition \ref{cond:lindahlbetas}.

Reproduction scripts for these computations are available at \url{https://github.com/DominikPeters/corefewvoters}.

\begin{fact}[Checked computationally]\label{thm:n7minimalholes}
	For $n=7$, there exist (up to relabeling) 21,818 candidate minimal holes $(c^r,k^r,u^r)$ satisfying \ref{cond:antichain}--\ref{cond:lindahlbetas}, of which
	\begin{enumerate}
		\renewcommand{\theenumi}{(\arabic{enumi})}%
		\renewcommand{\labelenumi}{\theenumi}%
		\item 21,520 have a voter $i^*$ such that $u^r-e_{i^*}$ is integral-committee-feasible in $(c^r,k^r)$ and $u_{i^*}(x)=u_{i^*}^r$ for all witnesses $x$ of fractional-committee feasibility of $u^r$ in $(c^r,k^r)$;
		\item 298 are such that for every voter $i^* \in N$,
		\begin{itemize}
			\item $u^r-e_{i^*}$ is integral-committee-feasible in $(c^r,k^r)$, and
			\item $\max_{\beta \in \Delta^{\mathcal{R}_c^r}} \beta_{i^*}+k^r-\sum_{i\in N}\beta_i u_i^r \le \frac{8}{9}<1$.
		\end{itemize}
	\end{enumerate} 
\end{fact}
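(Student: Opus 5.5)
This Fact is a finite computational statement, so the proof will be an exhaustive enumeration followed by exact checks on each enumerated object. The plan mirrors the six-voter computation behind \Cref{thm:pinnable-patchable}.

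\emph{Step 1: enumerate the antichains.} Run a depth-first search over antichains $\mathcal R^r_c\subseteq 2^N\setminus\{\emptyset\}$ with $|\mathcal R^r_c|\le 7$. Keep those satisfying \ref{cond:antichain}--\ref{cond:degree} and \ref{cond:intersectionempty}, and reduce them modulo the action of $S_7$ on voter labels using a canonical form. For each surviving antichain, decide \ref{cond:lindahlbetas} by solving the LP
\[
\max t \quad\text{subject to}\quad \sum_{i\in R}\beta_i=1 \text{ for all } R\in\mathcal R^r_c,\quad t\le\beta_i\le 1,
\]
in exact rational arithmetic, and accept the antichain exactly when the optimum is positive.

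\emph{Step 2: enumerate $k^r$ and $u^r$ and test the hole property.} For each accepted antichain, loop over the values of $k^r$ allowed by \ref{cond:card} and over the integer vectors $u^r$ allowed by \ref{cond:util}. Test fractional feasibility with an exact LP, and test integral infeasibility by brute force over the at most $\binom{7}{k^r}$ subsets of types. This yields the list of 21,818 holes, which is the count asserted in the Fact; it is verified by the program and independently re-checked for canonical-form correctness.

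\emph{Step 3: sort each hole into class (1) or class (2).} For class (1), for each voter $i$ compute
\[
\max\{u_i(x) : x \text{ fractional, } u(x)\ge u^r,\ |x|\le k^r,\ 0\le x\le c^r\}
\]
by an exact LP. The voter $i$ is ``pinned'' exactly when this maximum equals $u^r_i$. Also test integral feasibility of $u^r-e_i$ by brute force. A hole lands in class (1) if some voter passes both tests. For each remaining hole, check integral feasibility of $u^r-e_{i^*}$ for every $i^*$. Then, for each $i^*$, solve the exact LP
\[
\max_{\beta\in\Delta^{\mathcal R^r_c}} \; \beta_{i^*}-\sum_i\beta_i u^r_i,
\]
which is linear in $\beta$, and add $k^r$. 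Finally confirm that the maximum over all $i^*$ and all remaining holes is at most $\tfrac89$, and that these holes number 298.

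\emph{Main obstacle.} The hard part is certifying that the enumeration is exhaustive and correct, not any single LP. Isomorphism reduction over $S_7$ must not drop classes, and all LPs must be solved exactly; a rational solver such as QSopt\_ex or SageMath suffices. For the $\tfrac89$ bound, exactness matters because a floating-point margin below 1 would not certify the inequality. I would therefore also store LP dual certificates so the claims can be verified independently.
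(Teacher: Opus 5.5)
Your plan is correct and is essentially the paper's own computation. You enumerate antichains satisfying (R1)--(R3) and (R6) up to relabeling, filter by (R7), loop over admissible $k^r,u^r$ with an exact LP for fractional feasibility and brute force for integral infeasibility, then sort holes by the pinned-voter test and by maximizing $\beta_{i^*}+k^r-\sum_i\beta_i u^r_i$ over $\Delta^{\mathcal R_c^r}$. The only differences are tooling: the paper certifies the pinning and $\beta$ conditions with Z3 in exact rational arithmetic, where you use an exact LP solver with dual certificates.
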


To obtain this result, we proceeded as follows. Integral-committee feasibility for the reduced utility vectors $u^r-e_{i^*}$ was checked by exhaustively enumerating all integral-committee-feasible utility vectors. For (1), the statements $u_{i^*}(x)=u^r_{i^*}$ were checked using Z3, which uses exact rational arithmetic. Z3 reported the unsatisfiability of a formula looking for a witness $x$ with $u_{i^*}(x)>u^r_{i^*}$. For (2), the second condition involving a maximization over $\Delta^{\mathcal{R}_c^r}$ was also verified using Z3. Because this condition is phrased as an optimization over $\beta$, it applies in particular to the Lindahl-induced $\beta$, which is unknown at computation time, and will allow us to apply \Cref{thm:reduceLindahlutilsfor7}.

Combining the computational results and our insights for $n \le 6$, we can now prove core existence for arbitrary instances $(c,k,b)$ with $n=7$.

\begin{theorem}\label{thm:n7coreexists}
	For $n=7$, the core is non-empty for every instance $(c,k,b)$.
\end{theorem}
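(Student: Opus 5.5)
The proof follows the $n=6$ argument, with \Cref{lem:n7boundary} covering the boundary cases and \Cref{thm:reduceLindahlutilsfor7} covering the 298 exceptional holes.

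\textbf{Setup.} If $\sum_R c_R\le k$, selecting all candidates is a core committee. Otherwise, take a Lindahl equilibrium $(x,p)$ with $\sum_R x_R=k$ from \Cref{thm:fraccoreexists}, and set $u=\lfloor u(x)\rfloor$. If $u$ is integral-committee-feasible, we are done by \Cref{thm:roundfracutils}. Otherwise, apply \Cref{thm:decomposition} to $u$ and $x$.
\begin{itemize}
\item If the residual has a zero coordinate, \Cref{lem:n7boundary}(1) applies, since (D1), (D2) and (D4) are available.
\item Otherwise, (D1)--(D6) hold. If some voter lies in every residual type, \Cref{lem:n7boundary}(2) applies.
\item Otherwise, (R6) holds. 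By the discussion preceding \Cref{thm:n7minimalholes}, the residual is then a candidate minimal hole satisfying (R1)--(R7), with Lindahl-induced $\beta_i>0$ for all $i$ and every voter unsatiated. Up to relabeling, it therefore belongs to the list in \Cref{thm:n7minimalholes}.
\end{itemize}

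\textbf{Case (1) of \Cref{thm:n7minimalholes}.} This is handled exactly as in \Cref{thm:n6coreexists}. Combining the base part with a witness for $u^r-e_{i^*}$ shows that $u-e_{i^*}$ is integral-committee-feasible. Using (D2) and the pinning property, we get $u_{i^*}(x)=u^0_{i^*}+u^r_{i^*}\in\mathbb Z$. Every voter lies in some residual type of size at least $2$ with $0<x_R<c_R$. Hence $i^*$ is unsatiated, and $u_{i^*}(x)>b_{i^*}$ by \Cref{prop:Lindahlprops}(7). \Cref{thm:reduceLindahlutils} then finishes this case.

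\textbf{Case (2) of \Cref{thm:n7minimalholes}.} Here we use \Cref{thm:reduceLindahlutilsfor7}. Every voter is unsatiated, and by the same combination argument $u-e_{i^*}$ is integral-committee-feasible for every $i^*$. Two things remain to check.
\begin{itemize}
\item \emph{The budget inequality.} We need $\sum_i\beta_i\phi_i+\beta_{i^*}<1$ for the Lindahl-induced $\beta$, which lies in $\Delta^{\mathcal R^r_c}$. The key step is to bound $\sum_i\beta_i\phi_i\le k^r-\sum_i\beta_iu^r_i$. Write $v=u(x)-u^0$, so $\phi_i=v_i-u^r_i$ because $u^0$ is integral. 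It therefore suffices to show $\sum_i\beta_i v_i\le k^r$. Let $x^r$ be the restriction of $x-x^0$ to the residual types, where $x^0$ is the part of $x$ absorbed into the base. By construction of the decomposition (integer parts are moved as $X_R$, and mass-shifting or perturbation only weakly increases utilities), I expect that one can take the witness from (D2) with $v\le u(x^r)$ and $\sum_R x^r_R\le k^r$. Then
\[
\sum_i\beta_iv_i\le\sum_i\beta_iu_i(x^r)=\sum_{R\in\mathcal R^r_c}x^r_R\sum_{i\in R}\beta_i=\sum_R x^r_R\le k^r,
\]
using $\sum_{i\in R}\beta_i=1$. Hence $\sum_i\beta_i\phi_i+\beta_{i^*}\le\beta_{i^*}+k^r-\sum_i\beta_iu_i^r\le\frac89<1$, by the computational bound.
\item \emph{The condition $\lfloor u_{i^*}(x)\rfloor\ge1$.} This holds for any $i^*$, since $u^r_{i^*}\ge1$ by (D6).
\end{itemize}
With both conditions verified, \Cref{thm:reduceLindahlutilsfor7} yields a core committee.

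\textbf{Main obstacle.} The delicate step is justifying $\sum_i\beta_iv_i\le k^r$. The difficulty is that \Cref{thm:reduceLindahlutilsfor7} uses the original $\phi_i=u_i(x)-\lfloor u_i(x)\rfloor$, whereas the bound is stated in residual terms. This requires care, because the reductions in \Cref{lem:minimal-reduction01,lem:Rcle5} replace the witness and can move utility between residual and base. If the direct argument fails, the fallback is the identity $\sum_i\beta_iu_i(x)=\sum_R x_R\sum_{i\in R}\beta_i$ applied to the original $x$. This gives $\sum_i\beta_i u_i(x)\le k$, with equality on types where $0<x_R<c_R$, and it must be compared with $\sum_i\beta_iu^0_i$ evaluated on the base generators. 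This comparison needs prices on base types with $x_R\in\{0,c_R\}$ to satisfy $\sum_{i\in R}\beta_i\ge1$ where the base selects $R$, which is also subtle.
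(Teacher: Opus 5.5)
Your overall architecture is the paper's: a Lindahl equilibrium of size $k$, decomposition of $\lfloor u(x)\rfloor$, \Cref{lem:n7boundary} for the two boundary cases, and the two enumerated classes handled via \Cref{thm:reduceLindahlutils} and \Cref{thm:reduceLindahlutilsfor7}. Your treatment of both classes is correct. The gap is in your first bullet. \Cref{thm:decomposition} guarantees only (D1) and (D2) when the residual has a zero coordinate; (D3)--(D6), and in particular (D4), are promised only when no coordinate of $u^r$ vanishes. This matters: the reduction in the proof of \Cref{lem:minimal-reduction01} stops as soon as a zero coordinate appears. For instance, if $\lfloor u_i(x)\rfloor=0$ for some voter already at the start, the residual is the whole instance, and $\mathcal R^r_c$ may contain types with $x_R\in\mathbb Z$ (e.g.\ $x_R=0<c_R$). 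Yet (D4) is a hypothesis of \Cref{lem:n7boundary}, and its proof needs it in Step~3. There it turns the six-voter residual type containing $j$ into an original type $R\ni j$ with $|R|\ge 2$ and $0<x_R<c_R$. That is what makes $j$ unsatiated with $u_j(x)>b_j$, i.e.\ what supplies the hypotheses of \Cref{thm:reduceLindahlutils}.

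The paper closes this by not using \Cref{thm:decomposition} as a black box. It first moves, for every $R$, $\lfloor x_R\rfloor$ copies of $X_R$ and $c_R-\lceil x_R\rceil$ copies of $Z_R$ into the base part, regardless of zero coordinates. This is always admissible because $u^r=\lfloor u(x-\lfloor x\rfloor)\rfloor\ge 0$. With $x^r=x-\lfloor x\rfloor$ it gives (D1), (D2), (D4), (D5) unconditionally. The paper then applies the second part of \Cref{lem:minimal-reduction01} and \Cref{lem:Rcle5}, which preserve these four conditions (for (D4) simply because residual types are only ever removed). If a zero coordinate arises along the way, \Cref{lem:n7boundary}(1) now applies legitimately. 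Otherwise one obtains (D3), $\lvert\mathcal R^r_c\rvert\le 7$, and (D6) via \Cref{lem:k-1<Rc}, and your argument takes over.

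Finally, the step you flag as the main obstacle is not one. By definition, (D2) supplies $x^r$ with $0\le x^r_R\le c^r_R$, $\sum_R x^r_R\le k^r$ and $u(x^r)\ge v$. It is supported on $\mathcal R^r_c$, where $\sum_{i\in R}\beta^*_i=1$, so $\langle\beta^*,v\rangle\le\langle\beta^*,u(x^r)\rangle=\sum_R x^r_R\le k^r$ immediately. There is no need to track how the reductions shift utility between base and residual; the paper uses the (D5) witness and gets equality.
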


\begin{proof}
	Let $(c,k,b)$ be an arbitrary instance with seven voters. If $\sum_{R \in \mathcal{R}}c_R \le k$, selecting all candidates yields an integral core committee. So we can assume $\sum_{R \in \mathcal{R}}c_R > k$ which implies that there exists a Lindahl equilibrium $(x,p)$ of total size $k$ by \Cref{thm:fraccoreexists}.
	If $\lfloor u(x) \rfloor$ is integral-committee-feasible in $(c,k)$, the core is non-empty by \Cref{thm:roundfracutils}.
	
	Otherwise, let $u=\lfloor u(x)\rfloor$ and start the decomposition by moving, for every $R\in\mathcal R$, $\lfloor x_R\rfloor$ copies of $X_R$ and $c_R-\lceil x_R\rceil$ copies of $Z_R$ to the base part. Set $x_R^r=x_R-\lfloor x_R\rfloor$. Then $c_R^r=1$ exactly when $x_R\notin\mathbb Z$, $u^r=\lfloor u(x^r)\rfloor$, and $\sum_{R\in\mathcal R_c^r}x_R^r=k^r$. Thus, \ref{cond:decomp-base}, \ref{cond:decomp-feasibility}, \ref{cond:decomp-support}, and \ref{cond:decomp-witness} hold for this decomposition. (If we had instead invoked the standard decomposition theorem (\Cref{thm:decomposition}), it would not have given us \ref{cond:decomp-support} and \ref{cond:decomp-witness} when $u^r$ has a zero coordinate.)

	Starting from this decomposition, apply \Cref{lem:minimal-reduction01,lem:Rcle5}. These lemmas preserve \ref{cond:decomp-base}, \ref{cond:decomp-feasibility}, \ref{cond:decomp-support}, and \ref{cond:decomp-witness} and, unless a coordinate of $u^r$ becomes zero, additionally establish \ref{cond:decomp-supply} and $\lvert\mathcal R_c^r\rvert\le n = 7$. If a coordinate $u^r_{i^*}$ becomes zero, the core is non-empty by \Cref{lem:n7boundary} and the argument following it. Otherwise, \Cref{lem:k-1<Rc} gives \ref{cond:decomp-bounds}; putting these together we can deduce that $(c^r,k^r,u^r)$ satisfies \ref{cond:antichain}--\ref{cond:util} and, thus, is a candidate minimal hole. Next, if some voter belongs to every residual type, the core is again non-empty by \Cref{lem:n7boundary} and the argument following it. We may therefore assume that $(c^r,k^r,u^r)$ is a candidate minimal hole satisfying \ref{cond:antichain}--\ref{cond:intersectionempty}.

	For every voter $i\in N$, \ref{cond:decomp-witness} and \ref{cond:decomp-bounds} give a type $R\in\mathcal R_c^r$ with $i\in R$, and \ref{cond:decomp-support} gives $0<x_R<c_R$. Thus, every voter is unsatiated, and the Lindahl coefficients $\beta_i^*$ are well-defined. Since $b_i>0$, budget exhaustion and \Cref{prop:Lindahlprops}(6) imply that $\beta_i^*>0$. Moreover, \ref{cond:decomp-support} and the Lindahl equilibrium conditions give $\sum_{i\in R}\beta_i^*=1$ for every $R\in\mathcal R_c^r$. Hence, $\beta^*\in\Delta^{\mathcal R_c^r}$ and the residual also satisfies \ref{cond:lindahlbetas}. Thus, the candidate minimal hole is covered by the enumeration of \Cref{thm:n7minimalholes}, and must belong to one of the two classes described there.

	For the first class in \Cref{thm:n7minimalholes}, choose the voter $i^*$ given there. The witness $x^r$ from \ref{cond:decomp-witness} witnesses both $u(x)-u^0$ and $u^r$. Hence, the equality $u_{i^*}(x^r)=u_{i^*}^r$ from \Cref{thm:n7minimalholes} implies
	\[
		u_{i^*}^r=u_{i^*}(x^r)\ge u_{i^*}(x)-u_{i^*}^0\ge u_{i^*}^r.
	\]
	Thus, $u_{i^*}(x)$ is integral. Moreover, $u^r-e_{i^*}$ is integral-committee-feasible. Combining its witness with the integral committee for the base part from \ref{cond:decomp-base} gives an integral witness for $u-e_{i^*}$ in the original instance. Finally, \ref{cond:decomp-support}, \ref{cond:decomp-witness}, and \ref{cond:decomp-bounds} give a type $R\ni i^*$ with $0<x_R<c_R$. Thus, $i^*$ is unsatiated. Since $c^r_{i^*}=0$, we have $|R|\ge2$, and \Cref{prop:Lindahlprops}(7) gives $u_{i^*}(x)>b_{i^*}$. By \Cref{thm:reduceLindahlutils}, the core is non-empty.

	For the second class in \Cref{thm:n7minimalholes}, fix any voter $i^*\in N$. By \Cref{thm:n7minimalholes}, $u^r-e_{i^*}$ is integral-committee-feasible in $(c^r,k^r)$, and combining its witness with the integral committee for the base part from \ref{cond:decomp-base} shows that $u-e_{i^*}$ is integral-committee-feasible in $(c,k)$.

	Let $v=u(x)-u^0$ and $\phi_i=u_i(x)-\lfloor u_i(x)\rfloor$. Since $u^0$ is integral and $u^r=\lfloor v\rfloor$, we have $\phi_i=v_i-u_i^r$ for every $i\in N$. Moreover, the witness $x^r$ from \ref{cond:decomp-witness} satisfies $v\le u(x^r)$. Hence,
	\[
		\beta^*_{i^*}+\sum_{i\in N}\beta_i^*\phi_i
		\le \beta^*_{i^*}+\langle\beta^*,u(x^r)-u^r\rangle
		=\beta^*_{i^*}+k^r-\langle\beta^*,u^r\rangle
		\le\frac89<1,
	\]
	where the equality uses $\beta^*\in\Delta^{\mathcal R_c^r}$ and $\sum_{R\in\mathcal R_c^r}x_R^r=k^r$, and the final inequality follows from \Cref{thm:n7minimalholes}. Finally, $\lfloor u_{i^*}(x)\rfloor=u_{i^*}^0+u_{i^*}^r\ge1$ by \ref{cond:decomp-base} and \ref{cond:decomp-bounds}. Thus, \Cref{thm:reduceLindahlutilsfor7} implies that the core is non-empty.
\end{proof}

\begin{figure}[thb]
	\centering
	\begin{tikzpicture}[
		scale=0.95,
		line cap=round,
		line join=round,
		pgline/.style={very thick,draw=white,double=#1,double distance=1.2pt},
		voter/.style={circle,draw,fill=white,inner sep=1.3pt,minimum size=6mm}
		]
		
		\coordinate (p1) at (-1,1);
		\coordinate (p2) at (0,1);
		\coordinate (p3) at (1,1);
		\coordinate (p4) at (-1,0);
		\coordinate (p5) at (0,0);
		\coordinate (p6) at (1,0);
		\coordinate (p7) at (-1,-1);
		\coordinate (p8) at (0,-1);
		\coordinate (p9) at (1,-1);
		
		\coordinate (p10) at (45:3);
		\coordinate (p11) at (0:3);
		\coordinate (p12) at (-45:3);
		\coordinate (p13) at (-90:3);
		
		\draw[pgline=blue!75] (p1) -- (p2) -- (p3) to[out=0,in=135] (p11);
		\draw[pgline=blue!75] (p4) -- (p5) -- (p6) -- (p11);
		\draw[pgline=blue!75] (p7) -- (p8) -- (p9) to[out=0,in=-135] (p11);
		
		\draw[pgline=green!60!black] (p7) to[out=135,in=135,distance=70] (p2) -- (p6) to[out=-45,in=90] (p12);
		\draw[pgline=green!60!black] (p1) -- (p5) -- (p9) -- (p12);
		\draw[pgline=green!60!black] (p3) to[out=135,in=135,distance=70] (p4) -- (p8) to[out=-45,in=180] (p12);
		
		\draw[pgline=red!75] (p1) -- (p4) -- (p7) to[out=-90,in=135] (p13);
		\draw[pgline=red!75] (p2) -- (p5) -- (p8) -- (p13);
		\draw[pgline=red!75] (p3) -- (p6) -- (p9) to[out=-90,in=45] (p13);
		
		\draw[pgline=violet!80] (p1) to[out=-135,in=-135,distance=70] (p8) -- (p6) to[out=45,in=-90] (p10);
		\draw[pgline=violet!80] (p7) -- (p5) -- (p3) -- (p10);
		\draw[pgline=violet!80] (p9) to[out=-135,in=-135,distance=70] (p4) -- (p2) to[out=45,in=180] (p10);
		
		\draw[pgline=orange!80!black]
		(p10) to[bend left=17] (p11)
		(p11) to[bend left=17] (p12)
		(p12) to[bend left=17] (p13);
		
		\foreach \i in {1,...,13}{
			\node[voter] at (p\i) {$\i$};
		}
	\end{tikzpicture}
	\caption{The projective plane $\mathrm{PG}(2,3)$, describing an instance with $n=13$ where the method we used for $n=7$ fails.}
	\label{fig:pg23}
\end{figure}

As far as we know, the properties of candidate minimal holes that we identified in \Cref{thm:n7minimalholes} could continue to work for larger $n$. However, there is an example with $n = 13$ where there is a candidate minimal hole that satisfies neither of our properties. The example is specified by the projective plane $\mathrm{PG}(2,3)$ (see \Cref{fig:pg23}), which encodes an instance with 13 voters with equal budgets and 13 candidate types (each line in the drawing is a candidate type), such that each type is approved by 4 voters and each voter is part of 4 types. Taking $k = 11$, the Lindahl equilibrium assigns $x_R = \frac{11}{13}$ to each type $R$ (by symmetry), giving each voter a utility of $u_i = \frac{44}{13} \approx 3.385$. In addition, we have $\beta_i = \frac{1}{4}$ in the Lindahl equilibrium. But then we have $\beta_{i^*}+k-\sum_{i\in N}\beta_i\lfloor u_i(x)\rfloor = \frac{1}{4} + 11 - 13 \cdot \frac{1}{4} \cdot 3 = 1.5$, which is not strictly less than 1. Thus, the analogue of \Cref{thm:n7minimalholes} fails for $n = 13$ (and perhaps earlier).

\section{Computing committees in the core for $n \le 5$}\label{sec:corecomputation}

In this section, we provide an efficient algorithm for computing committees in the core for $n \le 5$. Using the translation of \Cref{thm:votertypes}, the same algorithm will be applicable to arbitrary instances with up to five voter types.
To do so, we have to solve several algorithmic challenges. In particular, given a fractional-committee-feasible integer utility vector, we need to be able to compute an integral committee achieving that vector. We would also ideally want to find a fractional core committee that we can use as a starting point for the rounding process; however, there is no known efficient algorithm for finding such a fractional committee exactly, so we will instead show that a sufficiently close approximation suffices. 
Finally, we will aim for our algorithm to produce a committee that is not just in the core but that is also Pareto-optimal.%
\footnote{\label{fn:pareto}An integral committee $x$ is \emph{Pareto optimal} if there is no other integral committee $y$ such that $u_i(y) \ge u_i(x)$ for all $i \in N$ with strict inequality for at least one $i \in N$. Note that the core with $S = N$ alone only implies \emph{weak} Pareto optimality: the core only excludes Pareto improvements that strictly increase \emph{every} voter's utility.}
We will tackle each of these issues in turn and also explain the difficulties of extending our algorithms for $n>5$.

\subsection{Implementing fractional-committee-feasible utility vectors}

Given a fractional committee $x \in \mathcal{P}$ for an instance $(c,k,b)$ with $n \le 5$, we want to round $x$ to an integral committee $x' \in \mathcal{W}$ such that $u_i(x') \ge \lfloor u_i(x)\rfloor$ for all $i \in N$.
\Cref{alg:roundfraccommittees} achieves this by implementing the rounding procedure described in the proof of \Cref{prop:roundingcommittees}. It solves at most $\lvert \mathcal{R}_c \rvert$ linear programs while guaranteeing $\lfloor x_R \rfloor \le x'_R \le \lceil x_R \rceil$ for all $R \in \mathcal{R}$.

\bigskip
\begin{algorithm}[H]\LinesNumbered
	\caption{Round fractional committees}\label{alg:roundfraccommittees}
	\KwIn{Instance $(c,k,b)$ with $n \le 5$, and a fractional committee $x \in \mathcal{P}$.}
	\KwOut{Integral committee $x' \in \mathcal{W}$ with $u_i(x') \ge \lfloor u_i(x)\rfloor$ for all $i \in N$ and $\lfloor x_R \rfloor \le x'_R \le \lceil x_R \rceil$ for all $R \in \mathcal{R}$.}
	
	$x' \leftarrow \lfloor x \rfloor$ \tcp{initial integral (sub-)committee}

	$\widetilde{\mathcal{R}} \leftarrow \{R \in \mathcal{R}_c \colon x_R>\lfloor x_R \rfloor\}$ \tcp{remaining candidate types to consider}
	
	\While{$u_i(x')<\lfloor u_i(x)\rfloor$ for some $i \in N$}
	{$R^* \leftarrow \text{pick arbitrary } R \in \widetilde{\mathcal{R}}$\;
		\If{the following linear program has a feasible solution $z$
		\[
		\begin{aligned}
			\sum_{\mathclap{R \in \widetilde{\mathcal{R}} : R \ni i}} z_R &\ge %
			\lfloor u_i(x) \rfloor-u_i(x')
			\quad \textup{for all $i \in N$}, \\
			\sum_{R \in \widetilde{\mathcal{R}}} z_R &\le k-\sum_{R \in \mathcal{R}} x'_R, \quad
			z_{R^*}=1, \quad
			0 \le  z_R \le 1 \quad \textup{for all $R \in \widetilde{\mathcal{R}} \setminus \{R^*\}$}.
		\end{aligned}
		\]}
		{
			$x'_{R^*}\leftarrow x'_{R^*}+1$\;
		}
	$\widetilde{\mathcal{R}} \leftarrow \widetilde{\mathcal{R}} \setminus \{R^*\}$\;
	}	
	\Return $x'$\;	
\end{algorithm}
\bigskip

\Cref{prop:roundingcommittees} promises that there exists a rounded committee $x'$ with $x'_R \ge \lfloor x_R \rfloor$ for all $R \in \mathcal{R}$, and so \Cref{alg:roundfraccommittees} starts by locking in $\lfloor x \rfloor$ as part of the final integral committee $x'$. On top of that, \Cref{prop:roundingcommittees} also ensures $x'_R \le \lceil x_R \rceil$ for all $R \in \mathcal{R}$. Therefore, our task boils down to deciding which candidate types $R$ to round up ($x'_R = \lceil x_R \rceil$) and which ones to leave rounded down ($x'_R = \lfloor x_R \rfloor$).
The algorithm goes through the types $R \in \mathcal{R}$ one-by-one, in each step making exactly this decision. 

Observe that we can skip types with $x_R \in \mathbb Z$ since rounding them up is the same as rounding them down (line 2). 
When the while-loop reaches a type $R^*$, we solve a linear program to decide if it is safe to round up this specific type.
Note that our remaining task at this point is to pick at most $k -\sum_{R \in \mathcal{R}} x'_R$ types to round up while guaranteeing that the final output committee satisfies $u_i(x') \ge \lfloor u_i(x) \rfloor$ for all $i \in N$. 
Given the candidates we have placed into $x'$ already, the candidates we will place into the committee in the future will need to give each $i \in N$ a utility of at least $\lfloor u_i(x) \rfloor - u_i(x')$. 
The linear program essentially asks whether there is a way to achieve this when deciding to round up $R^*$ (constraint $z_{R^*} = 1$). Notice that the linear program can be seen as a check for whether the integer utility vector $\lfloor u(x) \rfloor - u(x')$ is fractional-committee-feasible in a reduced instance obtained after fixing the previous decisions and fixing $R^*$ to be rounded up. By normality, this is equivalent to integral-committee feasibility.

Thus, if the LP is feasible, then there exists a valid way to round the remaining types up or down that achieves our utility requirement while rounding up $R^*$. So the algorithm rounds up $R^*$ and proceeds to the next type. If not, then there does not exist such a valid way with $R^*$ rounded up. But from \Cref{prop:roundingcommittees}, we know that there does exist a valid rounding; we can deduce that in this rounding, $R^*$ is rounded down. Hence the algorithm does so and proceeds to the next type.

\begin{remark}
	The above rounding technique fails for $n=6$. Consider a six-voter instance with $k=4$, $\mathcal R_c = \{12,13,23,45,46,56,14\}$, and $c_R =1$ for each $R \in \mathcal{R}_c$. Observe that this instance corresponds to the example in \Cref{fig:two-triangles-intro} with one additional edge between voters $1$ and $4$. The utility vector $u = (2,1,1,2,1,1)$ is fractional-committee-feasible with witness $x$, where $x_R = 1/2$ for all $R \in \mathcal{R}_c \setminus \{14\}$ and $x_{14} = 1$.
	Rounding down $x$ results in $x_{14} = 1$ and $x_R = 0$ for all $R \in \mathcal{R}_c \setminus \{14\}$.
	Observe that we ended up in the minimal hole from \Cref{fig:two-triangles-intro}. Hence, no witness for integral-committee-feasibility can have $x_{14}=1$. However, the committee consisting of $\{12,13,45,46\}$ is actually a witness for integral-committee feasibility of $u$ in $(c,k)$.
\end{remark}

\subsection{Rounding an approximate fractional core committee}
\label{sec:round-fractional-core}

Our main algorithm for computing a core committee begins by constructing an $\varepsilon$-approximate fractional core committee. We then round down the obtained utilities. Using normality of the committee election monoid with at most five voters, the rounded-down utilities can be achieved by an integral committee and this committee will be in the core. We can find such a committee using \Cref{alg:roundfraccommittees} from the previous section. However, the algorithm crucially depends on the rounded-down utilities being fractional-committee-feasible so that we can apply \Cref{thm:monoidnormal}. But this is not obvious as we start with an approximate fractional committee (which may overspend the budget $k$). Using linear programming arguments, we show that for an appropriately chosen $\varepsilon$, fractional-committee feasibility can be guaranteed.

It is a long-standing open problem to find an efficient algorithm that computes a fractional core committee. Recently, \citet{KrPe25b} found a polynomial-time algorithm that computes an \emph{approximate} fractional core committee. Their method works by approximately optimizing a convex program whose optimum is a Lindahl equilibrium. Their notion of approximation allows the fractional committee both to overshoot the available budget $k$ by $\varepsilon$ and to admit core deviations as long as they give at most an additive $\varepsilon$ of extra utility to coalition members. For our purposes, it is more convenient not to relax the notion of core deviations and to allow only budget overshooting. The result of \citet{KrPe25b} can easily be adapted to achieve this. We say that a vector $x \in \R_{\ge 0}^{\lvert \mathcal{R}\rvert}$ with $x_R \le c_R$ for all $R \in \mathcal{R}$ satisfies the \emph{fractional core condition} if there does not exist a non-empty $S \subseteq N$ and fractional committee $y \in \mathcal{P}_S$ such that $u_i(y) > u_i(x)$ for all $i \in S$.

\begin{restatable}[based on \citealp{KrPe25b}]{theorem}{lindahlEfficientComputation}\label{thm:Lindahl_efficient_computation}
	For a given instance $(c,k,b)$ and $\varepsilon > 0$, there exists an algorithm running in time polynomial in the problem size and in $\log\frac{1}{\varepsilon}$ that computes a vector $x \in \R_{\ge 0}^{\lvert \mathcal{R}\rvert}$ with $\sum_{R \in \mathcal{R}} x_R \le k + \varepsilon$ that satisfies the fractional core condition.
\end{restatable}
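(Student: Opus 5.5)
The plan is to adapt the convex-programming approach of \citet{KrPe25b}. Their convex program (an Eisenberg--Gale/Fisher-market-style program for public goods with capped linear utilities) has as optimum a Lindahl equilibrium, and they show an approximately optimal solution can be computed in time polynomial in the input size and $\log\frac1\varepsilon$ (e.g., via the ellipsoid or interior-point method). Their guarantee is two-sided: the committee $\tilde x$ overspends the budget by at most $\delta$, and every deviation $y\in\mathcal P_S$ satisfies $u_i(y)\le u_i(\tilde x)+\delta$ for some $i\in S$. The task is to convert this into a vector that has only budget slack and satisfies the exact fractional core condition.

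\textbf{Step 1.} Run their algorithm with a smaller accuracy $\delta$, to be chosen polynomially related to $\varepsilon$ and to the instance size (e.g., $\delta=\varepsilon/(2n\,\lvert\mathcal R_c\rvert)$, or a more careful choice). This yields $\tilde x$ with $\sum_R\tilde x_R\le k+\delta$ and with the $\delta$-additive approximate core property.

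\textbf{Step 2.} Absorb the utility slack into extra budget. Construct $x$ by adding, on top of $\tilde x$, a small amount of each available type. For example, set $x_R=\min(c_R,\tilde x_R+\eta)$ for all $R\in\mathcal R_c$, with $\eta$ chosen so that $\eta\lvert\mathcal R_c\rvert\le\varepsilon-\delta$. Every unsatiated voter then gains at least $\eta$ in utility. Choosing $\eta\ge\delta$ would ensure that any blocking $y$ against $x$ also blocks $\tilde x$ in the $\delta$-approximate sense, giving a contradiction.

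\textbf{The obstacle.} Voters for whom all approved types are already at their cap $c_R$ do not gain from the increase. Two observations handle them. First, such a voter already has utility $\sum_{R\ni i}c_R$, which is the maximum achievable, so no $y$ can strictly improve them. Second, a voter approving some type with $\tilde x_R<c_R$ gains at least $\min(\eta,\,c_R-\tilde x_R)$; if $c_R-\tilde x_R<\eta$, that type becomes fully supplied. I expect this to need a case analysis showing that every voter $i\in S$ in a blocking coalition either is already at maximum utility (impossible to improve) or gains at least $\delta$.

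A cleaner alternative is to exploit the integrality of supplies. Take $\delta\le\eta/2$ and set $x_R=c_R$ whenever $c_R-\tilde x_R<\eta$, and $x_R=\tilde x_R+\eta$ otherwise. Then each voter either reaches maximum utility or gains $\eta>\delta$. The total overshoot is at most $\delta+\eta\lvert\mathcal R_c\rvert\le\varepsilon$. Since $\log\frac1\delta$ is polynomial in $\log\frac1\varepsilon$ and the input size, the running time remains polynomial.
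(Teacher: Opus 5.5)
Your proposal is correct and follows the paper's argument: take the two-sided $\delta$-approximate solution of \citet{KrPe25b}, then top it up so that every voter either gains at least $\delta$ or has all approved types saturated, which removes the utility slack at the cost of a small extra budget overshoot. The only difference is bookkeeping: the paper tops up voter by voter (extra overshoot $n\hat\varepsilon$ with $\hat\varepsilon=\varepsilon/(n+1)$), whereas you raise every available type by $\eta$ (extra overshoot $\eta\lvert\mathcal R_c\rvert$), which works equally well since only $\log\frac{1}{\varepsilon}$ enters the running time; note also that your two variants of Step~2 are the same construction, because $\min(c_R,\tilde x_R+\eta)=c_R$ exactly when $c_R-\tilde x_R<\eta$, so the anticipated case analysis is not needed.
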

We give the proof in \Cref{app:omitted-corecomputation}.

Next, we will show that there is a value of $\varepsilon$ small enough that the budget overshoot is not an issue for our rounding task (it turns out that $\varepsilon < \frac{1}{60}$ suffices).
For this, we need some notation.
Given an integer utility vector $u \in \Z^n$, let $\tau_c(u)$ denote the minimum fractional committee size needed to realize $u$, that is,
\begin{equation}\label{eq:lp_min_feasibility}
    \tau_c(u) := \min \left\{ \sum_{R \in \mathcal R} z_R : z \in \R_{\ge 0}^{\lvert \mathcal R \rvert},\; z_R \le c_R \ \forall R \in \mathcal R,\; \sum_{R \ni i} z_R \ge u_i \ \forall i \in N \right\}.
\end{equation}
Moreover, for $n \ge 1$, define
\[
    L_n := \operatorname{lcm}\left\{ \lvert \det(B)\rvert : B \text{ is an invertible binary square matrix of order at most } n \right\}
\]
as the least common multiple ($\operatorname{lcm}$) of the set of absolute values of the determinants of these matrices. Combining the known maximum determinants for binary matrices of order $n \le 5$ \citep{Ehli64a} with the fact that every smaller determinant value is attainable \citep{Craig90a}, we obtain the explicit values of $L_n$ shown in \Cref{tab:Ln}; in particular, $L_5 = 60$.

\begin{table}
	\centering
\begin{tabular}{c|ccccc}
	$n$ & 1 & 2 & 3 & 4 & 5 \\
	\hline
	$L_n$ & 1 & 1 & 2 & 6 & 60
\end{tabular}
\caption{Least common multiples of the set of possible determinants of binary matrices of order at most $n$.}
\label{tab:Ln}
\end{table}
Note that according to its definition, $\tau_c(u)$ is the solution of a linear program and is therefore rational. By reasoning about basic feasible solutions of this linear program and using Cramer's rule, we can give a bound on the denominator of this rational number. In particular, for $n = 5$, our bound implies that $\tau_c(u)$ is a multiple of $\frac{1}{60} \approx 0.0166$. Now suppose that we can establish, for example, that $\tau_c(u) \le k + 0.01$. Then, thanks to the denominator bound, we can deduce that in fact $\tau_c(u) \le k$. The formal statement is as follows; we give the proof in \Cref{app:omitted-corecomputation}.

\begin{restatable}{lemma}{tauDenominator}\label{lem:tau-denominator}
    For every supply vector $c \in \Z_{\ge 0}^{|\mathcal R|}$ and every integer utility vector $u \in \Z^n$ with $\tau_c(u)<\infty$, we have
    \[
        \tau_c(u)\in \frac{1}{L_n}\Z.
    \]
    In particular, if $0<\delta<\frac{1}{L_n}$ and $\tau_c(u) \leq k'+\delta$ for some $k' \in \mathbb Z$, then $\tau_c(u)\le k'$.
\end{restatable}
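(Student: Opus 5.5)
The plan is to treat $\tau_c(u)$ as the optimal value of the linear program \eqref{eq:lp_min_feasibility} and apply standard polyhedral theory. First I would remove the types with $c_R=0$, so that only the variables $z_R$ with $R\in\mathcal R_c$ remain. The feasible region is a nonempty polyhedron, since $\tau_c(u)<\infty$. It contains no line, because it lies in the box $0\le z_R\le c_R$; in fact it is a polytope. The objective is bounded below by $0$, so the minimum is attained at a vertex $z^*$. It therefore suffices to show that every vertex $z^*$ satisfies $\sum_R z^*_R\in\frac{1}{L_n}\Z$. Since $\sum_R z^*_R$ is a sum of coordinates, it is enough to show that each coordinate $z^*_R$ lies in $\frac{1}{L_n}\Z$.

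Next I would use the vertex structure. At the vertex $z^*$, the tight constraints have full rank $|\mathcal R_c|$. Let $F$ be the set of \emph{free} coordinates $R$, those with $0<z^*_R<c_R$. The remaining coordinates are fixed at $0$ or at $c_R$, which are integers. Among the tight utility constraints $\sum_{R\ni i}z_R=u_i$, choose a set $I$ of voters whose incidence rows, restricted to the columns in $F$, form an invertible square submatrix $B$. Such a choice exists because the tight system must determine the free coordinates uniquely. Then $|F|=|I|\le n$, and $B$ is a binary matrix of order at most $n$. The free coordinates solve
\[
Bz_F = u_I - (\text{contribution of fixed coordinates}).
\]
The right-hand side is an integer vector, because $u$ is integral, $c$ is integral, and the matrix entries are in $\{0,1\}$. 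By Cramer's rule, each $z^*_R$ with $R\in F$ equals an integer divided by $\det B$. Since $|\det B|$ divides $L_n$ by definition, we get $z^*_R\in\frac{1}{L_n}\Z$. The fixed coordinates are integers, so $\tau_c(u)=\sum_R z^*_R\in\frac1{L_n}\Z$. If $F=\emptyset$, the value is simply an integer.

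For the ``in particular'' part, suppose $\tau_c(u)\le k'+\delta$ with $0<\delta<\frac1{L_n}$. Write $\tau_c(u)=m/L_n$ with $m\in\Z$. Then $m\le k'L_n+\delta L_n$, and $\delta L_n<1$. Since $m$ and $k'L_n$ are integers, this forces $m\le k'L_n$, that is, $\tau_c(u)\le k'$.

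The main obstacle is the step that extracts a square invertible binary submatrix of order at most $n$ from the tight constraints. The vertex might be defined partly by box constraints, and the tight utility rows might be redundant. I would handle this with the rank argument above. Once the fixed coordinates are substituted, the free coordinates are uniquely determined by the tight utility equations, so the restricted incidence matrix $A_{\cdot,F}$ has full column rank $|F|$. Choosing $|F|$ linearly independent rows gives $B$, and $|F|\le n$ follows because there are only $n$ rows.
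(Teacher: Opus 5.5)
Your proof is correct and takes essentially the same route as the paper. You take an optimal vertex (basic feasible solution) of the LP, note that all non-free coordinates sit at integral bounds, and recover the at most $n$ free coordinates from an invertible binary subsystem by Cramer's rule, so that every coordinate lies in $\frac{1}{L_n}\Z$. Your explicit rank argument (full column rank of the tight utility rows restricted to the free columns, then choosing $|F|$ independent rows) is a slightly more careful version of the paper's counting of active constraints, and your handling of the ``in particular'' part matches the paper's.
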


We can now present our algorithm.

\begin{algorithm}\LinesNumbered
	\caption{Compute core committee for $n \le 5$}\label{alg:coren=5}
	\KwIn{Instance $(c,k,b)$ with $n \le 5$.}
	\KwOut{Integral committee $z \in \mathcal{W}$.}
	$\varepsilon \gets \frac{1}{L_n + 1}$\;
	Compute $x \in \R_{\ge 0}^{\lvert \mathcal{R}\rvert}$ with $\sum_{R \in \mathcal{R}} x_R \le k + \varepsilon$ that satisfies the fractional core condition according to \Cref{thm:Lindahl_efficient_computation}\;
	$u \gets \lfloor u(x)\rfloor$\;
	Compute fractional committee $x' \in \mathcal{P}$ achieving utility vector $u$ by solving the LP \eqref{eq:lp_min_feasibility}\;
	Compute integral committee $z \in \mathcal{W}$ by rounding $x'$ via \Cref{alg:roundfraccommittees}\;
	\Return $z$\;
\end{algorithm}

The algorithm sets $\varepsilon = \frac{1}{L_n + 1} < \frac{1}{L_n}$ and computes an approximate fractional core solution $x$ according to \Cref{thm:Lindahl_efficient_computation} satisfying $\sum_{R \in \mathcal R} x_R \leq k + \varepsilon$. It then takes the integer utility vector
\[
u := \lfloor u(x)\rfloor \in \Z_{\ge 0}^n.
\]
Because $x$ achieves this utility vector, we have 
$
\tau_c(u) \le k + \varepsilon.
$
By \Cref{lem:tau-denominator} with $k'=k$ and $\delta=\varepsilon<\frac{1}{L_n}$, we deduce that $\tau_c(u) \le k$.
Hence, the integer utility vector $u$ is fractional-committee-feasible for budget $k$, i.e., for the original instance $(c,k,b)$, and the algorithm computes a fractional committee $x'$ witnessing this by solving the LP \eqref{eq:lp_min_feasibility}.
Since $n \le 5$, \Cref{thm:monoidnormal} implies that there exists an integral committee $z \in \mathcal{W}$ achieving utility vector $u$ that can be computed via \Cref{alg:roundfraccommittees} from $x'$.

It remains to show that $z$ lies in the core. Suppose for contradiction that there exists a non-empty coalition $S \subseteq N$ and a committee $y \in \mathcal{W}_S$ such that
\[
u_i(y) > u_i(z) \qquad\text{for all } i \in S,
\]
and hence, because $u_i(y)$ is an integer and $z$ achieves utility vector $u = \lfloor u(x)\rfloor$, we have
\[
u_i(y) \ge u_i(z)+1 \ge \lfloor u_i(x) \rfloor + 1 > u_i(x) \qquad\text{for all } i \in S.
\]
Since $y \in \mathcal{W}_S \subseteq \mathcal{P}_S$, this contradicts the fact that $x$ satisfies the fractional core condition.
This proves that for $n \le 5$, a core committee can be computed in polynomial time. 

For $n \in \{6,7\}$, our theoretical core existence results require us to start with a special fractional core committee, namely one corresponding to a Lindahl equilibrium.
This suggests that algorithms for $n \in \{6,7\}$ also require more than finding an arbitrary fractional core allocation and rounding it while preserving the component-wise rounded-down utilities. To make use of our theory for Lindahl equilibria, they must determine the rounded down utilities of a committee $x$ in a Lindahl equilibrium.
Hence, we require exact feasibility tests of the form
\[
	u_i(x)\ge q \qquad\text{and}\qquad u_i(x)<q \qquad \text{for } q \in \mathbb{Z}.
\]
Approximate numerical methods do not suffice in general, since some $u_i(x)$ may be equal to, or arbitrarily close to, the threshold $q$, so a finite-precision approximation cannot reliably distinguish these two cases.

\subsection{Computing a Pareto-optimal core committee}

Now we explain how we can guarantee that the output committee not only lies in the core but is also Pareto-optimal (see \Cref{fn:pareto}).

\begin{theorem}\label{thm:core_po_compute_n5}
	For $n \le 5$, a Pareto-optimal core committee can be computed in polynomial time.
\end{theorem}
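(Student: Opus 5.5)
Our plan is to modify \Cref{alg:coren=5} between computing $u=\lfloor u(x)\rfloor$ and the final rounding step. After line 3 we have an integer vector $u$ that is fractional-committee-feasible for budget $k$, by \Cref{lem:tau-denominator}. By the argument following \Cref{alg:coren=5}, any integral committee $z$ with $u(z)\ge u$ lies in the core. We therefore look for an integer vector $u^\ast\ge u$ that is fractional-committee-feasible and Pareto optimal among all fractional-committee-feasible integer vectors. Then we implement $u^\ast$ by an integral committee using \Cref{thm:monoidnormal} and \Cref{alg:roundfraccommittees}.

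We compute $u^\ast$ greedily. We start from $u^\ast=u$ and process the voters $i=1,\dots,n$ in order. For each voter, we raise $u^\ast_i$ to the largest integer $t$ such that $(u^\ast_1,\dots,u^\ast_{i-1},t,u^\ast_{i+1},\dots,u^\ast_n)$ is still fractional-committee-feasible, meaning $\tau_c(\cdot)\le k$. This value can be found exactly by one LP: maximize $\sum_{R\ni i}z_R$ subject to the constraints of \eqref{eq:lp_min_feasibility} with the other coordinates fixed and $\sum_R z_R\le k$, then take the floor of the optimum. Alternatively, we can binary search over $t\in\{u_i,\dots,k\}$. Either way this takes polynomially many LPs, each of polynomial size.

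We then show that the final $u^\ast$ is Pareto optimal among integral-committee-feasible vectors. Suppose some integral committee $y$ satisfies $u(y)\ge u^\ast$ with $u_j(y)>u^\ast_j$ for some voter $j$. Take the integer vector $w=(u^\ast_1,\dots,u^\ast_{j-1},u_j(y),u^\ast_{j+1},\dots,u^\ast_n)$. It is integral-, hence fractional-, committee-feasible, since $y$ witnesses it. At the time voter $j$ was processed, coordinates $j'<j$ already had their final values, and coordinates $j'>j$ were at most their final values, because they only increase later. So the vector that was tested at step $j$ with $t=u_j(y)$ is dominated by $w$ and is therefore also feasible. This contradicts the maximality of the chosen $t$. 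Monotonicity of feasibility under decreasing coordinates is exactly what the $S_i$ generators express.

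Next we implement $u^\ast$. By \Cref{thm:monoidnormal} there is an integral committee $z\in\mathcal W$ with $u(z)\ge u^\ast$. We obtain $z$ by taking a fractional witness $x'$ from the LP and running \Cref{alg:roundfraccommittees} on it, which rounds to $\lfloor u(x')\rfloor\ge u^\ast$.

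Since $u(z)\ge u^\ast\ge u$, the committee $z$ is in the core. For Pareto optimality, suppose $z$ were dominated by an integral $y$. Then $u(y)\ge u(z)\ge u^\ast$ with $u(y)\ne u(z)$. Either $u(y)\ne u^\ast$, which contradicts the previous paragraph, or $u(y)=u^\ast$; in that case $u(z)=u^\ast$ as well, so $u(y)=u(z)$, which is also a contradiction.

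The step we expect to be the main obstacle is this interplay: the greedy procedure must be run with respect to fractional feasibility, because that is what we can test with LPs. Normality is what equates fractional and integral feasibility for integer vectors, so the greedy maximum is also maximal integrally. We also have to ensure that the output committee's utility is not merely at least $u^\ast$ but cannot exceed it in a way that breaks dominance. The argument above handles this, since any strict excess would itself contradict greedy maximality.
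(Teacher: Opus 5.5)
Your proof is correct, but it takes a different route from the paper's proof of this theorem. The paper also starts from $u=\lfloor u(x)\rfloor$. It then solves a single integer program that maximizes utilitarian welfare $\sum_R |R|\,x_R$ subject to $u(x)\ge u$, $\sum_R x_R\le k$, and $0\le x_R\le c_R$. Normality guarantees that this program is feasible, and Pareto optimality follows because any Pareto improvement would raise the welfare. Polynomial running time comes from Lenstra's algorithm, since for $n\le 5$ there are at most $31$ variables.

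Your serial-dictatorship approach matches the alternative sketched in \Cref{rem:lpsinsteadofilp}. That remark, however, raises $u'_i$ one unit at a time, so it needs up to $nk$ LPs and the paper calls it only pseudo-polynomial. Your observation fixes this: the largest feasible $t$ for voter $i$ equals $\lfloor \max \sum_{R\ni i} z_R\rfloor$ over the LP polytope (or can be found by binary search over $\{u_i,\dots,k\}$). This removes the dependence on $k$, so your version runs in genuinely polynomial time using only linear programs and \Cref{alg:roundfraccommittees}, with no appeal to Lenstra.

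Your Pareto argument is sound, and it is slightly more careful about where normality enters. Maximality of $u^\ast$ among fractional-committee-feasible integer vectors trivially implies maximality among integral-committee-feasible ones. Normality is needed only to make $u^\ast$ itself integrally realizable and for \Cref{alg:roundfraccommittees} to succeed. Your closing remark somewhat overstates normality's role in the maximality step.

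In exchange, the paper's ILP gives a one-shot, order-independent choice: a welfare-maximizing committee among those meeting $u$. Your output depends on the voter order. Both approaches yield a Pareto-optimal core committee.
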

\begin{proof}
	Run the first three lines of \Cref{alg:coren=5} to compute the utility vector $u$. Consider the following integer linear program that looks for a committee maximizing utilitarian social welfare subject to achieving the utility vector $u$.
	\begin{equation*}
	\begin{aligned}
		\max \quad & \sum_{R\in\mathcal R} \lvert R\rvert\,x_R\\
		\text{s.t.}\quad
		& \sum_{R\ni i} x_R \ge u_i \quad \text{for all } i\in N,\\
		& \sum_{R\in\mathcal R} x_R \le k, \quad 0\le x_R \le c_R, \quad x_R\in\Z_{\ge 0}\\
	\end{aligned}
	\end{equation*}
	As argued in \Cref{sec:round-fractional-core}, this ILP has a feasible solution due to \Cref{thm:monoidnormal}. Since $n \le 5$, the program has at most $2^5 - 1 = 31$ variables.
	Thanks to a theorem by \citet{len1983integer}, an ILP with a constant number of variables can be solved in polynomial time. So let $x^*$ be the optimal solution of the program. Due to the constraints in the bottom row of the program, $x^*$ is a committee. The committee is Pareto-optimal, since any Pareto improvement would strictly increase the sum of utilities while preserving feasibility. By the argument in \Cref{sec:round-fractional-core}, since we have $u_i(x^*) \ge u_i$ for all $i \in N$, $x^*$ is in the core. Hence $x^*$ is a Pareto-optimal core committee.
\end{proof}

\begin{remark}\label{rem:lpsinsteadofilp}
	An alternative way of obtaining Pareto-optimality that avoids needing to solve an ILP is to run some type of serial dictatorship starting at the utility vector $u$ that we obtained as part of \Cref{alg:coren=5}. In detail, start at $u' \gets u$, then iterate through the voters $i \in N$ and repeatedly increase $u'_i$ by 1 whenever this increase preserves fractional-committee-feasibility. This can be checked by the following linear feasibility program with variables $(x_R)_{R \in \mathcal{R}}$: 
	\begin{equation*}
	\begin{aligned}
		\sum_{R \ni i} x_R \ge u'_i & \quad \text{for all $i \in N$}, \\
		\sum_{R \in \mathcal R} x_R \le k \\
		0 \le x_R \le c_R & \quad \text{for all $R \in \mathcal R$}.
	\end{aligned}
	\end{equation*}
	If the next increase for $i$ would destroy feasibility, move to the next voter. This process terminates after at most $nk$ steps. Next, use \Cref{alg:roundfraccommittees} to obtain a committee $x \in \mathcal{W}$ achieving the integer utility vector $u'$. Since $u' \ge u$, by the argument in \Cref{sec:round-fractional-core}, the resulting committee is in the core. It is also Pareto optimal: if there was a Pareto improvement $y \in \mathcal{W}$, let $i$ be the first voter (in the order used by the algorithm) with $u_i(y) > u_i(x)$ (and hence $u_i(y) \ge u_i(x) + 1 > u_i'$). But then the algorithm would not have proceeded to the next voter when it did, since increasing the utility for $i$ by $1$ would have preserved fractional-committee-feasibility (witnessed by $y \in \mathcal{P}$). 
	
	Note that the sequence of Pareto improvements is not limited to serial dictatorships but can be chosen arbitrarily. In contrast to the Lenstra-ILP-based method from \Cref{thm:core_po_compute_n5}, this way of obtaining Pareto optimality is only pseudo-polynomial when the supplies $c_R$ and $k$ are encoded in binary.
\end{remark}

\section{Counterexamples beyond approval-based committee elections}\label{sec:counterex}
In principle, the applied techniques via normal affine monoids could be used in a variety of models beyond approval-based committee elections to prove existence of core-stable outcomes.
However, in this section we give examples that illustrate that the techniques do not work for many other natural models, since for them normality of the associated monoid already fails for $n=3$.

\subsection{Non-unit costs and participatory budgeting}

The approval-based committee setting is based on a cardinality constraint. More generally, one can replace this by a knapsack constraint, where each candidate is assigned a cost and an outcome is feasible if its total cost does not exceed a given global budget. This model is also called \emph{participatory budgeting} (PB), a process used by many cities to let residents influence how the local government spends its budget. 
Proportionality properties for the PB setting have been well-studied \citep{ALT18a,RSM25a,BFL+23a,PPS21a}. The definition of the core extends easily to this more general setting, using \emph{cardinality-based utilities} (where a voter's utility is the number of approved projects in the outcome). However, normality fails even for $n = 3$.

\begin{example}[Non-unit costs]\label{ex:nonunitcosts-n3}
    Let $N=\{1,2,3\}$ and $\mathcal{R}_c = \{12, 13, 23\}$ with $c_R = 1$ for each $R \in \mathcal R_c$, and let each candidate have a cost of $2$. 
    The total budget of $3$ is split equally among the three voters.
    Consider the utility vector $u=(1,1,1)$.
    This vector is fractional-committee-feasible: select half of each of the three candidates; then the total cost is $3 \cdot (\frac{1}{2} \cdot 2) = 3$, and each voter receives utility $2 \cdot \frac{1}{2}= 1$.

    However, $u$ is not integral-committee-feasible. 
    Indeed, with a budget $3$ one can afford at most one candidate, and each candidate is approved by only two voters. 
    Hence, no integral committee can give all three voters utility at least $1$.
    Thus, the integer utility vector $u$ is fractional-committee-feasible but not integral-committee-feasible in the PB setting, so the corresponding monoid is not normal for $n=3$. \qed
\end{example}

The same example can be adapted to the case of \emph{cost utilities}, where a voter's utility is the total cost of approved projects in the outcome by using the utility vector $u = (2,2,2)$. For cost utilities, there actually exists an example where the core is empty for $n = 3$ \citep{maly2025empty}. For cardinality-based utilities, non-emptiness remains an open problem in the approval voting context.

\subsection{Additive valuations}

Returning to the committee setting with unit costs, another generalization concerns the utility functions: instead of approvals, one could let voters specify a non-negative score for each candidate, with a voter's utility for an outcome being the sum of the scores of the candidates in it. This extends our model to \emph{additive utilities}. The core for additive utilities has been studied in the context of core approximations \citep{munagala2022approximate} as well as relaxations like EJR1 and FJR \citep{PPS21a}. The core itself can be empty in this model. The smallest known counterexample uses $n = 6$ voters (\citealp{PPS21a}, Example 2; see also \citealp{fain2018fair}, Appendix C) and combines two ``Condorcet cycles'' (reminiscent of \Cref{fig:two-triangles-intro}). Normality for this model fails even for $n = 3$ and a single Condorcet cycle.

\begin{example}[Additive valuations]\label{ex:additive-n3}
    Let $N=\{1,2,3\}$, $k=1$, and consider three unit-cost candidates $C = \{c_1,c_2,c_3\}$ with additive valuations given by
    \[
        \begin{pmatrix}
        2 & 0 & 1\\
        1 & 2 & 0\\
        0 & 1 & 2
        \end{pmatrix},
    \]
    where the entry in row $i$ and column $j$ denotes voter $i$'s valuation for candidate $c_j$.
    Again, consider the utility vector $u=(1,1,1)$. 
    This vector is fractional-committee-feasible: the fractional committee $x_{c_1}=x_{c_2}=x_{c_3}=\frac{1}{3}$
    has total size $1$ and provides every agent with utility $1$.

    However, $u$ is not integral-committee-feasible. 
    Since $k=1$, any integral committee consists of a single candidate. 
    But each of them gives one voter utility $0$, so no integral committee can realize $(1,1,1)$.
    Hence, the corresponding monoid is not normal already for $n=3$. \qed
\end{example}

\subsection{Droop quota}

The core is defined based on an intuition that a group of size at least $\ell \cdot \frac{n}{k}$ should deserve to decide over $\ell$ committee members. The threshold ``$\frac{n}{k}$'' is known as the \emph{Hare quota}. However, it is possible to define the core based on the smaller \emph{Droop quota} ``$\frac{n}{k+1}$'' which gives the same guarantees to groups of size \emph{strictly larger} than $\ell \cdot \frac{n}{k+1}$.
Using standard notation (see \Cref{sec:votertypes}), a committee $W\subseteq C$ with $|W|\le k$ is in the \emph{Droop core} if there is no non-empty coalition $S\subseteq N$ and no committee $T\subseteq C$ such that $|A_i\cap T| > |A_i\cap W|$ for all $i \in S$ and $|S| > |T| \cdot \frac{n}{k+1}$ (while the Hare quota only requires $|S| \ge |T| \cdot \frac{n}{k}$). 

The Droop quota is attractive since it gives guarantees to smaller groups. It turns out that for several representation axioms like EJR, if their definitions are strengthened to the Droop quota, they remain satisfiable by standard rules like PAV or MES \citep{janson2018thresholds,casey2025droop}. However, for the core, moving to Droop can lead to difficulties. In particular, \citet[Section 6]{peters2025fewseats} found that the PAV rule violates the Droop core even for $k=6$, while it satisfies the Hare core for up to $k = 8$.

Regarding the few-voters case, we find that our approach does not extend to the Droop core. 
The natural way to incorporate the Droop quota into our unit-cost approval-based model is to set each voter's budget $b_i$ to a value $\frac{k+1}{n} - \varepsilon$ for a sufficiently small $\varepsilon$.
Then the overall budget $b(N)$ that can be used by a fractional committee is just below $k+1$, but any integral committee can have size at most $k$.
In this model, the corresponding utility-rounding statement already fails for $n=3$ and $k=1$.

\begin{example}[Droop quota]\label{ex:droop-normality-fails}
    Fix $0<\varepsilon<\frac16$. Let $N=\{1,2,3\}$, $k=1$, and $\mathcal R_c=\{12,13,23\}$ with $c_R = 1$ for all $R \in \mathcal R_c$, and give every voter budget $b_i=\frac23-\varepsilon$.
    The integer utility vector $u=(1,1,1)$ is fractional-committee-feasible under the Droop-type fractional bound, since
    \[
        x_{12}=x_{13}=x_{23}=\tfrac12
    \]
    has total size $\frac{3}{2}<2-3\varepsilon=b(N)$ and yields utility $1$ to every voter.

    By contrast, $u$ is not integral-committee-feasible via any integral committee of size at most $k=1$, since every such committee consists of a single candidate and will therefore leave one voter with utility $0$.
    Thus, an integer utility vector may be fractional-committee-feasible under the Droop bound without being integral-committee-feasible at committee size $k$. 
    In particular, the normality property underlying our approach fails for the Droop quota. \qed
\end{example}

\section{Related work}

\paragraph{The core in approval-based committee elections.}
\citet{ABC+16a} were the first to connect proportionality notions in approval-based committee elections \citep[see, e.g.,][for a thematic overview]{LaMa23a} to stability notions from cooperative game theory and proposed the core as a strong fairness axiom, incorporating proportional representation. 
They further showed that core stability strictly strengthens another proportionality axiom introduced there, called extended justified representation (EJR), which is satisfied by several natural voting rules \citep{ABC+16a, PeSk20a, BrillPeters2023Robust}.

The game-theoretic perspective on committee elections has been further developed by \citet{HKMS24a}, who modeled approval-based committee elections as a budgeting game and related the core to the $\alpha$-core from game theory \citep{aumann1961bargaining}. 
For budgeting games with an underlying tree structure, they proved the existence of a strong equilibrium, a concept stronger than the core and thus implying core existence in the corresponding restricted domain. 
For general instances, the existence of strong equilibria also remains open.

\citet{ABC+16a} concluded that ``the core stability condition appears to be too demanding, as none of the voting rules considered in our work is guaranteed to produce a core stable outcome''. Almost ten years later, core non-emptiness in approval-based committee elections is still an open question, although some progress has been made since then. 

\paragraph{The core in sub- and superdomains.} 
\citet{pierczynski2022core} proved that the core is always non-empty for the subdomain of linear-consistent instances, a class that contains both the voter-interval and candidate-interval domains \citep{ElLa15a}. In these restricted domains, a core outcome can be computed in polynomial time. 
The core is also known to be non-empty when there are at least $k$ copies of each candidate type \citep{BGP+24a}.
More recently, \citet{peters2025fewseats} showed that proportional approval voting (PAV) returns a core committee whenever the committee size satisfies $k \leq 8$, regardless of the number of voters and candidates. 
The same paper also proves that the core is always non-empty whenever the number of candidates is at most $15$ or $k \in \{m-1,m\}$. 
\citet{BTC+26a} use an approach based on mixed-integer linear programming to quantify the stability of core committees and show relations to various notions of \emph{priceability} \citep{PeSk20a,munagala2022auditing}.

Despite these positive existence results, there is also some evidence that the core might be empty, as \citet{ABC+16a} implicitly conjectured. 
The Method of Equal Shares (MES) \citep{PeSk20a,PPS21a}, arguably one of the most promising candidates for satisfying the core apart from PAV, only satisfies a weaker variant based on constrained deviations.
We provide some insights on the connection between MES and the core in \Cref{appendix:mes}. 
\citet[Section 5]{PeSk20a} also showed that no welfarist rule can satisfy the core, and that PAV achieves only a factor $2$ approximation, which is optimal among rules satisfying a Pigou--Dalton-type fairness principle.
Moreover, in many superdomains of approval-based committee elections, e.g., additive valuations or participatory budgeting with cost utilities, there are examples for which the core is empty (see \Cref{sec:counterex}).

\paragraph{Approximations of the core.}
A related line of work studies relaxations of the core rather than exact existence. 
\citet{cheng2020group} established the existence of stable lotteries, i.e., lotteries over committees that satisfy core stability in expectation, for several preference models in committee elections.
Using an intricate rounding procedure for such stable lotteries, \citet{jiang2020approximately} showed the existence of a $16$-approximately stable integral committee for monotone preferences, where the relaxation is on the size of a deviating coalition: a coalition must be larger than its proportional size by a constant factor in order to form a blocking coalition. 
Using a different approach based on multilinear extensions and market-clearing ideas, \citet{munagala2022approximate} obtained a multiplicative $9.27$-approximation to the core in the more general setting of additive utilities, where the approximation is with respect to the utility guarantee in the blocking condition.
There, a deviation is ruled out unless every member of the coalition can improve by a sufficiently large multiplicative factor, even after allowing an additional ``additament'' candidate. 
Recently, \citet{song2026ordinal} combined a Lindahl-type equilibrium with an iterative dependent rounding procedure to obtain a $5.15$-approximate-core guarantee for the approval setting, again in the sense of a multiplicative relaxation on the size of blocking coalitions.
Lastly, \citet{gao2025computation} showed the existence of a $3.65$-approximately stable committee.
The corresponding algorithm is randomized and runs in expected polynomial time; it computes a Lindahl equilibrium and then samples from an associated strongly Rayleigh distribution.

\paragraph{Fractional relaxations of the core.} 
Another natural route is to allow for fractional committees, meaning that candidates can be chosen ``partially''. 
Such divisible settings exhibit close connections to public-goods markets where core non-emptiness can be guaranteed via Lindahl equilibria.

The setting with at least $k$ copies of each candidate type is closely related to budget aggregation \citep[see, e.g.,][for an overview]{SuTe26a}. There, \citet{FGM16a} related the notion of Lindahl equilibrium to maximizing Nash welfare, i.e., maximizing the product of voters' utilities, for a variety of utility functions. They further showed that this connection enables an efficient approximation of utilities in fractional core outcomes.  

For the setting where the number of available candidates of a certain type might be capped, \citet{KrPe25b} gave a polynomial-time algorithm for computing an approximate fractional core outcome for additive utilities, where the approximation is additive with respect to utilities and the committee size. 
\citet{SuVo24a} model committee elections via networks and derive a new proportionality axiom (weaker than the core; incomparable to the fractional core) from max flows on these networks. 
They give polynomial-time algorithms for finding a fractional committee that satisfies this \emph{ex-ante} notion of proportionality and for decomposing it into a probability distribution over committees such that each committee in the support yields strong \emph{ex-post} proportionality guarantees, an approach known as best-of-both-worlds-fairness \citep{FSV20a}.

\section{Conclusion}

The core exists when there are at most $7$ voter types. Together with previously known existence results for small committee size or small candidate sets due to \citet{peters2025fewseats}, we can conclude:
\begin{quote}
	If there exists an approval-based committee election instance with an empty core, then it must have at least $8$ different voter types, at least $16$ candidates, and a committee size of at least $9$.
\end{quote}
Thus, no small counterexample exists if one exists at all.

It would be interesting to see if the rounding technique based on affine monoids can be further extended to obtain additional results, such as core existence for novel restricted preference domains. 
In addition, the possibility for $n \le 5$ of rounding fractional committees to integral committees in a utility-preserving way might be useful to obtain existence results for solution concepts other than the core. One plausible direction might be strengthenings of the core such as stable priceability \citep{peters2021market} or Lindahl priceability \citep{munagala2022auditing}, though these concepts are perhaps not closely enough linked to utilities. One could try to create additional tools to tackle the case with eight or more voters. However, our approach of enumerating and checking all minimal holes will very quickly become computationally intractable. On an Apple M1 Pro MacBook Pro (8 cores, 16 GB RAM), using Z3 4.16.0, it takes 4.3 seconds to check the case $n = 6$ but 2.5 hours to check the case $n = 7$.

Another research direction concerns the fractional setting and the question whether a fractional core committee can be computed exactly in polynomial time.

\section*{Use of Large Language Models}
The writing of this paper was done by us, but several mathematical ideas were contributed by large language models. 
\begin{itemize}
	\item While being asked to optimize the runtime of a script that we used to search for counterexamples, \texttt{gpt-5-4}, acting as a coding agent, used \texttt{Normaliz} and implicitly affine monoids to speed up the search. Based on its report, we realized that affine monoids can be used to solve our problem. In addition, our first version of the proof of normality for $n = 5$ (\Cref{thm:monoidnormal}) used a lot of case analysis, which \texttt{gpt-5-5} managed to simplify, notably by proposing \Cref{lem:k-1<Rc}. 
	\item We then tried to extend the $n = 5$ argument to $n = 6$, hoping that we could somehow patch each of the 50 minimal counterexamples (\Cref{tbl:50-holes}) that show that normality fails for $n = 6$. It is easy to check that in each of those minimal counterexamples, a core committee exists. Interestingly, \texttt{gpt-5-5-pro} could not derive a way to use this fact to deduce that the core is also non-empty for non-minimal counterexamples (and neither could we). However, \texttt{claude-fable-5} (while it was briefly publicly available in June 2026) did find a way to deduce this, using a decomposition similarly to the one we develop in \Cref{sec:decomposition} and the idea of reducing the utility of one voter by 1. When asked to extend further to $n = 7$, it also suggested a complicated way to deal with the 298 minimal holes mentioned in \Cref{thm:n7minimalholes} that cannot be handled by the technique that works for $n = 6$. Later, \texttt{gpt-5-5-pro} suggested the simpler property based on $\beta$ that we actually use in \Cref{thm:n7minimalholes}.
	\item The counterexample on which MES fails the core for $n = 9$ (\Cref{ex:mes_violates_core_unique} in the appendix) was constructed by \texttt{gpt-5-4} (which managed to find an example with $n = 36$) and \texttt{gpt-5-5} (which managed to reduce further to $n = 9$). 
	\item We received help from \texttt{gpt-5-4} with the details of \Cref{lem:tau-denominator} about denominators in solutions of LPs. 
\end{itemize}

Despite these ideas being developed by frontier LLMs, we invested significant effort into verifying them and making them accessible to humans.

A \href{https://github.com/DominikPeters/ABCVotingLean}{formalization of our existence result} for $n \le 5$ in Lean 4 was produced by \texttt{gpt-5-5} acting in the codex \texttt{/goal} harness in approximately 10 hours of autonomous work.

\iflatexml
	\daharxivbib{extract}
\else
	\printbibliography
\fi
\newpage
\appendix
\crefalias{section}{appsec}
\crefalias{subsection}{appsec}

\section{Omitted proofs}\label{appendix:omitted_proofs}

\subsection{Omitted proofs from \Cref{sec:prelims}}

\fracCoreExists*
\begin{proof}
	For every $\varepsilon>0$, define the perturbed utilities by
	\[
		u_i^\varepsilon(x) = \sum_{\substack{R\in\mathcal R\\ i\in R}} \min\{x_R,c_R\} + \varepsilon\sum_{R\in\mathcal R}x_R.
	\]
	Each $u_i^\varepsilon$ is continuous and concave on $\mathbb R_{\geq 0}^{\lvert \mathcal R\rvert}$. Moreover, it is strictly increasing
	in every coordinate, since increasing any coordinate $x_R$ by $\delta>0$ increases $u_i^\varepsilon$ by at least $\varepsilon\delta$.

	We can assume that $\sum_{R\in\mathcal R}c_R>k>0$, as otherwise we can just set $x_R=c_R$ for all $R \in \mathcal{R}$, which even yields an integral core committee. By Foley's existence theorem, there exists a Lindahl equilibrium $(x^\varepsilon,p^\varepsilon)$ for this perturbed economy.

	We claim that $x^\varepsilon$ satisfies the supply constraints, i.e., $x_R^\varepsilon \le c_R$ for all $R\in\mathcal R$.
	Suppose not, and let $R'\in\mathcal R$ be such that $x_{R'}^\varepsilon>c_{R'}$.
	Since preferences are strictly increasing, the full budget is spent in equilibrium, so $\sum_{R\in\mathcal R} x_R^\varepsilon = k$.
	As $\sum_{R\in\mathcal R} c_R > k$, there exists some type $R''\in\mathcal R$ with $x_{R''}^\varepsilon<c_{R''}$.
	Choose $\delta>0$ so that
	\[
		x_{R'}^\varepsilon-\delta \ge c_{R'}
		\qquad\text{and}\qquad
		x_{R''}^\varepsilon+\delta \le c_{R''}.
	\]
	Let $x'$ be the fractional committee in which, compared to $x^{\varepsilon}$, a mass of $\delta$ has been moved from $R'$ to $R''$.
	Then for every voter $i$, the common perturbation term $\varepsilon\sum_R x_R$ is unchanged, while the capped part, $\sum_{R\ni i}\min\{x_R,c_R\}$, does not decrease: the contribution of $R'$ is unchanged because $x_{R'}^\varepsilon-\delta \ge c_{R'}$, and the contribution of $R''$ weakly increases.
	Moreover, for every voter $i\in R''$, the contribution of $R''$ increases strictly.
	Since $R''\neq\emptyset$, the allocation $x'$ is a Pareto improvement over $x^\varepsilon$, contradicting Pareto efficiency of the Lindahl equilibrium. 
	Hence, $x^\varepsilon_R\le c_R$ for all $R$, and therefore $x^\varepsilon\in\mathcal P$.

	Since $\mathcal P$ is compact, there exists a sequence $\varepsilon_t\to 0$ and some $x^*\in\mathcal P$ such that $x^{\varepsilon_t}\to x^*$. Moreover, for every $i\in N$ and $R\in\mathcal R$, non-negativity of personalized prices and producer optimality imply
	\[
		0\leq p_{iR}^{\varepsilon_t} \leq \sum_{j\in N}p_{jR}^{\varepsilon_t} \leq 1.
	\]
	Hence, $(p^{\varepsilon_t})_t$ is bounded. Passing to a further subsequence (relabelling if necessary), there exists $p^* \in [0,1]^{n \cdot \lvert \mathcal{R} \rvert}$ such that $p^{\varepsilon_t}\to p^*$. Consequently,
	\[
		(x^{\varepsilon_t},p^{\varepsilon_t}) \to (x^*,p^*).
	\]
	
	Moreover, $\sum_{R\in\mathcal R} x_R^*=\lim_{t \to \infty}\sum_{R\in\mathcal R} x_R^{\varepsilon_t} = \lim_{t \to \infty} k=k$.
	
	We now verify that $(x^*,p^*)$ satisfies the conditions of a Lindahl equilibrium for the original economy.
	
	\medskip
	\noindent\textbf{$x^*$ is affordable.}
	For every $t$ and $i \in N$, $\langle p_i^{\varepsilon_t}, x^{\varepsilon_t}\rangle\leq b_i$. Since the scalar product is continuous on $\mathcal{P}\times [0,1]^{n \cdot \lvert\mathcal{R}\rvert}$, taking limits gives $\langle p_i^*,x^*\rangle\leq b_i$. Thus, $x^*$ is affordable with respect to the personalized prices in the limit.

	\medskip
	\noindent\textbf{$x^*$ is utility-maximizing.}
	Suppose, toward a contradiction, that $x^*$ is not utility-maximizing for voter $i$. Then there exists $y\geq 0$ ($0 \le y_R \le c_R$ for all $R \in \mathcal{R}$) such that $\langle p_i^*,y\rangle\leq b_i$ and $u_i(y)>u_i(x^*)$.
	By continuity of $u_i$, there exists $\lambda\in(0,1)$, sufficiently close to $1$, such that $u_i(\lambda \cdot y)>u_i(x^*)$.
	Because $b_i>0$,
	\[
		\langle p_i^*,\lambda \cdot y\rangle
		= \lambda \langle p_i^*, y\rangle
		\leq \lambda b_i
		< b_i.
	\]
	Hence, $\lambda \cdot y$ is strictly affordable at $p_i^*$. Since $p_i^{\varepsilon_t}\to p_i^*$, it follows that $\langle p_i^{\varepsilon_t}, \lambda \cdot y \rangle \leq b_i$ for all sufficiently large $t$. Consumer optimality of $x^{\varepsilon_t}$ in the perturbed economy gives $u_i^{\varepsilon_t}(x^{\varepsilon_t}) \geq u_i^{\varepsilon_t}(\lambda \cdot y)$. Expanding the perturbed utilities yields
	\[
		u_i(x^{\varepsilon_t}) + \varepsilon_t \sum_{R\in\mathcal R}x_R^{\varepsilon_t}
		\geq u_i(\lambda \cdot y) + \varepsilon_t\lambda \sum_{R\in\mathcal R} y_R.
	\]
	Since $(x^{\varepsilon_t})_t$ is bounded and $y$ is fixed,
	\[
		\lim_{t \to \infty} \varepsilon_t \sum_{R\in\mathcal R}x_R^{\varepsilon_t} = 0 \quad \text{and} \quad
		\lim_{t \to \infty}\varepsilon_t\lambda \sum_{R\in\mathcal R} y_R = 0.
	\]
	Taking limits and using continuity of $u_i$ yields $u_i(x^*) \geq u_i(\lambda\cdot y)$, contradicting $u_i(\lambda\cdot y)>u_i(x^*)$. Therefore,
	\[
		x^* \in \arg\max_{\substack{z\geq 0 \text{ s.t. } \\ \forall R \in \mathcal{R}: 0 \le z_R \le c_R}} \left\{ u_i(z) : \langle p_i^*,z\rangle\leq b_i \right\}.
	\]
	Since $i$ was arbitrary, consumer optimality holds for every voter.

	\medskip
	\noindent\textbf{$p^*$ is profit-maximizing.}
	Since $q_R^{\varepsilon_t} \coloneqq \sum_i p_{iR}^{\varepsilon_t}\leq 1$ for every $t$ and $R \in \mathcal{R}$, taking limits gives $q_R^*\leq 1$ for every $R \in \mathcal R$.
	Now suppose that $x_R^*>0$. Since $x_R^{\varepsilon_t}\to x_R^*$, we have $x_R^{\varepsilon_t}>0$ for all sufficiently large $t$. Hence, $q_R^{\varepsilon_t}=1$ for all sufficiently large $t$, and therefore $q_R^*=1$. Thus,
	\[
		q_R^* \leq 1 \qquad \text{for every }R\in\mathcal R,
	\]
	with equality whenever $x_R^*>0$.

	\medskip
	Therefore, $(x^*,p^*)$ is a Lindahl equilibrium of the original economy.
	Since every Lindahl equilibrium belongs to the core of the corresponding public-goods economy, $x^*$ also lies in the fractional core of the original instance.
\end{proof}

\voterTypes*
\begin{proof}
	If $k=0$, the empty committee is trivially in the core; hence, assume $k>0$.
	Let $A^1,\dots,A^t$ be the distinct approval sets occurring in the profile, and let $N=\{1,\dots,t\}$ be the corresponding set of voter types. For each type $i\in N$, let
	\[
	n_i = \bigl|\{j\in\hat N:A_j=A^i\}\bigr|, \quad
	b_i = n_i\frac{k}{|\hat N|}, \quad
	c_R = \bigl|\{g\in\hat C: \{i\in N:g\in A^i\}=R\}\bigr| \text{ for all } R\in \mathcal{R}.
	\]
	
	By assumption, the associated instance $(c,k,b)$ admits a committee $ x=(x_R)_{R \in \mathcal{R}}$ in the core. Construct a committee $W\subseteq\hat C$ by selecting exactly $x_R$ candidates of type $R$ for every $R\in \mathcal{R}$. This is possible because $x_R\le c_R$. Moreover, $\lvert W\rvert=\sum_{R\in \mathcal{R}}x_R\le k$.
	
	We show that $W$ belongs to the core of the original approval instance. Suppose, for contradiction, that $W$ is blocked by a coalition $S\subseteq\hat N$ and a set of candidates $T\subseteq\hat C$. Thus, $|T| \le |S|\frac{k}{|\hat N|}$ and $|A_j\cap T| > |A_j\cap W|$ for every $j \in S$.
	We may assume that $S$ contains either all or none of the voters of each type. Indeed, if $j\in S$ and $j'\notin S$ satisfy $A_j=A_{j'}$, then
	\[
	|A_{j'}\cap T| = |A_j\cap T| > |A_j\cap W| = |A_{j'}\cap W|.
	\]
	Hence, $j'$ can be added to $S$ without destroying the blocking property, and the enlarged coalition can still afford $T$. Repeating this argument yields a blocking coalition that is a non-empty union of complete voter types. Let $I \subseteq N$ be exactly this set of voter types contained in coalition $S$.
	We may also assume that every candidate in $T$ is approved by at least one voter, since deleting an unapproved candidate preserves every voter's utility and weakly decreases $|T|$. We can then rewrite $T$ by letting $y_R$ be the number of candidates in $T$ of type $R$ for every $R \in \mathcal{R}$. Hence, we get $y \in \mathcal{W}_I$ and
	\[
	\sum_{R\in \mathcal{R}}y_R = |T| \le |S|\frac{k}{|\hat N|} = \sum_{i\in I}b_i,
	\]
	so coalition $I$ can afford $y$. Further, for every $i\in I$,
	\[
	\sum_{R\ni i}y_R = |A^i\cap T| > |A^i\cap W| = \sum_{R\ni i}x_R.
	\]
	Thus, $I$ blocks $x$, contradicting that $x$ is in the core of the instance $(c,k,b)$. Consequently, $W$ must also be in the core of the original approval instance.
\end{proof}

\subsection{Omitted proofs from \Cref{sec:corecomputation}}
\label{app:omitted-corecomputation}

\lindahlEfficientComputation*
\begin{proof}[Proof]
Let $\hat\varepsilon = \frac{\varepsilon}{n+1}$. \citet[Section 5.3]{KrPe25b} show that one can compute in time polynomial in the problem size and in $\log\frac{1}{\hat\varepsilon}$ a vector $\hat x \in \R_{\ge 0}^{\lvert \mathcal{R}\rvert}$ with 
\begin{enumerate}
	\item[(i)] $\hat x_R \le c_R$ for all $R \in \mathcal{R}$,
	\item[(ii)] $\sum_{R \in \mathcal R} \hat x_R \le k+\hat\varepsilon$, and
	\item[(iii)] there does not exist a non-empty $S \subseteq N$ and fractional committee $y \in \mathcal{P}_S$ such that $u_i(y) > u_i(\hat x) + \hat\varepsilon$ for all $i \in S$.
\end{enumerate}
We will now change $\hat x$ so that it overshoots the budget by more (increasing the violation of (ii)) but removes the $\hat\varepsilon$ gap in condition (iii). We do this by enlarging $\hat x$ so that every voter's utility goes up by $\hat\varepsilon$ (if possible).

To do so, begin by setting $x \gets \hat x$. Then iterate through the set of voters. For each $i \in N$, compute the residual approved capacity $r_i := \sum_{R \ni i} (c_R - x_R)$ and increase $x$ on approved types of voter $i$ by a total amount of $\alpha_i=\min\{ \hat \varepsilon, r_i\}$ while respecting the supply constraints (which is possible by definition of $\alpha_i$).

After processing all voters, the vector $x$ satisfies
\[
\sum_{R \in \mathcal R} x_R \le \sum_{R \in \mathcal R} \hat x_R + \sum_{i \in N} \alpha_i \le (k + \hat\varepsilon) + n\hat\varepsilon = k + (n+1) \hat\varepsilon = k +\varepsilon.
\]
By construction, $x$ satisfies the supply constraints. We finish the proof by showing that $x$ satisfies the fractional core condition.
Let $S \subseteq N$ be a non-empty coalition, and consider $y \in \mathcal{P}_S$. By (iii), there exists $i \in S$ with $u_i(\hat x) + \hat\varepsilon \ge u_i(y)$. Now, by construction of $x$, we have $u_i(x) \ge u_i(\hat x) + \alpha_i$. Therefore, either $u_i(x) \ge u_i(\hat x) + \hat \varepsilon$ and hence $u_i(x) \ge u_i(y)$, or we had $r_i < \hat \varepsilon$ during the construction of $x$. In the latter case, $x$ saturates the supply of each candidate approved by voter $i$ and therefore gives $i$ the highest possible utility among vectors satisfying the supply constraints, so again $u_i(x) \ge u_i(y)$. In either case, we have $u_i(x) \ge u_i(y)$ and have thus verified the core condition for this deviation.
\end{proof}

\tauDenominator*
\begin{proof}
    Consider the linear program \eqref{eq:lp_min_feasibility} defining $\tau_c(u)$. 
	Under the assumption that $\tau_c(u) < \infty$, the feasible region is non-empty. 
    Moreover, it is bounded because each variable satisfies $0 \le z_R \le c_R$. Hence, an optimum exists. 
    By standard linear programming theory, some optimum is attained at a basic feasible solution.

    Let $z$ be such a basic feasible optimum. We first show that at most $n$ variables of $z$ are not fixed at one of their bounds $0$ or $c_R$.

    Indeed, the linear program has $|\mathcal R|$ variables. 
    At a basic feasible solution, one needs $|\mathcal R|$ linearly independent active constraints. Suppose that exactly $t$ variables of $z$ are not fixed at a bound. 
    Then the remaining $|\mathcal R|-t$ variables are fixed either by an active lower bound $z_R=0$ or by an active upper bound $z_R=c_R$. Thus, at most $|\mathcal R|-t$ of the required active constraints can come from the box constraints. 
    The only other constraints are the $n$ voter constraints
    \[
        \sum_{R \ni i} z_R \ge u_i
        \qquad (i \in N).
    \]
    Therefore, the total number of linearly independent active constraints is at most $(|\mathcal R|-t) + n$.
    Since a basic feasible solution requires at least $|\mathcal R|$ such constraints, we obtain $(|\mathcal R|-t) + n \ge |\mathcal R|$, and hence, $t \le n$.
    In other words, at most $n$ variables of $z$ are not fixed at one of their bounds ($0$ or $c_R$).

    Let $J \subseteq \mathcal R$ be the set of indices of the $t= |J|$ variables that are not fixed at a bound.
    If $t=0$, every coordinate of $z$ is fixed at an integral bound, and thus $\tau_c(u) = \sum_R z_R$ is integral. Hence, we may assume that $t\ge1$.
    We now focus on these ``free'' variables.
    Since all other variables are fixed at either $0$ or $c_R$, their contributions can be moved to the right-hand side of the active voter constraints. 
    In this way, the free variables are determined by a system of linear equations
    \[
        B z_J = d
    \]
    that consists of $t$ active constraints, where:
    \begin{itemize}
    	\item $B$ is a $t\times t$ binary matrix that contains the approval sets of voters from active voter constraints as rows,
        \item $z_J$ is the vector of free variables,
        \item $d$ is an integer vector; it is obtained from the integer utility vector $u$ by subtracting contributions of variables $z_R$ fixed at $c_R$.
        
    \end{itemize}
    Since $z$ is a basic feasible solution, the active constraints determining the free variables are linearly independent. 
    Thus, $B$ is invertible. 
    In particular, $t\le n$, so $B$ is an invertible binary square matrix of order at most $n$.

    By Cramer's rule, each coordinate of $z_J$ is of the form
    \[
        \frac{\det(B_j)}{\det(B)},
    \]
    where $B_j$ is obtained from $B$ by replacing the $j$-th column by $d$. 
    Since both $B$ and $d$ are integral, all determinants appearing here are integers. 
    Therefore, each value of a free variable is rational and can be represented with a denominator dividing $\lvert \det(B) \rvert$, which divides $L_n$ by definition. 
    The variables fixed at $0$ or $c_R$ are integers. 
    Consequently, every coordinate of the basic optimum $z$ lies in $\frac{1}{L_n}\mathbb Z$.

    Summing all coordinates, we conclude that
    \[
        \tau_c(u)=\sum_{R\in\mathcal R} z_R \in \frac{1}{L_n}\Z.
    \]

    For the second statement, let $k'\in\mathbb Z$ and assume that
    \[
    \tau_c(u) \le k'+\delta \qquad\text{with}\qquad 0<\delta<\frac{1}{L_n}.
    \]
    Since $k' \in \mathbb Z \subseteq \frac{1}{L_n}\mathbb Z$, the smallest element of $\frac{1}{L_n}\mathbb Z$ that is strictly larger than $k'$ is $k' + \frac{1}{L_n}$.
    As $\delta<\frac{1}{L_n}$ and $\tau_c(u)\le k'+\delta$, it follows that $\tau_c(u)\le k'$.
\end{proof}

\section{Discussion of the Method of Equal Shares} \label{appendix:mes}

In this section, we present two examples illustrating how the Method of Equal Shares (MES) can fail to return a committee in the core and propose a specific tie-breaking that always returns core committees for three voters. In this section, we assume that $k$ is at most as large as the number of available candidates.
We begin by showing that MES may return a committee that is not in the core, and that this phenomenon already arises with three voters.

\begin{wrapstuff}[r,width=0.33\textwidth,type=figure]
	\centering
	\begin{tikzpicture}[yscale=0.56,xscale=0.9,voter/.style={anchor=south}]		
		\foreach \i in {1,...,3}
			\node[voter] at (\i-0.5,-1) {$\i$};

		\foreach[count=\y from 0] \lab in {$a_1$,$a_2$,$a_3$,$a_4$,$a_5$,$a_6$}{
			\draw[fill=orange!40] (0,\y) rectangle (2,\y+1);
			\node at (1,\y+0.5) {\lab};
		}

		\foreach[count=\y from 6] \lab in {$b_1$,$b_2$}{
			\draw[fill=violet!40] (0,\y) rectangle (1,\y+1);
			\draw[fill=violet!40] (2,\y) rectangle (3,\y+1);
			\node at (0.5,\y+0.5) {\lab};
			\node at (2.5,\y+0.5) {\lab};
		}

		\foreach[count=\y from 8] \lab in {$c_1$,$c_2$}{
			\draw[fill=teal!40] (1,\y) rectangle (3,\y+1);
			\node at (2,\y+0.5) {\lab};
		}

		\foreach[count=\y from 0] \lab in {$d_1$,$d_2$,$d_3$}{
			\draw[fill=blue!25] (2,\y) rectangle (3,\y+1);
			\node at (2.5,\y+0.5) {\lab};
		}
	\end{tikzpicture}
	\caption{Example for MES violating the core for $n=3$.}
	\label{fig:mes-core-counterexample}
\end{wrapstuff}
\begin{example}\label{ex:mes-not-core-n3}
	Consider the approval-based committee election instance with $n=3$, $k=9$, equal individual budgets, and the approval profile shown in \Cref{fig:mes-core-counterexample}. One possible MES outcome is $W=\{a_1,\dots,a_6\}\cup\{d_1,d_2,d_3\}$.
	Intuitively, under a suitable tie-breaking rule, Voters $1$ and $2$ first exhaust their budgets on the six candidates $a_1,\dots,a_6$, after which Voter $3$ can still afford the three candidates $d_1,d_2,d_3$.
	Now consider the committee $W'=\{a_1,\dots,a_5\}\cup\{b_1,b_2,c_1,c_2\}$.
	Every voter strictly prefers $W'$ to $W$, since each voter approves strictly more candidates in $W'$ than in $W$. 
	Hence, the grand coalition blocks $W$, and $W$ is therefore not in the core.
	Moreover, $W'$ is precisely the type of committee returned by \Cref{alg:coren=3} below. \qed
\end{example} 

One could argue that voter $3$ should have chosen $\{b_1,b_2,c_1\}$ instead of $\{d_1,d_2,d_3\}$ at the end, as these candidates are approved by supersets of voters. In fact, the simple structure of candidate types for $n=3$ allows us to give another explicit polynomial-time algorithm (\Cref{alg:coren=3}) for finding a committee in the core that can be interpreted as MES for cost utilities where ties are broken in favor of candidates with larger approval sets.
In the algorithm, voters fund only approved candidates, costs are split as equally as possible, and candidates are selected according to the minimum possible maximum amount that an approver must invest, as in MES.

\begin{algorithm}[t]\LinesNumbered
	\caption{Compute core committee for $n=3$}\label{alg:coren=3}
	\KwIn{Instance $(c,k,b)$ with three voters and $b_1 \ge b_2 \ge b_3$.}
	\KwOut{Integral committee $x$ of size $k$.}
	$C \leftarrow c$ \tcp{vector of candidate types and their respective numbers} 
	$B \leftarrow b$ \tcp{vector of individual budgets}
	$x \leftarrow (0,0,\dots,0)$ \tcp{vector of length $7$ to encode the (initially empty) committee}
	
	\While{$\mathrm{sum}(x) < k \textnormal{ and } C_{123}-x_{123}>0$}{
		$x_{123} \leftarrow x_{123}+1$\;
		$\begin{aligned}
			B \leftarrow B -\biggl(&1 -\min\left\{B_2,\frac{1-\min \{B_3,1/3\}}{2}\right\} - \min\{B_3,1/3\},\\ &\min \left\{B_2,\frac{1-\min\{B_3,1/3\}}{2}\right\}, \min\{B_3,1/3\}\biggr)
		\end{aligned}$\;
	}
	
	\While{$B_i \ge 1$ \textnormal{for all} $i \in N$ \textnormal{and} $C_{ij}-x_{ij}>0$ \textnormal{for all pairs of voters} $(i,j)$}{
		$x_{ij} \leftarrow x_{ij}+1$ for all pairs $(i,j)$\;
		$B_i \leftarrow B_i-1$ for all $i \in N$\;
		}
	
	\While{$B_i+B_{j}\ge 1 \textnormal{ and } C_{ij}-x_{ij}>0 \textnormal{ for some pair $(i,j)$ of voters}$}{
		Let $(i,j)$ be a pair for which $B_i+B_j$ is maximal among pairs for which $\min\{B_i,B_{j}\}$ is maximal among all pairs with $B_i+B_{j}\ge 1$, $B_{i} \ge B_{j}$, and $C_{ij}-x_{ij}>0$\;
		$x_{ij} \leftarrow x_{ij}+1$\;
		$B_{i} \leftarrow B_{i}-1+\min\{B_{j},1/2\}$\;
		$B_{j} \leftarrow B_{j}-\min\{B_{j},1/2\}$\;
	}
	
	\For{$i \in N$}{
		$x_i \leftarrow \min\{\lfloor B_i \rfloor,C_i\}$\;
		$B_i \leftarrow B_i-x_i$\;
	}
	
	\While{$\mathrm{sum}(x)<k$ \textnormal{and} $C_{ij}-x_{ij}>0$ \textnormal{for some pair} $(i,j)$ \textnormal{of voters}}{
		$x_{ij}\leftarrow x_{ij}+1$\;
	}
	\While{$\mathrm{sum}(x)<k$}{
		$x_{i}\leftarrow x_{i}+1$ for some $i \in N$ with $C_{i}-x_{i}>0$\;
		
	}
	\Return $x$\;
\end{algorithm}

First, we include all candidates that are unanimously approved and distribute costs as equally as possible among voters.
Second, while each voter still has budget at least $1$, triples of candidates are included, with each candidate approved by a different pair of voters. This step is essential for ensuring that no pair of voters has a beneficial deviation when considering the core.
Third, candidates that are approved by pairs of voters (and can be funded with their combined budgets) are chosen, with pairs having the largest minimal individual budgets prioritized.
Fourth, each voter funds candidates with her remaining budget that only she approves. 
Finally, the committee is filled up to size $k$, with candidates having larger approval sets chosen first.

\begin{theorem}\label{thm:algn=3}
	\Cref{alg:coren=3} always returns a Pareto-optimal core committee of size $k$ in polynomial time.
\end{theorem}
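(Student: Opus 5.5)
The plan has three parts: termination and running time, size and Pareto optimality, and finally the core property. The core property will need a case analysis on the blocking coalition.

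\textbf{Termination, running time and size.} Each iteration of every while-loop either increases $\mathrm{sum}(x)$ or exhausts a unit of supply, and the loops whose guard involves budgets each consume at least one unit of some $B_i$. Hence the number of iterations is $O(k)$. To make this polynomial when $k$ is given in binary, I would batch the repetitions: the first two loops run a number of times that is a simple minimum of closed-form quantities, and the pair loop alternates among at most three pairs in a periodic pattern. The size-$k$ claim follows from the assumption that $k$ is at most the number of available candidates, since the last two loops fill up any remaining seats.

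\textbf{Pareto optimality.} The final loops fill the committee to size $k$, giving priority to larger approval sets. I would argue as follows. Suppose $y$ Pareto dominates $x$. Compare the type counts of $x$ and $y$ and exchange candidates within $y$ towards $x$. Because $x$ uses all available supply of $123$ before any other type, and uses pair types before singletons whenever seats remain, $y$ cannot have larger total utility. Total utility is $3x_{123}+2\sum x_{ij}+\sum x_i$, and this must strictly increase under a Pareto improvement. This step needs a small case check on which supplies ended up exhausted.

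\textbf{Core property.} This is the main step. I would track the spending vector: every selected candidate is paid for in total cost $1$ by its approvers, and voters never pay for unapproved candidates. Take a blocking pair $(S,y)$ with $y\in\mathcal W_S$. I would split into cases by coalition.
\begin{itemize}
\item[(a)] $S=\{i\}$: voter $i$ ends with leftover budget $B_i$ after the singleton step. Any blocking $y$ would need $u_i(y)>u_i(x)\ge$ (number of candidates $i$ paid for) $+\lfloor B_i\rfloor$. Since each of those candidates cost $i$ at most $1$, we get $u_i(y)\le b_i$ together with the count bound, a contradiction. Here one uses that $i$ paid at most $1$ per approved candidate, and that the remaining approved supply was exhausted whenever $B_i\ge1$.
\item[(b)] $S=\{i,j\}$: this is the delicate case. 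The pair loop stops only when every pair type with remaining supply has $B_i+B_j<1$. The priority rule, with maximal minimal budget, is designed so that the final remainders satisfy an invariant: for each pair, the sum of what $i$ and $j$ spent on commonly approved types, plus the remainder, bounds $u_i+u_j$ suitably. I would establish the invariant that at termination $B_i+B_j<1$ or supply is exhausted for every pair. Then I would show that any $y$ with $|y|\le b_i+b_j$ improving both voters has $u_i(y)+u_j(y)\le u_i(x)+u_j(x)+1$, with a parity or slack argument ruling out strict improvement for both. This is the step where the triple loop (one candidate of each pair type) matters: it prevents the PAV-style failure seen in Figure~\ref{fig:pav-core-counterexample}.
\item[(c)] $S=N$: this reduces to Pareto optimality, since $|y|\le k$.
\end{itemize}

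I expect case (b) to be the main obstacle. I would try first to express it as an LP-duality style certificate: assign prices $p_{iR}$ equal to the actual payments in the algorithm, and show the Lindahl-type inequality $\langle y,p_i\rangle\ge$ spent$_i$ whenever $u_i(y)>u_i(x)$, mirroring the proof of Theorem~\ref{thm:reduceLindahlutils}.
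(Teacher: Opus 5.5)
Your case (b), the two-voter coalition, is where this theorem is actually decided, and the plan you give for it does not work. If the bound $u_i(y)+u_j(y)\le u_i(x)+u_j(x)+1$ is restricted to $y$ that improve both voters, it merely restates what has to be shown. If it is meant for every $y$ affordable by $\{i,j\}$, it is false for outputs of the algorithm.

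Take $b_j=4$, $b_i=b_\ell=1$, $k=6$, with only the types $\{i,j\}$ and $\{j,\ell\}$, both in ample supply. The first two loops do nothing. In the third loop, $j$ buys two candidates of each type, paying $\tfrac12$ each together with $i$ (resp.\ $\ell$). After that, $i$ and $\ell$ have no budget left and the two pairs tie (minimum $0$, equal sums). The algorithm leaves such ties open, and resolving both remaining ones in favour of $\{j,\ell\}$ gives $x_{ij}=2$ and $x_{j\ell}=4$. Hence $u_i(x)=2$ and $u_j(x)=6$, while five $\{i,j\}$-candidates are affordable for $\{i,j\}$ and give $u_i+u_j=10>9$. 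The output is in the core only because $j$ cannot reach $7$. So no argument through $u_i+u_j$ alone can succeed; you must use that each member improves separately.

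Your Lindahl-type fallback fails on the same run. With prices equal to actual payments, $j$ paid $\tfrac12$ per $\{i,j\}$-candidate but $1$ at the margin for $\{j,\ell\}$. Seven $\{i,j\}$-candidates would then give $j$ utility $7>6$ at cost $\tfrac72<4$, so the algorithm's payments are not supporting prices.

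The argument the paper uses instead splits on whether $x_{ij}=c_{ij}$.
\begin{itemize}
\item If $x_{ij}=c_{ij}$ (and $x_{123}=c_{123}$), the final leftover budgets of $i$ and $j$ are below $1$, which gives $x_\ell<b_\ell+2$. Improving both voters would then need $|y|\ge 2+k-x_\ell>b_i+b_j$.
\item If $x_{ij}<c_{ij}$, then $x_i=x_j=0$ and one may take $c_{ij}$ large. The best deviation is $\lfloor b_i+b_j\rfloor$ copies of $\{i,j\}$, so it suffices to show $\max\{u_i(x),u_j(x)\}\ge\lfloor b_i+b_j\rfloor$.
\end{itemize}
The second case is proved by distinguishing the loop after which $B_i+B_j$ first drops below $1$:
\begin{itemize}
\item after the first loop, $x_{123}\ge\lfloor b_i+b_j\rfloor$;
\item after the second, $i$ and $j$ paid at most $\tfrac12$ per approved candidate, so $u_i+u_j>2\lfloor b_i+b_j\rfloor-2$;
\item after the third, one tracks the shares $i$ and $j$ spent on $\{i,\ell\}$ and $\{j,\ell\}$, shows both are at least $1$ with nothing left over, and contradicts the max-min priority rule.
\end{itemize}
None of this appears in your plan.

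Your Pareto-optimality step also rests on a false premise. The algorithm does not use pair types before singletons whenever seats remain, and its output need not maximize welfare. With $b=(5,\tfrac12,\tfrac12)$, $k=6$, $c_{23}=5$, $c_1=10$, line 16 buys five $\{1\}$-candidates although four $\{2,3\}$-candidates remain. What does hold is the per-voter property of Lemma~\ref{lem:algPO}: $x_i>0$ implies every type strictly containing $\{i\}$ is exhausted. Write welfare as $2|y|+y_{123}-\sum_i y_i$. For a Pareto improvement $y$ we have $|y|\le k=|x|$ and $y_{123}\le x_{123}=c_{123}$ (otherwise $x_{123}=k$ and nobody can improve). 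Higher welfare then forces $y_j<x_j$ for some $j$, and the property gives $u_j(y)<u_j(x)$, a contradiction.

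Finally, both your size claim and your single-voter case need the accounting invariant $\sum_R x_R+\sum_i B_i=k$ with all $B_i\ge0$ up to line 18 (Lemma~\ref{lem:non-negbudgets}). The batching for binary-encoded $k$ is unnecessary: $k$ is at most the number of candidates, so $O(k)$ iterations suffice.
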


We first establish two auxiliary lemmas for \Cref{alg:coren=3}.

\begin{lemma}\label{lem:non-negbudgets}
	Until line 18, $\sum_{R \in \mathcal{R}_c}x_R+\sum_{i \in N}B_i=k$ and $B_i \ge 0$ at the end of each loop.
\end{lemma}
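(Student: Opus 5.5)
We prove the two claims of \Cref{lem:non-negbudgets} simultaneously by induction over the iterations of the loops in lines 1--17, checking that each update preserves the invariant $\sum_{R}x_R+\sum_i B_i=k$ and non-negativity of all $B_i$. Initially $x=0$ and $B=b$ with $\sum_i b_i=k$, so the invariant holds. For the invariant, the key observation is that every loop body adds exactly one candidate (or, in the second loop, exactly three) while subtracting exactly the same total amount from $\sum_i B_i$. In the first loop the three subtracted amounts sum to $1$ by construction. In the second loop three candidates are added and each $B_i$ drops by $1$. In the third loop $B_i+B_j$ drops by $1-\min\{B_j,\tfrac12\}+\min\{B_j,\tfrac12\}=1$. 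In the \texttt{for}-loop $x_i$ increases by exactly the amount subtracted from $B_i$.

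Non-negativity is checked loop by loop. In the first loop, voter $3$ pays $\min\{B_3,1/3\}\le B_3$ and voter $2$ pays $\min\{B_2,\cdot\}\le B_2$, so these stay non-negative. The delicate part is voter $1$, who pays the remainder $1-p_2-p_3$. Here we would use the ordering $b_1\ge b_2\ge b_3$ and show by an inner induction that voter $1$'s remaining budget stays at least that of the others, or more simply that $B_1\ge \sum_i B_i-(B_2+B_3)$. Since the loop only runs while $\mathrm{sum}(x)<k$, the invariant gives $\sum_i B_i=k-\mathrm{sum}(x)\ge 1$ before the step. If $p_2<(1-p_3)/2$ then $B_2$ is exhausted, and likewise if $p_3<1/3$ then $B_3$ is exhausted. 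In those cases $B_1\ge 1-p_2-p_3$ follows from $\sum_i B_i\ge 1$. Otherwise $p_2=(1-p_3)/2$ and $p_3\le1/3$, so voter $1$ pays $(1-p_3)/2\le p_2$. Voter $1$'s budget is then at least $B_2\ge p_2$, as long as we maintain $B_1\ge B_2$. That ordering is preserved because voter $1$ pays at most what voter $2$ pays whenever voter $2$ is not exhausted. We expect this case analysis for the first loop to be the main obstacle, and we would state the auxiliary invariant $B_1\ge B_2\ge B_3$ (or $B_1\ge B_2$) explicitly for the first loop.

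The remaining loops are easy. The second loop requires $B_i\ge1$ before subtracting $1$. In the third loop the guard $B_i+B_j\ge1$ with $B_i\ge B_j$ gives the following: if $B_j\ge\tfrac12$ then $B_i\ge B_j\ge\tfrac12$, so $B_i-\tfrac12\ge0$ and $B_j-\tfrac12\ge0$. If $B_j<\tfrac12$ then $B_j$ becomes $0$ and $B_i$ becomes $B_i+B_j-1\ge0$. In the \texttt{for}-loop we subtract $x_i\le\lfloor B_i\rfloor\le B_i$. This completes the induction up to line 18.
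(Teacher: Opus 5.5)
Your proof is correct and follows the same route as the paper. You use the same accounting invariant (each added candidate removes exactly one unit of total budget), the ordering $B_1\ge B_2\ge B_3$ together with $\sum_i B_i=k-\mathrm{sum}(x)\ge 1$ for the first loop, and the loop guards for the remaining loops. One correction: you need the full ordering, not just $B_1\ge B_2$, because in the case $p_2<(1-p_3)/2$ it is $B_2\ge B_3$ that forces $p_3=B_3<1/3$, so that both voters $2$ and $3$ are exhausted (with $B=(0.3,0.2,0.5)$, voter $1$'s budget would become negative); preserving $B_2\ge B_3$ works the same way as for $B_1\ge B_2$.
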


\begin{proof}
	After adding candidates to $x$ in one of the first three while loops or the for loop, the total budget is decreased by $1$. So it remains to show that no individual budget becomes negative in the process. In the first while loop, note that $B_1 \ge B_2 \ge B_3$ as $b_1 \ge b_2 \ge b_3$ and as soon as a voter $i$ pays more than voter $i+1$, voter $i+1$'s individual budget must drop to $0$ in that iteration. Moreover, $\sum_{R \in \mathcal{R}_c} x_R <k$ implies $\sum_{i \in N}B_i \ge 1$, so $B_1$ cannot become negative.
	The second while loop only applies while all individual budgets are still at least $1$ and each iteration decreases each of them by $1$. 
	In the third while loop, non-negativity of the individual budgets follows from $B_{i}+B_{j}\ge 1$ and $B_{i}\ge B_{j}$. Finally, $x_i \le \lfloor B_i \rfloor \le B_i$ in line 16.
\end{proof}

\begin{lemma}\label{lem:algPO}
	Throughout the algorithm, $x_R>0$ implies $x_{R'}=c_{R'}$ for all $R' \supsetneq R$.
\end{lemma}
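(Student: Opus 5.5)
The plan is to prove the invariant by inspecting each place where \Cref{alg:coren=3} increases some coordinate $x_R$. The claim is that no coordinate is ever increased while a strict superset type still has unused supply. Coordinates only increase, and supplies are fixed. So once $x_{R'}=c_{R'}$ it stays so. Hence it suffices to show that whenever $x_R$ is incremented, every $R'\supsetneq R$ already satisfies $x_{R'}=c_{R'}$. If that holds at the moment $x_R$ first becomes positive, it holds forever.

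For $n=3$ the proper supersets of a pair $\{i,j\}$ consist only of $123$, and the proper supersets of a singleton $\{i\}$ are the pairs $ij$, $ik$ and $123$. The first loop only increments $x_{123}$, which has no strict superset, so it needs no check. For the pair-type increments, I would argue as follows.

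\begin{itemize}
\item The second and third while loops are entered only after the first loop has terminated. That loop stops when either $C_{123}-x_{123}=0$ or $\mathrm{sum}(x)=k$.
\item In the first case the claim is immediate.
\item In the second case, \Cref{lem:non-negbudgets} gives $\sum_i B_i=0$. Then no pair satisfies $B_i\ge1$ or $B_i+B_j\ge1$, so neither loop executes.
\item Likewise the loop in line 18 runs only while $\mathrm{sum}(x)<k$. Because the first loop ended, this means $x_{123}=c_{123}$ (the quantity $\mathrm{sum}(x)$ never decreases).
\end{itemize}

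The main obstacle is the singleton increments: the for loop (lines 15--17) and the last while loop. For the last while loop this is easy. That loop is entered only when the preceding loop has terminated with $\mathrm{sum}(x)<k$, which forces $C_{ij}-x_{ij}=0$ for every pair. Combined with $x_{123}=c_{123}$ this gives all supersets saturated, using the same argument as above.

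For the for loop, $x_i$ is set to $\min\{\lfloor B_i\rfloor,C_i\}$, so the case $\lfloor B_i\rfloor\ge1$ needs care. The third while loop has terminated, so for every pair $\{i,j\}$ we have either $B_i+B_j<1$ or $C_{ij}=x_{ij}$. With $B_i\ge1$ and $B_j\ge0$ (\Cref{lem:non-negbudgets}), the first alternative is impossible, so $x_{ij}=c_{ij}$ for both pairs containing $i$. Note that budgets change inside the for loop itself. However, $B_i$ is modified only when processing voter $i$, so the value tested is the one at the end of the third loop. For $123$, I reuse the first-loop argument: if $\mathrm{sum}(x)=k$ there, all budgets are zero and no singleton gets $\lfloor B_i\rfloor\ge1$; otherwise $x_{123}=c_{123}$. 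Once all these cases are checked, the invariant follows.
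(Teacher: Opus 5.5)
Your proposal is correct and takes the same route as the paper's proof. Both check, loop by loop, that whenever a type is added, the termination condition of the preceding loop forces every strict superset to be saturated, with \Cref{lem:non-negbudgets} handling the case where the first loop ends with $x_{123}=k$. You make explicit several steps the paper leaves implicit: the monotonicity of $x$, the use of $B_j\ge 0$ to rule out $B_i+B_j<1$ in the for loop, and the fact that $B_i$ is unchanged until voter $i$ is processed.
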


\begin{proof}
	By construction, we leave the first while loop when $x_{123}=\min\{c_{123},k\}$. If $x_{123}=k$, no further candidates are added to $x$ in the following; in particular, neither the second nor the third while loop is entered as $B_i=0$ for all voters $i$ by \Cref{lem:non-negbudgets}. Otherwise, $x_{123}=c_{123}$ and we can focus on $R'$ corresponding to a pair. Furthermore, if $x_i>0$ for some voter $i$, either $B_i \ge 1$ in line $15$ or $x_i$ is increased in line $21$. For both cases, the respective previous while loop implies $x_{ij}=c_{ij}$ for all $j \neq i$. 
\end{proof}

\begin{proof}[Proof of \Cref{thm:algn=3}]
	First, note that $x_R \le c_R$ for all $R \in \mathcal{R}$ throughout the process.
	By \Cref{lem:non-negbudgets}, $\sum_{R \in \mathcal{R}_c}x_R \leq k$ before line $18$. As the subsequent while loops increase the committee size only by $1$ at a time (and always do so while $\sum_{R \in \mathcal{R}_c}x_R < k$, as $\sum_{R \in \mathcal{R}_c}c_R \ge k$ by assumption), the algorithm returns a committee of size $k$ in the end.
	
	To prove Pareto optimality of $x$, assume for contradiction that there exists a committee $y$ of size $k$ with $u_i(y)\ge u_i(x)$ for all $i \in N$ with strict inequality for at least one voter. Thus, $x_{123}=c_{123}<k$. In particular, the sum over utilities can only increase from $x$ to $y$ if $x_j>y_j$ for at least one voter $j$. By \Cref{lem:algPO}, $x$ also has to contain all candidates that are approved by voter $j$ and at least one other voter. As $x_j>y_j$, this contradicts $u_j(y)\ge u_j(x)$. Thus, $x$ has to be Pareto optimal.
	
	Moving to the proof that $x$ is in the core, Pareto optimality of $x$ also implies that the group of all voters cannot beneficially deviate from $x$. 
	
	Denote the values of the variables $B_i$ at the end of the algorithm by $\bar{B}_i$.
	Next, we show that no single voter $i$ wants to deviate from $x$. If $\bar{B}_i \ge 1$, all candidates that are approved by voter $i$ are already in $x$ and her utility cannot be further increased. If $\bar{B}_i<1$, voter $i$ on her own cannot afford more than $u_i(x)$ candidates as she has spent $b_i-\bar{B}_i$ on approved candidates by the definition of \Cref{alg:coren=3}.
	
	Finally, consider a pair of voters $\{i,j\}$ and assume for contradiction that there exists a committee $y$ with $\sum_{R \in \mathcal{R}_c}y_R \le b_i+b_j$, $u_i(y)>u_i(x)$, and $u_j(y)>u_j(x)$. By the same argument as before, $\bar{B}_i,\bar{B}_j<1$ must hold. If $x_{ij}=c_{ij}$, we have $x_{123}=c_{123}$ by \Cref{lem:algPO} and can assume $y_{123}=c_{123}$. Thus, $u_i(y)>u_i(x)$ implies $y_{i\ell}+y_{i}>x_{i\ell}+x_{i}$, and $u_j(y)>u_j(x)$ implies $y_{j\ell}+y_{j}>x_{j\ell}+x_{j}$, with voter $\ell$ not being part of the pair $(i,j)$. All in all, we get a contradiction:
	\[
	b_i+b_j \ge \sum_{R \in \mathcal{R}_c}y_R \ge 2+\sum_{R \in \mathcal{R}_c}x_R -x_{\ell} > 2+k-(b_{\ell}+2)=k-b_\ell=b_{i}+b_{j}
	\]
	
	where the second inequality follows from $y_{i\ell}+y_{i}>x_{i\ell}+x_{i}$ and $y_{j\ell}+y_{j}>x_{j\ell}+x_{j}$ and the third inequality uses $x_\ell < b_{\ell}+2$ which follows from $\bar{B}_i,\bar{B}_j<1$, i.e., only less than $2$ of $b_i+b_j$ can be allocated to candidates none of them approves; note that $b_i+b_j+b_\ell=k=\sum_{R \in \mathcal{R}_c}x_R$. 
	
	Therefore, $x_{ij}<c_{ij}$ must hold and with that $x_i=x_j=0$ by \Cref{lem:algPO} and $\bar{B}_i+\bar{B}_j<1$, implying that $B_i+B_j<1$ already holds after line $14$. We can further assume $c_{ij}=k$ w.l.o.g. as potentially increasing $c_{ij}$ only yields stronger deviation incentives for $\{i,j\}$ but has no influence on the outcome of the algorithm as $x_{ij}<c_{ij}$. Thus, w.l.o.g., the optimal deviation $y$ is given by $y_{ij}=\lfloor b_i+b_j\rfloor$, and $u_i(x)<\lfloor b_i+b_j\rfloor$ as well as $u_j(x)<\lfloor b_i+b_j\rfloor$ have to hold for $y$ to be a beneficial deviation from $x$.
	
	We distinguish three cases depending on after which while loop $B_i+B_j$ becomes smaller than $1$ for the first time. Denote the committee $x$ after the respective while loop by $\tilde{x}$ and the remaining budget vector by $\tilde{B}$.
	If that happens after the first while loop, $\tilde{x}_{123}>b_i+b_j-1\ge \lfloor b_i+b_j \rfloor-1$ and $\tilde{x}_{123}\ge \lfloor b_i+b_j \rfloor$ as $\tilde{x}_{123}$ is an integer. This contradicts $u_i(\tilde{x}) \le u_i(x)<\lfloor b_i+b_j\rfloor$.
	Otherwise, if that happens after the second while loop, we must have entered it. Consequently, the costs of all purchased candidates in $\tilde{x}$ are distributed uniformly among their approvers. In particular, voters $i$ and $j$ have to pay at most $\frac{1}{2}$ each for every candidate they approve. This yields
	\begin{align*}
	u_i(x)+u_j(x)\ge u_i(\tilde{x})+u_j(\tilde{x}) \ge 2(b_i-\tilde{B}_i+b_j-\tilde{B}_j)>2\lfloor b_i+b_j \rfloor-2
	\end{align*}
	where the last inequality follows from $\tilde{B}_i+\tilde{B}_j<1$. This implies that either $i$ or $j$ have to receive utility at least $\lfloor b_i+b_j\rfloor$ under $x$, a contradiction.
	
	So we are left with the case in which this must happen after the third while loop, meaning that $B_i+B_j \ge 1$ before entering it. Thus, we either have $\min\{C_{i\ell}-x_{i\ell},C_{j\ell}-x_{j\ell}\}=0$, or one of the voters has a budget of less than $1$ before entering the third while loop, as otherwise we would have spent another iteration in the second while loop.
	Denote by $B'_i$ and $B'_j$ the shares of voter $i$'s and $j$'s individual budgets that are used to buy candidates of type $\{i,\ell\}$ and $\{j,\ell\}$ during the execution of that loop. In their deviation $y$, they can afford to buy $\lfloor B'_i+B'_j+\tilde{B}_i+\tilde{B}_j\rfloor$ candidates that they both approve instead. For both voters, that number has to exceed the utility received from candidates purchased in the third while loop that only one of them approves; denote their numbers by $x'_{i\ell}$ and $x'_{j\ell}$, respectively. Otherwise, $y$ would not be a beneficial deviation.
	We claim that $B'_i,B'_j>0$ has to hold.
	Otherwise, assume w.l.o.g that $B'_j=0$. Then,
	\[
	x'_{i\ell}\ge \lceil B'_i \rceil \ge \lfloor B'_i+B'_j+\tilde{B}_i+\tilde{B}_j\rfloor
	\]
	as $B'_j=0$ and $\tilde{B}_i+\tilde{B}_j<1$, contradicting that $y$ is a beneficial deviation for voter $i$.
	
	Therefore, w.l.o.g. $B'_i\ge B'_j>0$ and one of the voters needs to have budget less than $1$ before entering the third while loop. 
	
	Next, we claim that voter $i$ has to pay more than $\frac{1}{2}$ for at least one candidate not approved by voter $j$, as otherwise
	\[
	\lfloor B'_i+B'_j+\tilde{B}_i+\tilde{B}_j\rfloor 
	\le \lfloor 2B'_i+\tilde{B}_i+\tilde{B}_j\rfloor
	\le \lfloor x'_{i\ell}+\tilde{B}_i+\tilde{B}_j\rfloor
	=x'_{i\ell}
	\]
	where the last inequality follows from $x'_{i\ell}$ being an integer and $\tilde{B}_i+\tilde{B}_j<1$.
	
	This implies $\tilde{B}_{\ell}=0$ and also $\tilde{B}_i=\tilde{B}_j=0$ as $\tilde{B}_i+\tilde{B}_j+\tilde{B}_\ell$ is an integer.
	 
	For voter $i$, this means that 
	\[
	\lfloor B'_i+B'_j \rfloor >x'_{i\ell} \ge \lceil B'_i \rceil
	\]
	has to hold which implies $B'_j \ge 1$ and also $B'_i \ge 1$ by assumption.
	
	Hence, $\min\{C_{i\ell}-x_{i\ell},C_{j\ell}-x_{j\ell}\}=0$ before the third while loop cannot hold and one of the voters needs to have budget less than $1$ at that time. As $B'_i,B'_j\ge 1$, this must be voter $\ell$. However, at the point when the first candidate not approved by both $i$ and $j$ is purchased, for example one of type $\{i,\ell\}$, either voter $i$ has budget less than $1$ left, or voter $\ell$ has at least as much budget left as voter $j$ by construction of the algorithm since more candidates of type $\{i,j\}$ are always available. In the first case, $B'_i \ge 1$ cannot hold and in the second case, $B'_j \ge 1$ cannot hold, resulting in a contradiction. 
	
	This concludes the proof that no pair of voters has a beneficial deviation from $x$ and thus, $x$ is in the core.
	
	It is straightforward to see that \Cref{alg:coren=3} runs in time polynomial in $m$ and $k$.
\end{proof}

Unfortunately, it is impossible to define a tie-breaking method for MES that works for arbitrary numbers of voters, as shown by the following example, which is based on the counterexample by \citet[Proposition 4]{PeSk20a} but uses substantially fewer voters. Note that MES exhausts the entire budget on that instance, showing that the counterexample works independently of the chosen completion rule that is sometimes required to increase a committee returned by MES to size $k$.\\

\begin{figure}[!ht]
	\centering
	\begin{tikzpicture}[yscale=0.6,xscale=0.9,voter/.style={anchor=south}]	

	\foreach \i in {1,...,9}
		\node[voter] at (\i-0.5, -1) {$\i$};

    \draw[fill=orange!40] (0,0) rectangle (6,1);
    \node at (3,0.5) {$18 \times a$};

    \draw[fill=violet!40] (6,0) rectangle (8,1);
    \node at (10.1,0.4) {$3 \times b$};

    \draw[fill=violet!40] (6,1) rectangle (7,2);
    \draw[fill=violet!40] (8,1) rectangle (9,2);
    \node at (10.1,1.4) {$3 \times c$};

    \draw[fill=violet!40] (7,2) rectangle (9,3);
    \node at (10.1,2.4) {$3 \times d$};

    \draw[fill=teal!40] (0,3) rectangle (4,4);
    \draw[fill=teal!40] (6,3) rectangle (7,4);
    \node at (3,3.5) {$7 \times e$};

    \draw[fill=teal!40] (0,4) rectangle (4,5);
    \draw[fill=teal!40] (7,4) rectangle (8,5);
    \node at (3,4.5) {$7 \times f$};

    \draw[fill=teal!40] (0,5) rectangle (4,6);
    \draw[fill=teal!40] (8,5) rectangle (9,6);
    \node at (3,5.5) {$7 \times g$};

\end{tikzpicture}
\caption{Example for MES violating the core independently of tie-breaking.}
\label{fig:mes_violates_core_unique}
\end{figure}

\begin{example}
	\label{ex:mes_violates_core_unique}
	Let $n=9$ and $k=27$, where each voter has budget $b_i=3$. 
	The approval profile is shown in \Cref{fig:mes_violates_core_unique}. In words, $\mathcal{R}_c=\{123456,78,79,89,12347,12348,12349\}$ and supplies are $c_a=18$, $c_b=c_c=c_d=3$, and $c_e=c_f=c_g=7$.
	MES first selects the $18$ candidates of type $a$, which exhausts the budgets of these voters. 
	The voter set $\{7,8,9\}$ is then left without any approved candidate. 
	For these voters, candidates approved by two agents are more affordable than candidates approved by only one agent, and MES therefore selects all nine candidates of types $b$, $c$, and $d$. 
	This exhausts the remaining budgets and thus, MES yields the committee
	\[
		W=18\times \{a\}\;\cup\;3\times\{b,c,d\}.
	\]
	Now consider the coalition $S=N\setminus\{5,6\}$, which has size $\lvert S \rvert=7$. Its total budget is therefore $\frac{\lvert S\rvert}{n}\cdot k=21$, so $S$ can afford exactly $21$ candidates. 
	We claim that the committee $T=7\times\{e,f,g\}$ that can be afforded by $S$ forms a valid objection for $S$.
	Indeed, the utilities of Voters $1$-$4$ are increased from $18$ to $21$ and the utilities of Voters $7$-$9$ are increased from $6$ to $7$ when moving from $W$ to $T$.
	Therefore, the committee $W$ returned by MES is not in the core. \qed
\end{example}
 
\end{document}